\documentclass{article} 
\usepackage{iclr2026_conference}
\iclrfinalcopy
\makeatletter
\DeclareFontFamily{OMS}{cmtt}{}
\DeclareFontShape{OMS}{cmtt}{m}{n}{<->cmsy10}{}
\DeclareFontFamily{OMS}{pcr}{}
\DeclareFontShape{OMS}{pcr}{m}{n}{<->cmsy10}{}
\makeatother

\usepackage{multirow}

\usepackage[utf8]{inputenc}
\usepackage{graphicx} 
\usepackage{xcolor}
\usepackage{enumitem}
\usepackage{url}
\usepackage{amsmath}
\usepackage{amsthm}
\usepackage{hyperref}
\usepackage{cleveref}
\usepackage{tabulary}
\usepackage{booktabs}
\usepackage{subcaption}
\usepackage{tcolorbox}
\usepackage{fancyvrb}
\usepackage{listings}
\usepackage{microtype}

\newtheorem{theorem}{Theorem}[section]
\newtheorem{lemma}[theorem]{Lemma}

\newtheorem{observation}[theorem]{Observation}

\Crefname{lemma}{Lemma}{Lemmas}
\Crefname{claim}{Claim}{Claims}
\Crefname{proposition}{Proposition}{Propositions}

\newcommand\xyc[1]{}
\newcommand\hzh[1]{}
\newcommand\yk[1]{}

\usepackage{amsmath,amsfonts,bm}

\def\eqref#1{equation~\ref{#1}}

\def\1{\bm{1}}

\def\vv{{\bm{v}}}

\DeclareMathAlphabet{\mathsfit}{\encodingdefault}{\sfdefault}{m}{sl}
\SetMathAlphabet{\mathsfit}{bold}{\encodingdefault}{\sfdefault}{bx}{n}

\DeclareMathOperator*{\argmax}{arg\,max}
\DeclareMathOperator*{\argmin}{arg\,min}

\title{Decision Aggregation under\\ Quantal Response}

\author{Zhihuan Huang, Yichong Xia \& Yuqing Kong \\
  Center on Frontiers of Computing Studies (CFCS)\\
  School of Computer Science, Peking University\\
  \texttt{\{zhihuan.huang, xiayc, yuqing.kong\}@pku.edu.cn}
}

\begin{document}

\maketitle

\begin{abstract} The effectiveness of collective decision-making is often challenged by the bounded rationality and inherent stochasticity of individual agents. We investigate this by analyzing how to aggregate decisions from $n$ experts, each receiving a private signal about an unknown state. Assuming signals are conditionally independent and identically distributed, we depart from the fully rational paradigm and model expert behavior using quantal response—a stochastic choice model capturing bounded rationality. Within a minimax regret framework, we show that majority voting is the optimal robust aggregator when individual rationality falls below a certain threshold. Interestingly, such groups can outperform perfectly rational agents, as their decision randomness encodes weak but informative signals lost in deterministic behavior. We validate these findings using large language models (LLMs), which naturally exhibit quantal response via their temperature parameter. Aggregating moderately stochastic LLM outputs significantly improves accuracy on complex reasoning tasks, highlighting bounded rationality not as a limitation, but as a potential strength in collective intelligence. \end{abstract}

\section{Introduction}

The market is very up and down. Your finger hovers over the BUY/SELL button. Seeking guidance, you consult two experts. Mia, renowned for her perfectly rational analysis, advises: ``Sell.'' John, known for his sharp instincts but less predictable approach, urges: ``Buy it. Trust me.'' Obviously, you should follow rational Mia. But what if you could consult multiple Mias or Johns?

Let's frame this formally. Suppose we are deciding whether to BUY ($X=1$) or SELL ($X=0$) a stock, where the true state $\omega$ is either rise ($\omega=1$) or fall ($\omega=0$). The utility function is defined as \( u(X, \omega) = 1 \) if \( X = \omega \), and \( u(X, \omega) = -1 \) otherwise. We have two types of experts: Mia, who embodies perfect rationality, and John, who exhibits bounded rationality. Both receive private signals $S \in \{0, 1\}$ that provide noisy information about the future state. Their decisions depend on the expected utility of buying conditioning on private signals, denoted by $\mathbb{E}[u(1,\omega)\mid S]$, and follow a quantal response function~\citep{mckelvey1995quantal}:
\begin{align*}
\text{Buy Probability}=\Pr[\underbrace{X=1}_{\text{buy}}\mid \underbrace{\mathbb{E}[u(1,\omega)\mid S]=v}_{\text{expected utility of buying is $v$}}]=&\frac{e^{\lambda\cdot \mathbb{E}[u(1,\omega)\mid S]}}{\sum_{x\in\{0,1\}}e^{\lambda\cdot \mathbb{E}[u(x,\omega)]}}\\
= &\frac{e^{\lambda v}}{e^{-\lambda v} + e^{\lambda v}}=\frac{1}{1 + e^{-2\lambda v}}
\end{align*}
Let $\varphi_\lambda(v)=\frac{1}{1 + e^{-2\lambda v}}$. $\varphi_\lambda(v)$ is the probability of buying given the advisor's expected payoff from buying ($X=1$) is $v$, with a rationality parameter $\lambda$. 
In this scenario, ``good'' and ``bad'' signals correspond to expected payoffs of $v = -0.2$ (still bad, but less severe) and $v = -1$ (strongly avoid), respectively. Mia (Perfectly Rational, $\lambda \to \infty$) always follows strict logic—if $v < 0$, she never buys; if $v > 0$, she always buys. John's (Bounded Rational, $\lambda = 2.5$) decisions are probabilistic—even when $v$ suggests ``don’t buy,'' he might still take a chance.

\begin{figure}[ht!]
    \centering
    \begin{subfigure}[b]{\textwidth}
        \centering
        \includegraphics[width=.95\textwidth]{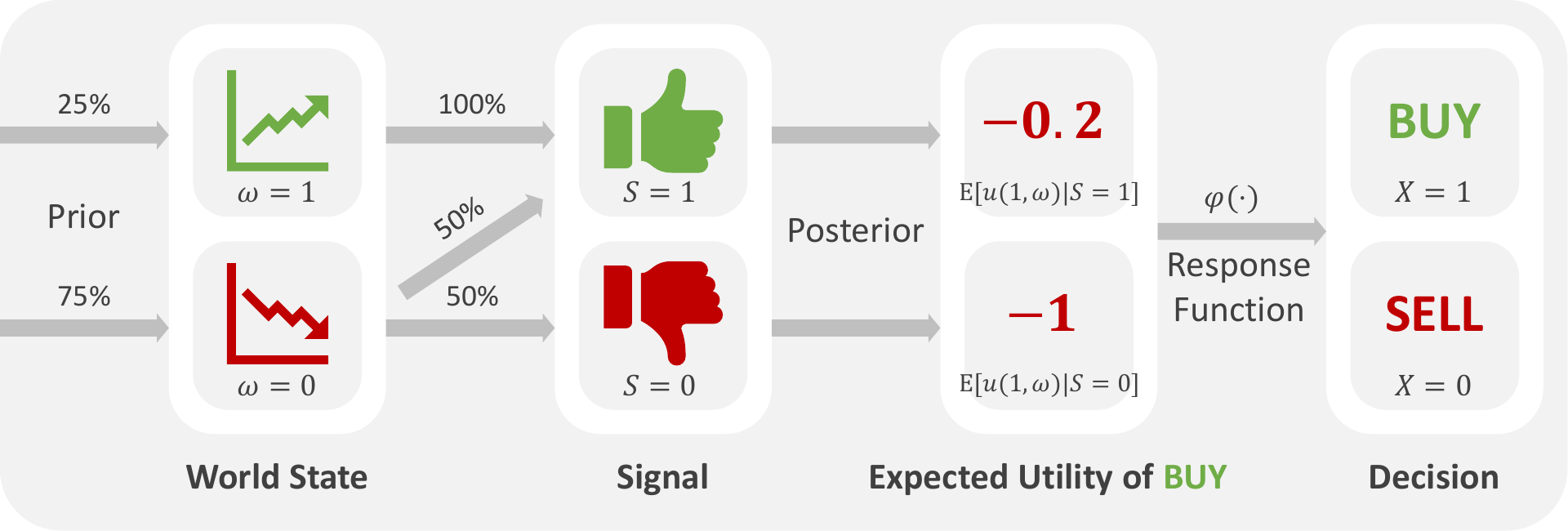}
        \caption{Decision Process of Mia and John.}
        \label{fig:example_1_1}
    \end{subfigure}
    \hfill
    \begin{subfigure}[b]{\textwidth}
        \centering
        \includegraphics[width=.95\textwidth]{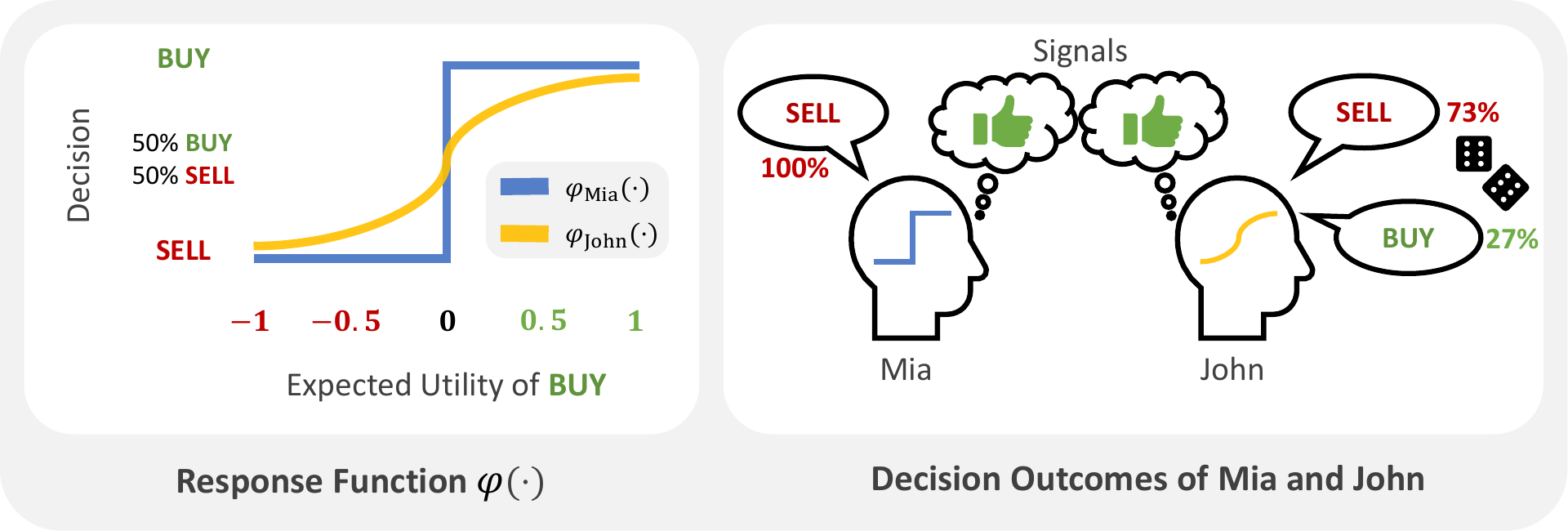}
        \caption{Response Functions $\varphi_\lambda(v)$ for different $\lambda$.}
        \label{fig:example_1_2}
    \end{subfigure}
    \caption{Comparison of rational (Mia) vs. bounded-rational (John) decision-making through unified quantal response functions.}
    \label{fig:example}
\end{figure}

Mia, the perfectly rational expert, makes the ``correct'' choice every time—and earns an average utility of 0.5. John, who sometimes gambles against the odds, earns lower at 0.43.

Now, what if you ask a bunch of Mias and a bunch of Johns? Let's say each expert gets a private signal conditionally independent on the underlying state. Interestingly, after perfectly combining a team's judgments, a team of Mias stays at 0.5 (always right, but never better), while a team of Johns actually improves to 0.51—outperforming the perfect experts. Why? Because John’s occasional ``mistakes'' add information for the group’s collective wisdom to uncover deeper truths.

Our example shows how imperfect experts can outperform perfect ones - but there's a catch. To perfectly combine a team's judgments, you'd need to know their signal structures-joint distribution over signals and state. In reality, we almost never have this complete picture. This leads to our core question: How can we combine expert decisions effectively without knowledge of signal structures? Can imperfect experts still outperform perfect ones in this scenario?

This drives our research into robust aggregation methods that work without perfect information. We use a ``worst-case scenario'' framework. We assess different methods for combining opinions by evaluating their maximum ``regret''—how much worse they could be compared to the best possible method if we did know everything \citep{doi:10.1073/pnas.1813934115, arieli2023universally, LEVY2022105075, de2021robust}. Surprisingly, within this framework, we find that simply using majority vote works best when experts aren't perfectly rational and exhibit sufficiently bounded rationality. Interestingly, majority vote isn't the best when all experts are perfectly rational. Furthermore, imperfect experts, when their opinions are simply tallied by majority vote, can actually do better than perfect experts whose opinions are combined using the most sophisticated techniques.

This advantage of bounded rationality aligns with observations in artificial intelligence: large language models (LLMs) exhibit similar quantal response behavior through their `temperature' parameter. Lower temperatures produce deterministic, Mia-like outputs, while higher temperatures induce John-like stochasticity. Just as in human groups, lower-temperature outputs may yield individually accurate responses, but higher-temperature outputs leverage informational diversity to achieve superior collective decisions.


\subsection{Summary of Results}\label{subsec:intro_summary_of_results}
We study the aggregation of decisions from groups of boundedly rational experts modeled through quantal response functions. We consider experts receiving private signals that are conditionally independent and identically distributed (i.i.d.) given the true state. Our main contributions in \Cref{sec:the} are as follows: 

\begin{itemize}
    \item \textbf{Optimal Robust Aggregator:} We prove majority voting is optimally robust for groups whose rationality falls below a threshold dependent on group size (\Cref{sec:robust}).
    
    \item \textbf{Bounded Rationality Advantage:} For a single expert ($n=1$), rationality maximizes aggregation outcomes. However, with multiple experts ($n \geq 2$), bounded rationality can outperform perfect rationality (\Cref{sec:advantage}). Generalizing beyond specific scenario, numerical experiments (\Cref{fig:robust_numerical}) show that bounded rational experts consistently achieve lower worst-case regret than fully rational experts.
\end{itemize}

These theoretical insights rely on a dimension reduction result: any quantal response report structure under conditional independence can equivalently be represented by a simplified three-signal structure, significantly reducing analytical complexity.

In \Cref{sec:emp}, we empirically validate our theoretical predictions using large language models (LLMs) as experts. The `temperature' parameter, which adjusts randomness, naturally corresponds to quantal response behavior.

\begin{itemize}
    \item \textbf{Quantal Response in LLMs:} Empirical analysis confirms LLMs exhibit quantal response behavior, with logistic regression aligning closely with theoretical models (\Cref{fig:estimate_beta}).

    \item \textbf{Empirical Aggregation Advantage:} Using LLMs with varied temperatures ($t \in \{0, 0.5, 1\}$) and aggregating decisions via majority voting with groups of size $n = \{1,3,5\}$, we conduct two experiments: traditional box-ball tasks and multi-choice mathematical questions. Results confirm that bounded-rational experts with moderate temperature (e.g., $t=0.5$) collectively outperform purely rational experts ($t=0$) for $n \geq 3$, even in non-binary settings (\Cref{fig:aggregator_combined}).
\end{itemize}

Overall, our findings reveal that following the majority decisions of moderately irrational experts is effective, highlighting benefits of bounded rationality. This insight impacts fields such as crowdsourcing, AI ensembles, economic decision-making, and product design, where incorporating diverse and less rational inputs can yield superior outcomes.

\subsection{Related Work}

\paragraph{Information Aggregation and Robustness.}
Information aggregation—the process of combining multiple information sources into a collective decision—is fundamental to collective intelligence \citep{forsythe1990information, breiman1996bagging, dietterich2000ensemble}. While traditional models often assume perfect rationality and complete information \citep{winkler1981combining, doi:10.1073/pnas.1813934115, arieli2023universally}, real-world experts frequently deviate due to cognitive limitations and inherent stochasticity \citep{simon1990bounded, tversky1982judgment}. This reality necessitates robust aggregation methods capable of handling uncertain information structures. Recent literature has significantly expanded the frontiers of robust aggregation to tackle unknown correlations \citep{LEVY2022105075, de2021robust}, algorithmic feasibility \citep{guo2025algorithmic}, second-order information in two-expert scenarios \citep{pan2024robust}, and adversarial expert inputs \citep{guo2025robust}. Unlike prior studies that primarily focus on perfectly rational or adversarial agents, we incorporate Quantal Response (QR) models \citep{mckelvey1995quantal, mckelvey1998quantal} to capture bounded rationality through stochastic choice rules. By leveraging geometric dimension reduction, we prove that three signals suffice for optimal robust aggregation, explicitly demonstrating how bounded rationality can paradoxically benefit collective outcomes.

\paragraph{Bounded Rationality and Strategic Voting.}
Bounded rationality models have found extensive applications across mechanism design \citep{braverman2018selling}, information design \citep{yu2024encoding}, and robust decision-making—where cognitive biases, such as base rate neglect, have been shown to yield surprising aggregation benefits \citep{kong2024surprising, gan2023robust, feng2024rationality}. Specifically, QR models are widely used to analyze strategic behavior in probabilistic voting and collective choice \citep{goeree2002quantal, 10.23943/princeton/9780691124230.003.0008, hoppe2013contracting}. For instance, Quantal Response Equilibrium (QRE) has been applied to examine voter strategies, laboratory voting games, and the ``wisdom of strategic voting'' \citep{mckelvey2006theory, casella2006experimental, goeree2005regular, han2023wisdom}. However, this rich literature typically centers on utility-based strategic manipulation where agents possess heterogeneous, outcome-dependent payoffs. In contrast, our research operates within a shared-objective framework where experts lack strategic incentives to misreport. We focus entirely on the mechanism of collective intelligence, aggregating sincere-but-bounded QR-experts rather than navigating strategic utility landscapes.

\paragraph{Quantal Responses and Large Language Models.}
The emergence of Large Language Models (LLMs) has spurred the application of behavioral economic models to AI agents \citep{jia2025llm, kirshner2025prosocial}. Concurrently, research on LLM temperature—which inherently controls output randomness—has investigated its impact on precision tasks \citep{renze2024effect}, code generation \citep{zhu2024hot}, and creativity \citep{peeperkorn2024temperature}. Existing works mostly study how a single LLM behaves or how LLMs interact competitively. Differently, we formally link the temperature parameter to QR theory within an aggregation context. We demonstrate that, much like human bounded rationality, optimally tuned LLM stochasticity can significantly improve the accuracy of aggregated group decisions.

\section{Model Setup}

We consider a binary state \( \Omega = \{0, 1\} \) with a known prior probability \( \mu = \Pr[\omega = 1] \in [0,1] \).

There is one decision maker (DM) and \( n \) experts indexed by \( i = 1, \dots, n \). Each expert \( i \) observes a private signal \( S_i \in \mathcal{S} \), drawn from a set of possible signals \( \mathcal{S} \), about the true state \( \omega \). The joint distribution of the state and signals is called the \emph{signal structure} \( \theta \in \Delta(\Omega \times \mathcal{S}^n) \). The set of all possible signal structures is \( \Theta \).

To capture scenarios such as aggregating multiple responses from a large language model—where each response is independently and identically generated from the same input—we focus on signal structures where signals are \emph{conditionally independent and identically distributed (c.i.i.d.)} given the state \( \omega \). Formally, this means:
\[
\Pr[S_1 = s_1, \dots, S_n = s_n \mid \omega = w] = \prod_{i=1}^{n} \Pr[S_i = s_i \mid \omega = w],
\]
with the conditional distribution \( \Pr[S_i = s \mid \omega = w] \) identical across all experts. The set of all such c.i.i.d. signal structures is denoted by \( \Theta^{ciid} \). This setting is standard and describes situations where each expert independently observes information about the state. Additionally, since experts are anonymous from the DM’s perspective, the DM treats each expert's signal distribution as identical.

Each expert \( i \), after observing the private signal \( S_i \), reports a binary decision \( X_i \in \mathcal{X}=\{0,1\} \) to the DM. The utility for both DM and experts depends on correctly identifying the state:
\[u(X, \omega)=\begin{cases}
1 & X=\omega \\
-1 & X\neq\omega
\end{cases}.\]
This common payoff aligns their incentives towards correct decisions. We acknowledge that in strategic environments with heterogeneous utilities or pivotal incentives, sincere voting is not necessarily a Nash equilibrium~\citep{austen1996information, han2023wisdom}. However, our information-aggregation setting features a shared-objective structure where experts receive no outcome-dependent payoffs. Under this setting, experts have no strategic incentive to misreport, making sincere reporting a natural and widely used modeling assumption~\citep{doi:10.1073/pnas.1813934115, palley2023boosting}. This assumption is particularly appropriate when the experts are Large Language Models (LLMs), which do not possess stable, agent-specific utilities and do not benefit strategically from manipulating outcomes.

Experts choose their decisions according to a quantal response function, reflecting bounded rationality. We treat this as an ``as-if'' model rather than assuming experts perform precise Bayesian calculations followed by utility-maximization with noise. The randomness may arise from cognitive noise, imperfect internal signal processing, or intrinsic neural stochasticity~\citep{knill2004bayesian}. Let \( p = \Pr[\omega=1 \mid S_i=s_i] \) be an expert's posterior belief after observing signal \( s_i \). The expected utility of choosing \( X_i \in \{0,1\} \) is:
\[ \mathbb{E}[u(X_i, \omega) \mid S_i] = (2X_i-1)(2p-1) \in [-1,1]. \]

Recall the quantal response function \( \varphi_\lambda: [-1,1]\to[0,1] \), with rationality level \( \lambda \geq 0 \), is given by the logistic (softmax) form:
\[ \varphi_{\lambda}(v) = \frac{1}{1+e^{-2\lambda v}}, \quad v \in [-1,1]. \]

Note that \( \varphi_\lambda \) is equivalent to the softmax layer in large language models, showing the similarity between quantal response in decision theory and softmax outputs in machine learning. Replacing \( v =\mathbb{E}[u(X_i=1, \omega) \mid S_i]= 2p-1 \), the expert's decision probability becomes:
\[\psi_{\lambda}(p) = \varphi_{\lambda}(2p-1) = \frac{1}{1+e^{2\lambda(1-2p)}}, \quad p \in [0,1].\] Note that \( \varphi_\lambda \) and \( \psi_\lambda \) are equivalent representations; for convenience, we use \( \psi_\lambda \) throughout the remainder of this paper.

The function \( \psi_{\lambda} \) increases with \( p \). When \( \lambda = 0 \), decisions are random; as \( \lambda \to \infty \), decisions become deterministic:
\[ \psi_{\lambda\to\infty}(p)=\begin{cases}
1 & p>0.5\\
0.5 & p=0.5\\
0 & p<0.5
\end{cases}. \]

Given a signal structure \( \theta \) and response function \( \psi_{\lambda} \), the \emph{report structure} \( \hat{\theta} \in \Delta(\Omega\times\mathcal{X}^n) \) describes the joint distribution of the state and experts' reports. The mapping \( \text{rep}(\theta,\psi_{\lambda})=\hat{\theta} \) defines the set of c.i.i.d. report structures:
\[\hat{\Theta}^{ciid}_{\lambda}=\{\hat{\theta}:\exists\theta\in\Theta^{ciid},\text{ rep}(\theta,\psi_{\lambda})=\hat{\theta}\}.\]

\subsection{Decision Aggregation}
The DM aggregates experts' decisions anonymously based only on the number of experts \( X=\sum_{i=1}^n X_i \) reporting 1. An \textit{aggregator function} \( f:\{0,\dots,n\}\to[0,1] \) represents the DM's mixed strategy, where \( f(x) \) is the probability the DM guesses state \( \omega=1 \) given \( x \) reports of 1.

The utility for aggregator \( f \) under report structure \( \hat{\theta} \) is:
\[ U(f,\hat{\theta})=\sum_{x=0}^n\Pr_{\hat{\theta}}[X=x](2\Pr_{\hat{\theta}}[\omega=1\mid X=x]-1)(2f(x)-1). \]

\paragraph{Omniscient Aggregator.}
An ideal benchmark, the omniscient aggregator maximizes expected utility knowing the true report structure \( \hat{\theta} \):
\[ \text{opt}_{\hat{\theta}}(x)=\begin{cases}1,&\Pr_{\hat{\theta}}[\omega=1\mid X=x]>0.5\\0.5,&\Pr_{\hat{\theta}}[\omega=1\mid X=x]=0.5\\0,&\Pr_{\hat{\theta}}[\omega=1\mid X=x]<0.5\end{cases}. \]

\paragraph{Optimal Robust Aggregator.}
When facing uncertainty about \( \hat{\theta} \), the DM aims to minimize worst-case regret, defined as the difference in utility between an aggregator \( f \) and the omniscient aggregator:
\[R(f,\hat{\theta})=U(\text{opt}_{\hat{\theta}},\hat{\theta})-U(f,\hat{\theta}).\]
The optimal robust aggregator solves the min-max problem:
\[\text{opt}_{\hat{\Theta}}\in\arg\min_f\max_{\hat{\theta}\in\hat{\Theta}}R(f,\hat{\theta}),\]
thus ensuring robustness under uncertainty.

\section{Theoretical Results}\label{sec:the}

This section introduces theoretical results for decision aggregation under bounded rationality. We focus on a setting where experts receive signals that are independent and identically distributed conditioning on the world state (c.i.i.d.). 

\begin{theorem}[Main Theorem]\label{the:main}
Consider a group of experts with c.i.i.d. signal structures \(\ \Theta^{ciid} \) and quantal responses with rationality parameter \(\lambda\).

\begin{enumerate}
    \item \textbf{Optimal Robust Aggregator.}
    When \(\lambda \leq g(n)\), majority voting \(f^{maj}\) is the optimal robust aggregator:
    \[
    \max_{\hat{\theta} \in \hat{\Theta}^{ciid}_\lambda} R(f^{maj}, \hat{\theta}) = \min_{f} \max_{\hat{\theta} \in \hat{\Theta}^{ciid}_\lambda} R(f, \hat{\theta})
    \]
    using the decision rule:
    \[
    f^{maj}(x) = \begin{cases}
    0, & x < n/2 \\
    1/2, & x = n/2 \\
    1, & x > n/2
    \end{cases}
    \]
    The threshold function \(g(n)\) is defined as the supremum of the set of \(\lambda\) values that satisfy a specific inequality for all \(q_0\) and \(q_1\) within a given range.  Formally,
    \begin{align*}
    &g(n) = \sup \bigg\{ \lambda \;\bigg|\;
    \forall q_0,q_1 \text{ such that } \psi_{\lambda}(0)\leq q_0 \leq q_1 \leq 0.5, \text{ the following inequality holds:}\\
    &\frac{(q_1(1-q_1))^{\left\lfloor \frac{n-1}{2} \right\rfloor}(1-2q_1)}
    {\psi_\lambda(1) - q_1} 
    \leq 
    \frac{(q_0(1-q_0))^{\left\lfloor \frac{n-1}{2} \right\rfloor}(1-2q_0)}
    {\psi_\lambda(1) - q_0} 
    \times 
    \frac{2\lambda + \ln(1-q_0) - \ln(q_0)}
    {2\lambda - \ln(1-q_0) + \ln(q_0)}
    \bigg\}
    \end{align*}
    
    For \(n \leq 2\), majority voting remains optimal for any finite \(\lambda\), that is, the threshold is infinity \(g(n) = \infty\).  
    
    \item \textbf{Bounded Rationality Advantage.}  
    While perfect rationality (\(\lambda \to \infty\)) maximizes single-expert performance (\(n=1\)), the group exhibits a different behavior:
    \begin{itemize}
        \item For all $n \geq 2$, there exist $\theta^* \in \Theta^{ciid}$ and finite $\lambda^*$ such that under $\theta^*$, the optimal utility with bounded rationality level $\lambda^*$ strictly exceeds that with perfect rationality:
        \[
        \max_{f} U\left(f, \text{rep}\left(\theta^*, \psi_{\lambda^*}\right)\right) > \max_{f} U\left(f, \text{rep}\left(\theta^*, \psi_{\infty}\right)\right)
        \]
        
        \item For all \(n > 2\), this advantage of bounded rationality holds for majority voting:
        \[
        U\left(f^{maj}, \text{rep}\left(\theta^*, \psi_{\lambda^*}\right)\right) > \max_{f} U\left(f, \text{rep}\left(\theta^*, \psi_{\infty}\right)\right)
        \]
    \end{itemize}
    
\end{enumerate}
\end{theorem}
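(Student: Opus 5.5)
We split the argument into the two parts of \Cref{the:main}. Both rest on a reduction of the c.i.i.d. report structures. Fix $\lambda$. Since experts are c.i.i.d. and only $X=\sum_i X_i$ matters, a report structure $\hat\theta\in\hat\Theta^{ciid}_\lambda$ is determined by the prior $\mu$ and two numbers $a_w=\Pr[X_i=1\mid\omega=w]$, $w\in\{0,1\}$. Then $X\mid\omega=w\sim\mathrm{Bin}(n,a_w)$.

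\textbf{Reduction to three signals.} Write $a_1=\mathbb{E}[\psi_\lambda(P)\mid\omega=1]$ and $a_0=\mathbb{E}[\psi_\lambda(P)\mid\omega=0]$, where $P$ is the posterior. By Bayes plausibility, the distribution of $P$ is an arbitrary mean-$\mu$ distribution $\tau$ on $[0,1]$, and $a_1=\mathbb{E}_\tau[P\psi_\lambda(P)]/\mu$, $a_0=\mathbb{E}_\tau[(1-P)\psi_\lambda(P)]/(1-\mu)$. The feasible set of $(\mu a_1,(1-\mu)a_0,\mu)$ is therefore the convex hull of the curve $p\mapsto(p\psi_\lambda(p),(1-p)\psi_\lambda(p),p)$. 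By Carathéodory in the two free linear constraints plus the mean constraint, three posteriors suffice. This gives the dimension reduction announced in \Cref{subsec:intro_summary_of_results}. Moreover, the extreme relevant points are $p\in\{0,1\}$ together with one interior posterior; this is where the range $\psi_\lambda(0)\le q\le 0.5$ and the value $\psi_\lambda(1)$ in the definition of $g(n)$ come from.

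\textbf{Part 1.} We would use a saddle-point argument. The regret is linear in $f$, so $\min_f\max_{\hat\theta}$ equals $\max$ over mixtures of $\hat\theta$ of $\min_f$. It therefore suffices to exhibit a worst-case distribution over report structures for which $f^{maj}$ is a best response, and to check that $f^{maj}$ attains its maximal regret on that support. By the symmetry $\omega\leftrightarrow 1-\omega$, $x\leftrightarrow n-x$, the candidate worst case is a symmetric pair of structures. For each of these the omniscient aggregator differs from majority only at the threshold $x\approx n/2$; the regret there is proportional to $\binom{n}{k}(q(1-q))^{k}(1-2q)$-type terms with $k=\lfloor (n-1)/2\rfloor$.

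We would parametrize the worst-case structure by the reporting probability $q$ of the interior signal, using the three-signal form. Then we would maximize the regret of $f^{maj}$ over $q$ and over the weights. The first-order condition compares the marginal regret gain against the cost of moving probability mass. The factor $(2\lambda+\ln\frac{1-q_0}{q_0})/(2\lambda-\ln\frac{1-q_0}{q_0})$ arises from $\psi_\lambda^{-1}(q)$, i.e.\ the posterior that generates report probability $q$. The condition defining $g(n)$ says exactly that the regret is maximized at the boundary $q_0$ rather than at an interior $q_1$. Under this condition the worst case concentrates so that any deviation of $f$ from majority at some $x$ increases regret on the mirrored structure, which proves optimality.

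For $n\le2$ we would check directly. For $n=1$, $f^{maj}$ is following the expert. For $n=2$, $k=0$ and the inequality reduces to a monotonicity statement, $(1-2q)/(\psi_\lambda(1)-q)$ versus a factor $\ge1$, which holds for all finite $\lambda$. We expect the main obstacle to be the exact identification of the worst case and the verification that the saddle point holds, i.e.\ that no non-symmetric three-signal structure beats the constructed one when $\lambda\le g(n)$.

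\textbf{Part 2.} For $n=1$ we argue that thresholding the posterior at $1/2$ is Bayes-optimal for the single decision. For $n\ge2$ we construct $\theta^*$ as in the introduction: a binary signal with both posteriors below $1/2$, e.g.\ $v=-0.2$ and $v=-1$. Under $\psi_\infty$ every expert reports $0$, so no aggregator beats always guessing $0$. Under finite $\lambda$ we have $a_1\neq a_0$, so reports are informative, and for a suitable threshold the omniscient aggregator strictly beats constant $0$. Choosing parameters where the posterior of state $1$ given all-ones exceeds $1/2$ gives strict improvement.

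For $n>2$ and majority voting, we instead choose $\theta^*$ with posteriors symmetric around $1/2$ but giving the rational expert only a weak signal, while $\lambda^*$ is moderate. We then compute $U(f^{maj})$ via binomial tails and compare it against the rational benchmark, which equals the single-signal utility since all rational reports coincide whenever the signal is identical. For this we pick a structure where the rational benchmark is bounded by the best constant guess, and use that majority of $n\ge3$ independent informative votes amplifies accuracy beyond that guess for suitable $\lambda^*$.
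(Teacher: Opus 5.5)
\textbf{Part 1 and the reduction.} Your saddle-point framing is sound, but the step that ties the $g(n)$ inequality to optimality is missing. You leave the worst case unidentified (your ``main obstacle''). Your reading of $g(n)$ as ``regret maximized at the boundary $q_0$ rather than an interior $q_1$'' is also not what it encodes: there $q_0=\Pr[X_i=1\mid\omega=0]$ and $q_1=\Pr[X_i=1\mid\omega=1]$ for one three-signal structure. The claim that the omniscient aggregator differs from majority only near $n/2$ is unjustified too; for skewed $\mu$ its cutoff can be far away.

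The missing idea is that no worst case is needed. Take any $\theta\in\Theta^{3ciid}$ and its mirror $\theta'$ (swap $\omega\leftrightarrow 1-\omega$). Symmetry gives $R(f^{maj},\hat\theta)=R(f^{maj},\hat\theta')$. So if $f^{maj}$ maximizes $U(f,\hat\theta)+U(f,\hat\theta')$, then $\min_f\max_{\hat\eta\in\hat\Theta^{ciid}_\lambda}R(f,\hat\eta)\ge\min_f\tfrac{1}{2}[R(f,\hat\theta)+R(f,\hat\theta')]=R(f^{maj},\hat\theta)$. Taking the supremum over $\theta$ gives minimax optimality without locating any maximizer.

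Pairwise best response means the coefficient of $f(x)$ in that sum is $\le 0$ for all $x<n/2$. For $q_0\le q_1\le\tfrac{1}{2}$ (the other cases are trivial or symmetric), the binomial form reduces this to the single point $x=k:=\lfloor\frac{n-1}{2}\rfloor$, i.e.\ $\mu(q_1(1-q_1))^k(1-2q_1)\le(1-\mu)(q_0(1-q_0))^k(1-2q_0)$. In the $\{0,p,1\}$ parametrization, $\frac{\mu}{1-\mu}=\frac{q_0-\psi_\lambda(0)}{\psi_\lambda(1)-q_1}\cdot\frac{h(1-p)}{h(p)}$, where $h(x)=(e^{2\lambda x}-e^{-2\lambda x})/x$ is increasing. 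So the worst interior posterior is $p=\psi_\lambda^{-1}(q_0)$, with all $\omega=0$ mass on the interior signal. Substituting yields exactly the $g(n)$ inequality, the logarithmic factor being $(1-p)/p$.

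All of this rests on every report structure arising from posteriors $\{0,p,1\}$, which you only assert. Carathéodory in $\mathbb{R}^3$ gives four arbitrary posteriors (three by Fenchel--Bunt, since the curve is connected), not two pinned at $0$ and $1$. The paper proves the pinning from the fact that no four points of $(s,\psi_\lambda(s),s\psi_\lambda(s))$ are coplanar (a determinant-sign computation), plus an intermediate-value replacement of pairs of interior posteriors. Your $n\le2$ verification of the inequality itself is correct.

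\textbf{Part 2.} Your $n=1$ argument and your criterion for the first bullet are right, but you should exhibit parameters. With the introduction's $\theta^*$ (posteriors $0$ and $2/5$, $\mu=1/4$) and $\lambda^*=2.5$, one gets $q_1/q_0\approx1.95$. The all-ones posterior odds $\tfrac{1}{3}(q_1/q_0)^n$ then exceed $1$ for every $n\ge2$; for $n=2$ this gives utility $\approx0.5077>0.5$.

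For the majority-voting bullet, however, your construction cannot work. If the posteriors of $\theta^*$ lie on opposite sides of $1/2$, as for a binary signal symmetric around $1/2$, rational reports are a bijective function of the signals. Then $X$ is sufficient for all $n$ signals, and $\max_fU(f,\mathrm{rep}(\theta^*,\psi_\infty))$ is the full-information optimum. Quantal reports are a garbling of the same signals, so by Blackwell no finite $\lambda$ beats it, even with the omniscient aggregator. Your claim that the rational benchmark then equals the single-signal utility is false; that holds only when all posteriors lie on one side of $1/2$. In that case Condorcet amplification is not the mechanism either: every $\psi_\lambda(P)<\tfrac{1}{2}$, so $q_1<\tfrac{1}{2}$, and in state $1$ each vote is right with probability below $1/2$.

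What is needed is to keep the introduction's $\theta^*$, where rational experts unanimously report $0$ and the benchmark is $1-2\mu=0.5$. Then show, for odd $n\ge3$,
\[
U\bigl(f^{maj},\mathrm{rep}(\theta^*,\psi_{\lambda^*})\bigr)-0.5=2\bigl(\mu\Pr[\mathrm{Bin}(n,q_1)>\tfrac{n}{2}]-(1-\mu)\Pr[\mathrm{Bin}(n,q_0)>\tfrac{n}{2}]\bigr)>0.
\]
This is a tail-ratio comparison; at $\lambda^*=2.5$ the ratio is already about $3.4>(1-\mu)/\mu=3$ for $n=3$. Even $n$ then follows because majority voting over $2k$ experts has the same utility as over $2k-1$. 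That identity is also why this bullet needs $n>2$: for $n=2$, $f^{maj}$ does no better than a single expert, hence at most $0.5$ here.
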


In the above theorem, \(\hat{\Theta}^{ciid}_\lambda\) denotes the set of joint distributions over reported decisions induced by the c.i.i.d. signal structures \(\Theta^{ciid}\) through the quantal response function \(\psi_\lambda\), i.e., \(\hat{\Theta}^{ciid}_\lambda = \text{rep}(\Theta^{ciid}, \psi_\lambda)\). The terms \(\psi_\lambda(0)\) and \(\psi_\lambda(1)\) represent the probabilities of reporting \(X_i = 1\) when the posterior \(\Pr[\omega = 1 \mid S_i]\) is 0 or 1, respectively. 
The inequality that defines \(g(n)\) illustrates that, in expectation, reporting \(1\) is optimal when exactly \(\left\lfloor \frac{n-1}{2} \right\rfloor\) experts report \(0\), based on the combination of the two symmetric report structures parameterized by \(q_0\) and \(q_1\). 

\paragraph{Proof Sketch of Theorem~\ref{the:main}}
The proof is structured in three main steps, which is shown in \Cref{apd:proof}.
\begin{enumerate}
    \item \textbf{Dimension Reduction.} To solve the min-max optimization over the intractable infinite-dimensional space of signal structures, we reduce its dimensionality. Since an expert's report depends only on their posterior belief \(s \in [0,1]\), we encode each posterior as a point on a curve in \(\mathbb{R}^3\). Valid report structures form the convex hull of this curve. We prove a key geometric lemma that no four points on this curve are coplanar. Combined with Carathéodory's theorem, this shows any report structure can be represented by a signal structure with at most three posteriors \(\{0, p, 1\}\), simplifying the problem to a manageable three-parameter space.

    \item \textbf{Optimality of Majority Voting.} Bounded rationality regularizes the space of report structures, smoothing out extreme structures that favor complex, non-monotonic aggregators. We prove that when \(\lambda \leq g(n)\), simple majority voting becomes the optimal robust aggregator. To establish this, we prove \textbf{pairwise optimality}: for any symmetric pair of structures, majority voting minimizes their combined regret. We then show this implies global minimax optimality. The threshold \(g(n)\) is the largest \(\lambda\) for which pairwise optimality holds.

   \item \textbf{Bounded Rationality Advantage.} Finally, we explain the counterintuitive result that bounded rationality can outperform perfect rationality in a group setting (\(n \geq 2\)). While perfect rationality is best for a single expert, the randomness from bounded rationality introduces valuable diversity in a group. It allows weak but useful signals, which might be lost when fully rational experts all report the same decision, to influence the collective outcome. The proof is by construction. We design a signal structure \(\theta^*\) where perfectly rational experts unanimously report 0, yielding a utility of 0.5. We then show that with a finite rationality level \(\lambda^*\), their reports become stochastic. This variation, while noisy, is informative, allowing an aggregator to achieve a utility strictly greater than 0.5. For \(n>2\), we show that simple majority voting is sufficient to gain this advantage.
\end{enumerate}

\paragraph{Numerical Study.}
To complement our theoretical analysis, we conducted numerical simulations to illustrate the key results. As shown in \Cref{sec:robust} (\Cref{fig:robust_numerical}), we computed the worst-case regret for both majority voting and the optimal robust aggregator. The results confirm that for rationality levels $\lambda \leq g(n)$, the regrets are identical, verifying that majority voting ($f^{maj}$) is indeed the optimal robust aggregator. The plots also reveal that this threshold is a sufficient but not strictly necessary condition, as majority voting can remain optimal for some $\lambda > g(n)$. However, for larger groups (e.g., $n=5$) and high rationality, a more complex aggregator outperforms the majority rule. The U-shaped curve of the worst-case regret illustrates that choosing an appropriate, bounded level of rationality can be more effective than pursuing perfect rationality. This highlights that a moderate degree of ``noise'' or uncertainty can be beneficial for group decision-making.

\section{Empirical Results}\label{sec:emp}

In this section, we describe two sets of studies designed to address the following two key research questions:

\begin{enumerate}
    \item Do the responses of large language model (LLM) conform to \textit{quantal response} theory?
    \item How do the number of \textit{experts} (LLM instances) and their \textit{rationality} (temperature) affect the aggregation results, and how should these parameters be set to optimize performance?
\end{enumerate}

\paragraph{Connection between LLMs and Quantal Response.} 
Before detailing the experimental setup, we formalize the connection between the quantal response (QR) model and the core output mechanism of LLMs. The QR model is structurally identical to the LLM's softmax function with temperature. Let \( u_j = \mathbb{E}[u(X=j,\omega)|S_i] \) denote the expected utility of choosing action \( j \). The QR model dictates that the probability of choosing action \( j \) is given by:
$$P(X=j) = \frac{e^{\lambda u_j}}{\sum_k e^{\lambda u_k}}$$
where \( \lambda \) is the rationality parameter. Similarly, the probability of an LLM outputting choice \( j \) is determined by its internal logits \( z_j \) and the sampling temperature \( t \):
$$P(\text{choice } j) = \frac{e^{z_j / t}}{\sum_k e^{z_k / t}}$$
These formulations are structurally identical. Assuming the LLM's internal logits are proportional to the expected utility under a given task, the rationality parameter \( \lambda \) is mathematically analogous to the inverse temperature \( 1/t \). A lower \( \lambda \) (increased stochasticity) directly corresponds to a higher temperature \( t \) (more randomness). This theoretical link makes LLMs ideal subjects for evaluating bounded-rational agents, an approach increasingly supported by recent literature modeling LLM decision-making through behavioral game theory and QR frameworks~\citep{jia2025llm, kirshner2025prosocial}.

These questions and theoretical connections guide the study setup. We simulate individual expert decisions by querying an LLM and using the single response as the expert's choice. To simulate the conditionally independent and identically distributed (c.i.i.d.) nature of expert decisions, we generate multiple responses to the same query independently. To balance performance and cost, we selected \textit{gpt-4o-mini} as the model and used its API for querying.

\subsection{Bayesian Decision-Making Study}

The first study examines whether LLM responses follow quantal response theory in a Bayesian decision framework.

\paragraph{Study Design}

We use the standard belief-updating task~\cite{phillips1966conservatism, grether1980bayes} as the Bayesian Decision-Making Study. Specifically, large language models (LLMs) are tasked with inferring which of two boxes was selected based on the observed color of a ball drawn from the chosen box. The left box ($\omega = 1$) and the right box ($\omega = 0$) contain red and blue balls in different proportions. The prior probability of selecting the left box is $\mu = \Pr[\omega = 1]$, and the task is to infer the selected box based on the observed signal, $S_i$. The model is queried under three temperature settings ($0$, $0.5$, and $1$) to assess the impact of model rationality on decision-making. By discretization, we get 400 scenarios. Each scenario is repeated 20 times to capture stochastic variability, and the decision proportion is calculated as the empirical probability of choosing the left box across all repetitions. The details are provided in~\Cref{sec:detail}. 

\paragraph{Examining the Quantal Response}

Logistic regression analysis was conducted to test whether the decision-making behavior of the LLM adheres to the quantal response model. The dependent variable is the model’s decision proportion, while the independent variable is the posterior belief derived from the Bayesian framework. To ensure symmetry in the logistic regression, a symmetric processing approach was applied: for each observation (posterior, proportion), a corresponding symmetric observation (1-posterior, 1-proportion) was added. This transformation effectively doubled the dataset, expanding from 400 observations to 800, with each temperature condition. The regression coefficient \(\lambda\) corresponds to the rationality level in the quantal response, with higher values suggesting more deterministic responses and lower values reflecting increased randomness.

The results of the logistic regression, visualized in \Cref{fig:estimate_beta}, reveal a clear trend. As the temperature increases, the coefficient $\lambda$ decreases, signifying a transition from more deterministic to more probabilistic decision-making. Specifically, $\lambda$ ranges from a value close to infinity ($\lambda \to \infty$) at a temperature of 0.0, to 13.25 at 0.5, and further to 8.93 at 1.0. These results are statistically significant, as evidenced by z-values far exceeding the critical threshold of $\pm 1.96$ (for a 95\% confidence level) and p-values effectively equal to 0 for all settings. These findings indicate that LLMs exhibit probabilistic decision-making consistent with the quantal response theory.  A detailed summary of the regression results is provided in the appendix (see \Cref{tab:logistic_regression_results}).


\begin{figure}[ht]
    \centering
    \includegraphics[width=0.99\textwidth]{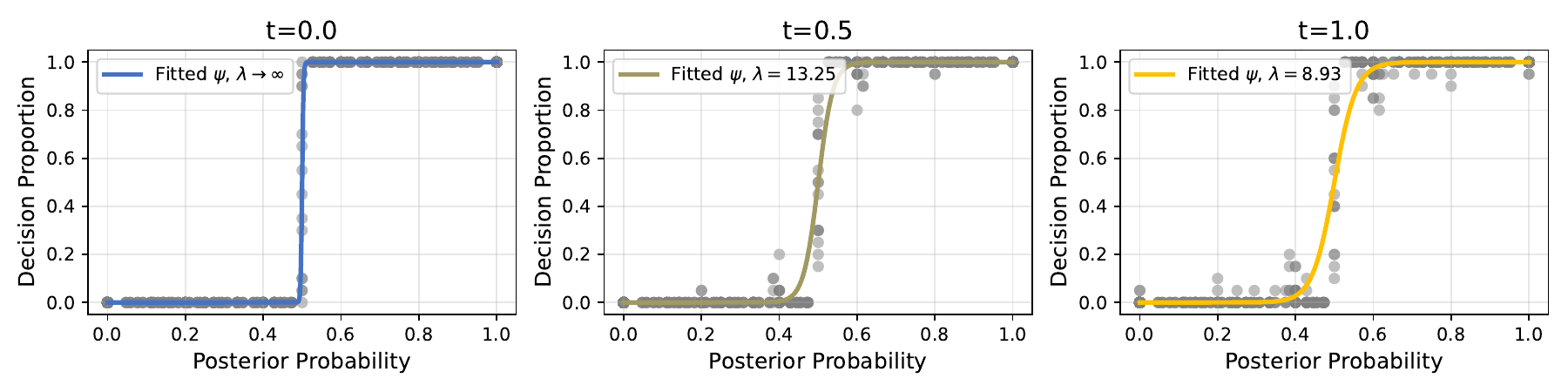}
    \caption{\textbf{Logistic Regression Results Across Temperatures.} 
    The plots illustrate the relationship between posterior probability (x-axis) and decision proportion (y-axis) at three temperature settings (\( t = 0.0, 0.5, 1.0 \)). Each gray dot represents a specific scenario, showing the decision proportion of \textit{GPT-4o-mini} for that scenario. The fitted quantal response curve (orange line) represents the model's predicted behavior, with the fitted coefficient \(\lambda\) quantifying the rationality of decision-making. Specifically, \(\lambda \to \infty\) at \(t = 0.0\), \(\lambda = 13.25\) at \(t = 0.5\), and \(\lambda = 8.93\) at \(t = 1.0\). The results demonstrate increased randomness as temperature rises, aligning with the quantal response model's predictions.}
    \label{fig:estimate_beta}
\end{figure}

\subsection{Multiple-Choice Question Answering Study}

The second study investigates the behavior of large language models (LLMs) when tasked with multiple-choice question answering, using the MathQA dataset \citep{amini2019mathqa}. This study focuses on understanding the efficacy of plurality vote aggregation for problems involving multiple options and evaluates performance under varying levels of model rationality and the number of experts.

\paragraph{Study Design}

We evaluate aggregation performance by querying the model with 500 randomly selected questions from the MathQA dataset, under three temperature settings (\(t = 0.0, 0.5, 1.0\)). For each question, we generate 20 responses. Because the questions offer more than two options, we use plurality voting (\(f^{plu}\)), a generalization of majority voting.  This involves selecting the option with the highest frequency among the sampled responses. In the event of a tie, the aggregator randomly chooses among the tied options. Bootstrapping is used to assess aggregation robustness, repeated 1000 times for each temperature and expert group size (\(n = 1, 3, 5\)). Accuracy is the proportion of correct aggregated answers. The details are provided in~\Cref{sec:detail}.

\begin{figure}[ht]
    \centering
    \begin{subfigure}{0.45\textwidth}
        \centering
        \includegraphics[width=\textwidth]{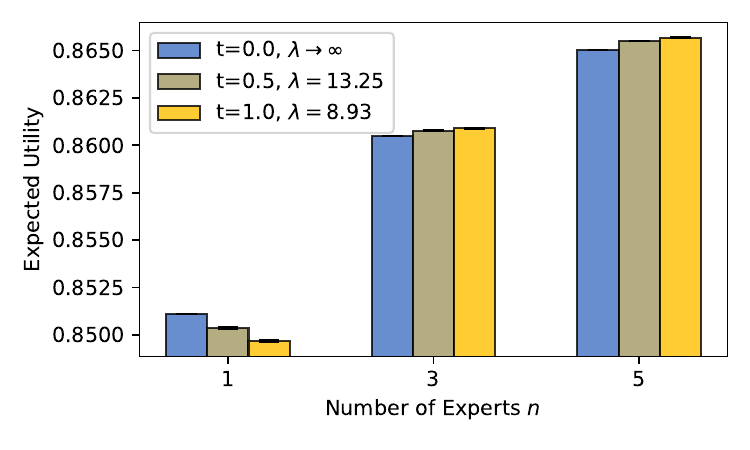}
        \caption{Bayesian Decision-Making Study}
        \label{fig:aggregator_performance}
    \end{subfigure}
    \hfill
    \begin{subfigure}{0.45\textwidth}
        \centering
        \includegraphics[width=\textwidth]{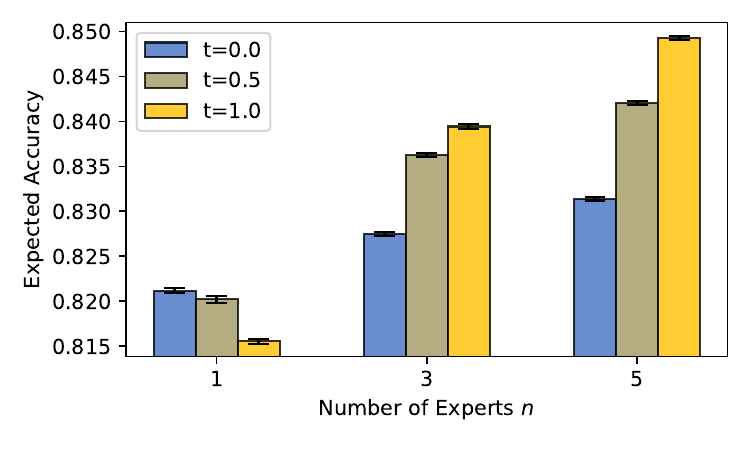}
        \caption{Multiple-Choice QA Study}
        \label{fig:mathqa_performance}
    \end{subfigure}
    \caption{\textbf{Performance of Majority/Plurality Voting Aggregation.} 
    Panel (a) presents the expected utility of the majority-vote aggregator (\(f^{maj}\)) in the Bayesian decision-making task, where different temperature settings (\(t = 0.0, 0.5, 1.0\)) and numbers of experts (\(n = 1, 3, 5\)) are evaluated. Panel (b) shows the accuracy of the plurality-vote aggregator (\(f^{plu}\)) when applied to the multiple-choice question answering task using the MathQA dataset. In both panels, error bars represent the standard error of the mean (SEM), which quantifies the uncertainty in the estimated mean.
    Both studies show the same pattern: increasing the number of experts (\(n\)) improves aggregation performance. When \(n = 1\), higher temperature (\(t\)) decreases performance due to increased randomness. When \(n \geq 3\), higher temperature improves performance as diversity enhances aggregation effectiveness. These results highlight that while randomness degrades individual decisions, it benefits collective decision-making when properly aggregated, which aligns with our theoretical findings.}
    \label{fig:aggregator_combined}
\end{figure}
\paragraph{Evaluate Performance of Aggregators}

The performance of aggregation methods is evaluated under different temperature settings ($t = 0.0, 0.5, 1.0$) and varying numbers of experts ($n = 1, 3, 5$). Specifically, we examine majority voting ($f^{maj}$) in the Bayesian Decision-Making Study and plurality voting ($f^{plu}$) in the Multiple-Choice Question Answering Study using the MathQA dataset. We use only odd values for $n$ to avoid ties.

Across both studies (\Cref{fig:aggregator_combined}), when $n=1$, the expected utility is generally lower due to lack of decision diversity. Deterministic expert behavior ($t=0.0$) yields slightly higher utility compared to stochastic settings ($t\geq 0.5$), indicating that with only one expert, randomness negatively impacts decision reliability. However, when $n=3,5$, the utility of majority (plurality) voting interestingly increases with higher temperatures. This outcome aligns with our theoretical predictions, suggesting that randomness can enhance aggregation performance by providing additional informational diversity. Additionally, we evaluate a simple random follow rule, which performs equivalently to majority voting with $n=1$, and find its performance consistently inferior or equal to majority voting with larger groups. 

These findings align closely with theoretical expectations and ensemble learning literature, emphasizing the value of aggregating multiple stochastic responses from LLMs to improve decision accuracy.



\section{Discussion and Future Work}\label{sec:discussion}

This paper shows that bounded rationality plays a crucial role in shaping decision aggregation. Majority voting reaches minimax optimality when group rationality $\lambda$ is below a critical threshold ($g(n)$). In particular, experts with moderate levels of stochastic behavior ($\lambda < \infty$) sometimes outperform perfectly rational experts. This phenomenon can be explained theoretically in a robust aggregation framework and is observed empirically in large language models (LLMs), where temperature-controlled randomness improves the accuracy of aggregated decision. These results suggest that introducing imperfections may enhance collective intelligence.

While these findings offer practical implications for domains such as investment (where ``noisy'' retail investors are common) and AI ensembles (where temperature diversity can be optimized), several assumptions require relaxing for real-world applications.  Our current framework assumes conditionally independent and identically distributed (i.i.d.) experts, neglecting potential correlations and adversarial conditions.  Furthermore, the homogeneous rationality model does not account for groups with diverse rationality.  Future research should address signal dependencies through graph-based aggregation methods and explore the impact of heterogeneous rationality on decision-making, including real-world testing in critical areas like medical diagnostics.

This work is supported by National Natural Science Foundation of China award number 62372007.

\newpage

\bibliographystyle{plainnat}
\bibliography{ref}
\clearpage
\appendix

\section{Proofs for Theoretical Results}\label{apd:proof}
\subsection{Dimension Reduction}
To prove \Cref{sec:the}, the primary challenge is to reduce the complexity of the signal space. Since, by definition of the response function, the report depends only on the posterior given \(\lambda\), we can slightly reduce complexity by grouping signals that yield identical posteriors. We use \( \mathcal{S}_\theta\) to denote the set of signals that can be received by each expert under \( \theta \). The above observation is formalized as follows:

\begin{observation}[Signal Grouping]\label{obs:merge_signal}
    Let $\mathcal{S}_\theta = \{s^{(1)}, s^{(2)}, \dots, s^{(m)}\}$ be the signal set of the signal structure $\theta \in \Theta^{ciid}$. For any two signals $s^{(j)}, s^{(k)} \in \mathcal{S}_\theta$ with identical posteriors, i.e.,
\[
\Pr[\omega = 1 \mid S_i = s^{(j)}] = \Pr[\omega = 1 \mid S_i = s^{(k)}],
\]
we define a new signal $s^{(j \vee k)}$ representing the event $S_i = s^{(j)}$ or $S_i = s^{(k)}$. That is,
\[
S_i = s^{(j \vee k)} \iff S_i = s^{(j)} \text{ or } S_i = s^{(k)}.
\]
By grouping such signals, we construct a new signal structure $\theta'$ with a smaller signal set that induces the same report structure as $\theta$:
\[
\forall \lambda > 0, \quad \text{rep}(\theta, \psi_\lambda) = \text{rep}(\theta', \psi_\lambda).
\]
\end{observation}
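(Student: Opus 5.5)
The plan is to check directly that merging two signals with the same posterior leaves the joint law of $(\omega, X_1,\dots,X_n)$ unchanged. We first verify that $\theta'$ is again a c.i.i.d. structure and that the merged signal has the common posterior. Then we show that each expert's report law, conditional on the state, is unchanged. Conditional independence then lifts this single-expert statement to the full report structure.

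\textbf{Step 1: the merged structure is c.i.i.d.} Define $\theta'$ by keeping the prior $\mu$ and setting, for each $w\in\Omega$,
\[
\Pr_{\theta'}[S_i=s^{(j\vee k)}\mid \omega=w]=\Pr_\theta[S_i=s^{(j)}\mid\omega=w]+\Pr_\theta[S_i=s^{(k)}\mid\omega=w],
\]
leaving all other signals' conditional probabilities unchanged. The joint law of signals given $\omega$ is the product of these single-expert laws, with the same law for every expert, so $\theta'\in\Theta^{ciid}$. Concretely, $\theta'$ is obtained from $\theta$ by applying the same deterministic coarsening map to each $S_i$; this preserves both conditional independence and identical distribution.

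\textbf{Step 2: the merged signal has posterior $p$.} Let $p$ be the common posterior of $s^{(j)}$ and $s^{(k)}$. Write $a=\Pr[\omega=1,S_i=s^{(j)}]$ and $b=\Pr[S_i=s^{(j)}]$, and similarly $c,d$ for $s^{(k)}$. By assumption $a=pb$ and $c=pd$. Hence
\[
\Pr[\omega=1\mid S_i=s^{(j\vee k)}]=\frac{a+c}{b+d}=p
\]
whenever $b+d>0$. If $b+d=0$, the signal has probability zero and can simply be dropped.

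\textbf{Step 3: each expert's report law given the state is unchanged.} Fix $\lambda>0$. In both $\theta$ and $\theta'$, the report $X_i$ given $S_i$ is Bernoulli with parameter $\psi_\lambda(\text{posterior of }S_i)$, drawn independently of everything else. Therefore
\[
\Pr[X_i=1\mid \omega=w]=\sum_{s}\Pr[S_i=s\mid\omega=w]\,\psi_\lambda(\Pr[\omega=1\mid S_i=s]).
\]
In this sum, the two terms for $s^{(j)}$ and $s^{(k)}$ both carry the factor $\psi_\lambda(p)$. Their sum therefore equals the single term for $s^{(j\vee k)}$ computed in $\theta'$, using Step 2. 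All other terms are identical in the two structures, so the single-expert report law conditional on $\omega$ agrees.

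\textbf{Step 4: lift to the full report structure.} Given $\omega$, the signals are i.i.d. and each report depends only on its own signal and independent noise. Hence the reports are conditionally i.i.d. given $\omega$, and
\[
\Pr[\omega=w,X_1=x_1,\dots,X_n=x_n]=\Pr[\omega=w]\prod_i\Pr[X_i=x_i\mid\omega=w].
\]
Every factor in this product coincides for $\theta$ and $\theta'$, so $\mathrm{rep}(\theta,\psi_\lambda)=\mathrm{rep}(\theta',\psi_\lambda)$. The same argument covers $\lambda\to\infty$, since only the value $\psi(p)$ at the shared posterior enters. Iterating the merge over all pairs with equal posteriors yields a structure with one signal per distinct posterior.

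None of these steps is a real obstacle. The only care needed is in Step 2, namely the zero-probability case and correctly attributing the common posterior to the merged signal.
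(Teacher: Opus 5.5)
Your proof is correct and takes essentially the paper's approach. The paper justifies the observation only by remarking that a report depends on the signal solely through its posterior; your Steps 2--4 make this precise, with Step 3 being the conditional-on-$\omega$ form of the convex-combination identity in Lemma~\ref{lem:sig2rep} and Step 4 supplying the c.i.i.d.\ lift to all $n$ experts that the paper leaves implicit.
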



After grouping signals with identical posteriors, we establish a one-to-one correspondence between signals and posteriors. This allows us to represent each signal by its corresponding posterior value. For instance, a signal $s$ where $\Pr[\omega=1 \mid S_i = s] = 0.8$ can be represented simply as $0.8$.

Now, let's define a 3-signal space signal structure, denoted $\Theta^{3ciid}$. This structure comprises signal structures where the signal set contains at most three distinct posteriors:
\[
\Theta^{3ciid} = \left\{ \theta \in \Theta^{ciid} \mid \exists p \in (0,1) \text{ such that } \mathcal{S}_\theta \subseteq \{0,p,1\} \right\}.
\]

As defined, any $\theta \in \Theta^{3ciid}$ has a signal set $\mathcal{S}_\theta$ containing at most three elements, each representing a posterior. At most one of these posteriors, $p$, can lie strictly between 0 and 1; the others must be 0 and 1. We can parameterize each signal structure $\theta \in \Theta^{3ciid}$ with three values: $p$, $\gamma_0$, and $\gamma_1$. These parameters define the joint distribution:

\[
\begin{array}{cc|ccc}
& & S_i = 0 & S_i = p & S_i = 1 \\
\hline
\multirow{2}{*}{\(\theta\)} & \omega=0 & \gamma_0 & \gamma_1(1 - p) & 0 \\
& \omega=1 & 0 & \gamma_1 p & 1 - \gamma_0 - \gamma_1 \\
\end{array}
\]

This 3-signal space structure represents a significant simplification compared to the general signal space. The following lemma will demonstrate the sufficiency of considering only this simplified structure.

\begin{lemma}[Dimension Reduction]\label{lem:dimension_reduction}
For all quantal response rationality levels \(\lambda > 0\), the report structure induced by the 3-signal space structure is equivalent to that induced by the general signal space.  Formally:
\[ \text{rep}(\Theta^{3ciid}, \psi_\lambda) = \hat{\Theta}^{ciid}_\lambda \]
\end{lemma}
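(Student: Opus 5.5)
The plan is to recast the report structure as a point in a finite-dimensional convex set and then show that every such point is a mixture of the posteriors $0$, $1$ and a single interior posterior $p$.

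\textbf{Reduction to a convex-hull problem.} Because signals are c.i.i.d. given $\omega$ and each expert randomizes independently through $\psi_\lambda$, the reports $X_i$ are also c.i.i.d. given $\omega$. With $\mu$ fixed, the report structure is therefore determined by the two numbers $a_w=\Pr[X_i=1\mid\omega=w]$ for $w=0,1$.

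By \Cref{obs:merge_signal}, a signal structure can be identified with a distribution $\pi$ of posteriors $s\in[0,1]$. This distribution satisfies the Bayes-plausibility constraint $\mathbb{E}_\pi[s]=\mu$. Conversely, every distribution of posteriors with mean $\mu$ arises from some $\theta\in\Theta^{ciid}$.

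In terms of $\pi$, the two report parameters are
\[
\mu a_1=\mathbb{E}_\pi[s\,\psi_\lambda(s)],\qquad (1-\mu)a_0=\mathbb{E}_\pi[(1-s)\psi_\lambda(s)].
\]
So the report structure is determined by the vector $\mathbb{E}_\pi[c(s)]$, where $c(s)=(s,\psi_\lambda(s),s\psi_\lambda(s))\in\mathbb{R}^3$. The first coordinate of this vector is pinned to $\mu$.

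The achievable set $\hat{\Theta}^{ciid}_\lambda$ therefore corresponds to the slice $\{x_1=\mu\}$ of $K=\mathrm{conv}\{c(s):s\in[0,1]\}$. It suffices to show that every $y\in K$ is a convex combination of $c(0)$, $c(1)$ and one $c(p)$. The first coordinate is preserved automatically, so such a mixture is a Bayes-plausible distribution on $\{0,p,1\}$, i.e. an element of $\Theta^{3ciid}$. The inclusion $\text{rep}(\Theta^{3ciid},\psi_\lambda)\subseteq\hat{\Theta}^{ciid}_\lambda$ is trivial.

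\textbf{Key geometric lemma: no four points of $c$ are coplanar.} A plane through points of $c$ is an equation $\alpha+\beta s+(\gamma+\delta s)\psi_\lambda(s)=0$ with the coefficients not all zero. I would show that $h(s)=\alpha+\beta s+(\gamma+\delta s)\psi_\lambda(s)$ has at most three zeros on $[0,1]$.

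First I would try the following. Where $\gamma+\delta s\neq0$, a zero of $h$ means $\psi_\lambda(s)=-(\alpha+\beta s)/(\gamma+\delta s)$, a Möbius function of $s$. Taking logit, this becomes
\[
2\lambda(2s-1)=\ln\frac{-(\alpha+\beta s)}{\alpha+\gamma+(\beta+\delta)s}.
\]
The right-hand side is a difference of two logarithms of linear functions. Subtracting the left-hand side and differentiating gives a rational function whose numerator is a quadratic in $s$. Hence the difference has at most two critical points, and by Rolle's theorem at most three zeros.

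The degenerate cases need separate checks:
\begin{itemize}
\item $\gamma+\delta s$ vanishing inside $[0,1]$;
\item one of the linear functions being constant;
\item endpoint behaviour at $s=0,1$, where $\psi_\lambda$ is finite and nonzero since $\lambda<\infty$.
\end{itemize}
I expect this case analysis to be the main obstacle.

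\textbf{From the lemma to three-point representations.} With no four points coplanar, $c$ behaves like the moment curve in $\mathbb{R}^3$. A finite sample $c(s_1),\dots,c(s_m)$ with $s_1<\dots<s_m$ spans a polytope combinatorially equivalent to the cyclic polytope $C(m,3)$.

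By Gale evenness, the facets of $C(m,3)$ are exactly the triangles $\{1,i,i+1\}$ and $\{i,i+1,m\}$. Passing to the limit $m\to\infty$, the boundary of $K$ is the union of the triangles $\mathrm{conv}\{c(0),c(s),c(t)\}$ and $\mathrm{conv}\{c(s),c(t),c(1)\}$. Equivalently, it is a union of cones from $c(0)$ and from $c(1)$ over the curve.

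For a finite signal set one can argue directly: by Carathéodory's theorem, $y$ lies in a simplex on four points of the curve, so it lies in the corresponding cyclic polytope. Now take $y\in K$ and follow the ray from $c(0)$ through $y$ until it leaves $K$. The exit point lies on a boundary facet not containing $c(0)$, so it lies in some $\mathrm{conv}\{c(t),c(1)\}$ after refinement. Hence $y\in\mathrm{conv}\{c(0),c(t),c(1)\}$, which gives $p=t$.

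A cleaner alternative is to prove directly, by induction on the support size, that any distribution on $k\ge4$ posteriors can be replaced by one on $k-1$ posteriors that keeps $0$ and $1$ in the support. Among the four points $c(0),c(s),c(t),c(1)$, the middle two can be merged using the sign pattern of the unique affine dependency given by Gale evenness.
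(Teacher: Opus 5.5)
Your overall architecture is the paper's. You encode a posterior $s$ as $c(s)=(s,\psi_\lambda(s),s\psi_\lambda(s))$, view report structures as points of $\mathrm{conv}(\mathcal{C})$ with first coordinate $\mu$, show that no four points of $\mathcal{C}$ are coplanar, and conclude that $c(0)$, $c(1)$ and a single $c(p)$ suffice. Both sub-steps, however, are argued differently.

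For Lemma~\ref{lem:no4cop}, the paper computes $\det M(a,b,c,d)$ directly. It reduces positivity to $h(v)h(u+v+w)>h(u)h(w)$ with $h(x)=(e^{4\lambda x}-1)/(xe^{2\lambda x})$, using monotonicity and log-convexity of $h$. You instead bound the number of zeros of $\alpha+\beta s+(\gamma+\delta s)\psi_\lambda(s)$.

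For the reduction, the paper never describes the boundary of the hull. It merges two interior posteriors $p_1<p_2$ into one $p\in[p_1,p_2]$ plus extra mass on $0$ and $1$, via intermediate-value arguments in Lemmas~\ref{lem:4point_cop} and~\ref{lem:two2three}, and iterates (Lemma~\ref{lem:canonical_form}). You use the cyclic-polytope facet structure to see that $\partial K$ is swept by segments joining the curve to $c(0)$ and to $c(1)$, and you shoot a ray from $c(0)$. The paper's route is elementary but computation-heavy and fiddly, e.g.\ continuity of the implicitly defined $p$ and signs of the coefficients. Yours is short and explains why $\{0,p,1\}$ is the natural canonical form, at the price of importing Gale evenness and a limiting argument.

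A few points need tightening.

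(i) Rolle's theorem only applies on a connected domain. The set where $-(\alpha+\beta s)/(\alpha+\gamma+(\beta+\delta)s)>0$ can be two disjoint intervals of $[0,1]$. This requires $\gamma+\delta s$ to change sign in $(0,1)$, which is an open condition on the plane, not a degenerate one. There, two critical points only give four zeros a priori. You can close that case, but it is easier to avoid it. Multiplying by $(1+e^{2\lambda(1-2s)})e^{4\lambda s}>0$ turns $h(s)=0$ into $H(s)=(\alpha+\gamma+(\beta+\delta)s)e^{4\lambda s}+e^{2\lambda}(\alpha+\beta s)=0$. Then $H''(s)=8\lambda e^{4\lambda s}\bigl(\beta+\delta+2\lambda(\alpha+\gamma+(\beta+\delta)s)\bigr)$ has at most one zero, unless $\alpha+\gamma=\beta+\delta=0$, in which case $H$ is a nonzero affine function. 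Rolle then gives at most three zeros on all of $\mathbb{R}$, with no case analysis.

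(ii) Identifying the sample polytopes with $C(m,3)$ needs all increasing $4$-tuples to have the same orientation. State that this holds because $\det M(a,b,c,d)$ is continuous and nonvanishing on the connected set $\{a<b<c<d\}$.

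(iii) ``After refinement'' should be a compactness argument. Take $0$, $1$, the Carath\'eodory points of $y$, and a grid of mesh $1/k$. The exit point of the ray lies in some $\mathrm{conv}\{c(s_i),c(s_{i+1}),c(1)\}$ with $s_{i+1}-s_i\le 1/k$, so $y\in\mathrm{conv}\{c(0),c(s_i),c(s_{i+1}),c(1)\}$. Pass to a subsequence along which $s_i,s_{i+1}\to t$ and the barycentric weights converge; this gives $y\in\mathrm{conv}\{c(0),c(t),c(1)\}$.

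(iv) In your ``cleaner alternative'', four affinely independent points have no affine dependency. The relevant one is among the five points $c(0),c(s),c(p),c(t),c(1)$, whose Radon partition alternates ($\{0,p,1\}$ against $\{s,t\}$). For fixed $p$ this places only one point of $[c(s),c(t)]$ in $\mathrm{conv}\{c(0),c(p),c(1)\}$. You therefore still need an intermediate-value argument in $p$, which is exactly the content of Lemma~\ref{lem:two2three}.
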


We now formalize the dimension reduction technique that allows us to compress the signal space. 

\paragraph{Proof Sketch of Lemma~\ref{lem:dimension_reduction}.}~

The proof consists of three main steps:
\begin{enumerate}
    \item 
    \textbf{3-Dimensional Encoding of Signal and Report Structures.}\\
    We begin by encoding signal structures and report structures into a 3-dimensional space. Under the response function $\psi_\lambda$, each signal $s \in [0,1]$ corresponds to a point 
    \begin{align*}
        \vv(s)=&(s, \psi_\lambda(s), s\psi_\lambda(s))\\
    =&(\Pr[\omega=1\mid S_i=s], \Pr[X_i=1\mid S_i=s], \Pr[X_i=1, \omega=1\mid S_i=s])
    \end{align*}
    in three-dimensional space. Here, $\psi_\lambda(s)=\Pr[X_i=1\mid S_i=s]$ represents the probability of taking action 1 given posterior $s$, while $s\psi_\lambda(s)=\Pr[X_i=1, \omega=1\mid S_i=s]$ represents the joint probability of taking action 1 and the action matching the true state of the world. As $s$ varies over the interval $[0,1]$, this mapping traces out a curve, which we denote by \(\mathcal{C}\), in the three-dimensional space.
    
Each report structure also corresponds to a point in three-dimensional space:
\[
\vv(\hat\theta)=(\Pr[\omega=1], \Pr[X_i=1], \Pr[X_i=1, \omega=1]).
\]
Critically, the point corresponding to a report structure $\hat{\theta} = \text{rep}(\theta, \psi_\lambda)$ is a convex combination of the points corresponding to the signals in the signal set $\mathcal{S}_\theta$ of $\theta$.  Consequently, all possible report structures correspond to points within the convex hull of the curve $\mathcal{C}$.

Thus, to establish the equivalence of report structures induced by the 3-signal space and the general signal space, it suffices to show that their respective convex hulls are equivalent.  Specifically, we must demonstrate that any point \(\vv(\hat\theta)\) within the convex hull of $\mathcal{C}$ can be expressed as a convex combination of at most three points \(\vv(0),\vv(p),\vv(1)\) on $\mathcal{C}$, corresponding to signals of 0, $p$, and 1, for some $p \in [0,1]$.

    \item 
    \textbf{No Four Points Coplanar.}\\
To prove this, we need the following key result: no four points on the curve $\mathcal{C}$ are coplanar. We establish this by considering a 4x4 matrix where each row takes the form $[1, s, \psi_\lambda(s), s\psi_\lambda(s)]$ for four values of $s:a<b<c<d$. Assuming $\psi_\lambda(p)$ is the quantal response function, we demonstrate that the determinant of this matrix maintains a consistent sign (either always positive or always negative) regardless of the values of $a,b,c,d$, which implies the result.



\item 
\textbf{Representing Any Report Structure \(\vv(\hat\theta)\) with \(\vv(0),\vv(p),\vv(1)\).}\\ 
Using the ``no four points coplanar'' result, and by repeatedly applying it along with the intermediate value theorem \citep{rudin1964principles}, we first establish a preliminary result: any point on the line segment connecting two points \(\vv(j),\vv(k)\) on $\mathcal{C}$ can be expressed as a convex combination of at most three points \(\vv(0),\vv(p),\vv(1)\) on $\mathcal{C}$, corresponding to signals 0, $p$, and 1 for some $p \in [0,1]$. We then extend this result to encompass any point in the convex hull of $\mathcal{C}$ through an iterative reduction argument.
    
\end{enumerate}

\begin{proof}[Proof of Lemma~\ref{lem:dimension_reduction}]
\;

\textbf{3-Dimensional Encoding of Signal and Report Structures.}

We need to first analyze the relationship between the report structure and the signal structure.

To analyze the report structure $\hat{\theta}$ of a specific expert, we observe that three quantities can uniquely determine a report structure of expert \(i\): \(\Pr_{\hat{\theta}}[\omega=1]\), \(\Pr_{\hat{\theta}}[X_i=1]\), and \(\Pr_{\hat{\theta}}[X_i=1, \omega=1]\). These three quantities can be considered as a point in \( \mathbb{R}^3 \). We define $\vv(\hat{\theta})$ as the point in \( \mathbb{R}^3 \) corresponding to $\hat{\theta}$. Formally, \[
\vv(\hat{\theta})=(\Pr_{\hat{\theta}}[\omega=1], \Pr_{\hat{\theta}}[X_i=1], \Pr_{\hat{\theta}}[X_i=1, \omega=1])
\]

Using this perspective, we can establish the connection between the signal structure and the report structure. Consider a signal \( s\in [0,1] \) with posterior \(\Pr[\omega=1\mid S_i=s]=s\). We represent this signal as a point in \( \mathbb{R}^3 \) using the following coordinates:
\begin{align*}
    \vv(s)&=(s,\psi(s),s\psi(s))\\
    &=(\Pr[\omega=1\mid S_i=s], \psi(\Pr[\omega=1\mid S_i=s]), \Pr[\omega=1\mid S_i=s] \psi(\Pr[\omega=1\mid S_i=s])).
\end{align*}

This point represents the report structure when only this specific signal \( s \) is considered. Here, \( \Pr[\omega=1\mid S_i=s] \) is the posterior probability that the true state is \( \omega = 1 \) given the signal \( s \), \( \psi(\Pr[\omega=1\mid S_i=s]) \) is the probability that the expert reports \( 1 \) given this posterior, and \( \Pr[\omega=1\mid S_i=s] \psi(\Pr[\omega=1\mid S_i=s]) \) is the joint probability that \( \omega = 1 \) and the expert reports \( 1 \).

Given the signal \( S_i=s \), the state \( \omega \) and the decision \( X_i \) are conditionally independent. This conditional independence allows us to express the overall report structure as a linear combination of the points corresponding to different signals.

\begin{lemma}\label{lem:sig2rep}
    For corresponding signal structure $\theta$ and report structure $\hat{\theta}$, the point of report structure $\hat{\theta}$ for expert \( i \) can be represented as a convex combination of the points corresponding to the individual signals \( \textbf{S} \) in the signal structure $\theta$. That is, 
    \[
    \vv(\hat{\theta})=\sum_{s \in \mathcal{S}_\theta} \Pr_{\theta}[S_i=s] \vv(s)
    \]
\end{lemma}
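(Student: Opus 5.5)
The plan is to verify the identity coordinate by coordinate, using the law of total probability over the (grouped) signal set $\mathcal{S}_\theta$ together with the conditional independence of $\omega$ and $X_i$ given $S_i$. By Observation~\ref{obs:merge_signal}, we may assume each signal is identified with its posterior, so $\Pr_\theta[\omega=1\mid S_i=s]=s$. The weights $\Pr_\theta[S_i=s]$ are nonnegative and sum to one, so the claimed representation is automatically a convex combination once the equality is checked.

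First coordinate: $\Pr_{\hat\theta}[\omega=1]=\sum_{s}\Pr_\theta[S_i=s]\Pr[\omega=1\mid S_i=s]=\sum_s \Pr_\theta[S_i=s]\, s$. Second coordinate: by definition of the report map, $X_i$ is drawn from $\psi_\lambda$ applied to the posterior, so $\Pr[X_i=1\mid S_i=s]=\psi_\lambda(s)$, and total probability gives $\Pr_{\hat\theta}[X_i=1]=\sum_s\Pr_\theta[S_i=s]\psi_\lambda(s)$.

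Third coordinate is the only step needing care. The expert's randomization depends only on $S_i$ (its internal noise is independent of $\omega$ given the signal), so $X_i$ and $\omega$ are conditionally independent given $S_i=s$. Hence $\Pr[X_i=1,\omega=1\mid S_i=s]=\Pr[X_i=1\mid S_i=s]\Pr[\omega=1\mid S_i=s]=s\,\psi_\lambda(s)$, and summing against $\Pr_\theta[S_i=s]$ gives the third coordinate. Combining the three coordinates yields $\vv(\hat\theta)=\sum_{s\in\mathcal{S}_\theta}\Pr_\theta[S_i=s]\vv(s)$.

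The main (mild) obstacle is justifying this conditional independence from the model: I would state explicitly that $\text{rep}(\theta,\psi_\lambda)$ generates $X_i$ from $S_i$ alone via the kernel $\psi_\lambda(\Pr[\omega=1\mid S_i])$, which is the formal content of the quantal response model. If $\mathcal{S}_\theta$ is not finite, the same argument goes through with sums replaced by integrals over the signal distribution.
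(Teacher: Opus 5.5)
Your proposal is correct and matches the paper's proof: both verify the identity one coordinate at a time using the law of total probability over $\mathcal{S}_\theta$. Both also invoke conditional independence of $X_i$ and $\omega$ given $S_i$ for the third coordinate $\Pr[X_i=1,\omega=1]$.
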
 

\begin{proof}[Proof of Lemma~\ref{lem:sig2rep}]
Let $\theta$ be the corresponding signal structure of $\hat{\theta}$, and \( \mathcal{S}_\theta \) be the set of all possible signals \( s \) that expert \( i \) can receive. The probability of receiving signal \( s \) is denoted by \( \Pr_{\theta}[S_i=s] \). The three quantities defining the overall report structure can be expressed as expectations over these points:

\begin{itemize}
    \item \(\Pr_{\hat{\theta}}[\omega=1]\) is the expected posterior probability:
    \[
    \Pr_{\hat{\theta}}[\omega=1] = \sum_{s \in \mathcal{S}_\theta} \Pr_{\theta}[S_i=s] \Pr_{\theta}[\omega=1\mid S_i=s].
    \]
    \item \(\Pr_{\hat{\theta}}[X_i=1]\) is the expected reporting probability:
\begin{align*}
    \Pr_{\hat{\theta}}[X_i=1] &=\sum_{s \in \mathcal{S}_\theta} \Pr_{\theta}[X_i=1, S_i=s]\\
        &= \sum_{s \in \mathcal{S}_\theta} \Pr_{\theta}[S_i=s] \psi(\Pr_{\theta}[\omega=1\mid S_i=s]).
\end{align*}
    \item \(\Pr_{\hat{\theta}}[X_i=1, \omega=1]\) is the expected joint probability:
\begin{align*}
    \Pr_{\hat{\theta}}[X_i=1, \omega=1] &=\sum_{s \in \mathcal{S}_\theta} \Pr_{\theta}[X_i=1, S_i=s, \omega=1]\\
        &=\sum_{s \in \mathcal{S}_\theta} \Pr_{\theta}[S_i=s]\Pr[X_i=1, \omega=1\mid S_i=s]\\
        &=\sum_{s \in \mathcal{S}_\theta} \Pr_{\theta}[S_i=s]\Pr_{\theta}[\omega=1\mid S_i=s]\Pr[X_i=1\mid S_i=s]\tag{Conditional Independence}\\
        &= \sum_{s \in \mathcal{S}_\theta} \Pr_{\theta}[S_i=s] \Pr_{\theta}[\omega=1\mid S_i=s] \psi(\Pr_{\theta}[\omega=1\mid S_i=s]).
\end{align*}
    
\end{itemize}

Each of these expectations is a weighted sum of the corresponding coordinates of the points \( \vv(s) \) in \( \mathbb{R}^3 \), with weights given by the probabilities \( \Pr_{\theta}[S_i=s] \). Thus, the point of report structure \( \vv(\hat{\theta}) \) is a convex combination of the points \( \vv(s) \).

\[
\vv(\hat{\theta}) = \sum_{s \in \mathcal{S}_\theta} \Pr_{\theta}[S_i=s] \vv(s)
\]
\end{proof}


Define curve $\mathcal{C}$ be the set of all points of signals in $\mathbb{R}^3$, that is, $\mathcal{C}=\{(s, \psi(s), s\psi(s))\mid s\in[0, 1]\}$. Let $conv(\mathcal{C})$ be the convex hull of curve $\mathcal{C}$. Since the point of report structure can be represented as a convex combination of the points in $\mathcal{C}$, $conv(\mathcal{C})$ is the set of all valid report structures. 

According to \citet{caratheodory1911variabilitatsbereich}'s theorem, in \( \mathbb{R}^3 \), any point inside the convex hull of a set can be represented as a convex combination of at most \( 4 \) points from the set. Thus, we only need $4$ signals to represent all report structures. However, in the next subsection, we will further prove that when $\psi(s)$ is a quantal response function, the signal structures in $\Theta^{3ciid}$ can express all report structures.

\textbf{No Four Points Coplanar.}

Then we prove that, given $\lambda>0$, there are no four points coplanar in the curve $\mathcal{C}=\{(s, \psi_\lambda(s), s\psi_\lambda(s))\mid s\in[0, 1]\}$. Formally, 

\begin{lemma}\label{lem:no4cop}
Given $\lambda>0$, any four different points on \[\vv(s)=(s, \psi_\lambda(s), s\psi_\lambda(s)), (s\in[0, 1])\] are not coplanar.
\end{lemma}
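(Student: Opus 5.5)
Four points $\vv(a),\vv(b),\vv(c),\vv(d)$ with $a<b<c<d$ in $[0,1]$ are coplanar exactly when the $4\times 4$ determinant with rows $[1, s, \psi_\lambda(s), s\psi_\lambda(s)]$ for $s\in\{a,b,c,d\}$ vanishes. So the plan is to show this determinant is never zero. Equivalently, we show that no nontrivial linear combination
\[
h(s)=\alpha+\beta s+(\gamma+\delta s)\,\psi_\lambda(s)
\]
has four distinct zeros in $[0,1]$. This is a Chebyshev-system argument: the functions $1,s,\psi_\lambda,s\psi_\lambda$ form a Chebyshev system on $[0,1]$, and it suffices to bound the number of zeros of every such $h$ by three.

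\textbf{Case $\gamma+\delta s$ nonvanishing on the interval.} Divide by $\gamma+\delta s$. The zeros of $h$ become solutions of
\[
\psi_\lambda(s) = -\frac{\alpha+\beta s}{\gamma+\delta s}=:r(s),
\]
where $r$ is a Möbius (linear fractional) function. Write $\psi_\lambda(s)=1/(1+e^{2\lambda(1-2s)})$. Then $\psi_\lambda(s)=r(s)$ is equivalent to $e^{2\lambda(1-2s)}=(1-r(s))/r(s)$. The right-hand side is again a ratio of two affine functions, say $L_1(s)/L_2(s)$. We therefore need to count zeros of
\[
F(s)=L_2(s)\,e^{2\lambda(1-2s)}-L_1(s),
\]
where $L_1,L_2$ are affine. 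Differentiating twice kills $L_1$. What remains is
\[
F''(s)=e^{2\lambda(1-2s)}\bigl(16\lambda^2 L_2(s)-8\lambda L_2'(s)\bigr),
\]
whose bracket is affine, so $F''$ has at most one zero (unless it is identically zero). Applying Rolle's theorem twice, $F$ has at most three zeros. The degenerate situations are handled directly:
\begin{itemize}
\item if $F''\equiv 0$, then $L_2\equiv 0$, and $F=-L_1$ is affine;
\item if $L_1\equiv L_2\equiv 0$, the original combination forces $\alpha=\beta=\gamma=\delta=0$.
\end{itemize}

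\textbf{Case $\gamma+\delta s$ vanishing in $[0,1]$.} Rather than dividing, I would work directly with $h$. Multiply by $1+e^{2\lambda(1-2s)}>0$ and set $E(s)=e^{2\lambda(1-2s)}$. Then $h$ has the same zeros as
\[
G(s)=(\alpha+\beta s)(1+E(s))+\gamma+\delta s=A(s)+B(s)\,E(s),
\]
with $A,B$ affine. The same double-differentiation works: $G''=E\cdot(\text{affine})$, so by Rolle $G$ has at most three zeros. This uniform argument in fact covers both cases at once, so I would present only it.

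\textbf{Main obstacle.} The step I expect to need care is passing from ``no vanishing determinant'' to a consistent sign and ruling out degenerate coefficient choices. Specifically, I must check that $A\equiv B\equiv 0$ forces all coefficients to be zero. Indeed, $B=\alpha+\beta s\equiv 0$ gives $\alpha=\beta=0$, and then $A=\gamma+\delta s\equiv 0$ gives $\gamma=\delta=0$. Since $G$ has at most three zeros, the determinant is nonzero for every $a<b<c<d$. Because the domain of such ordered quadruples is connected, continuity then gives a constant sign, matching the paper's sketch. That no four points are coplanar follows at once.
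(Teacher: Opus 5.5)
Your proof is correct, and it takes a genuinely different route from the paper.

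The paper works with the determinant directly. After multiplying $\det M(a,b,c,d)$ by positive factors, it reduces positivity to $(d-c)(b-a)(e^{\beta d}-e^{\beta a})(e^{\beta c}-e^{\beta b})>(d-a)(c-b)(e^{\beta d}-e^{\beta c})(e^{\beta b}-e^{\beta a})$ with $\beta=4\lambda$. In the gaps $u=d-c$, $v=c-b$, $w=b-a$ this becomes $h(v)h(u+v+w)>h(u)h(w)$ for $h(x)=(e^{\beta x}-1)/(xe^{\beta x/2})$. The paper closes it using that $h$ is increasing and that $\ln(e^{\beta x}-1)-\ln x$ is convex, so $h(u)h(w)\le h(0)h(u+w)$.

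You instead pass to the kernel: a singular matrix yields a nontrivial combination of $1,s,\psi_\lambda,s\psi_\lambda$ with four zeros. You clear the positive denominator $1+E(s)$, which puts the combination in the span of $1,s,e^{-4\lambda s},se^{-4\lambda s}$. Two differentiations and Rolle then rule out four zeros.

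What each route buys:
\begin{itemize}
\item Your argument is shorter and avoids the paper's long chain of algebraic equivalences.
\item It makes the role of $\lambda>0$ transparent.
\item It extends verbatim to all $s\in\mathbb{R}$ and to any logistic response with an affine exponent.
\item The paper's computation, in exchange, identifies the actual sign: $\det M>0$ for $a<b<c<d$. You obtain only a constant sign, via connectedness of $\{0\le a<b<c<d\le 1\}$. That is all Lemma~\ref{lem:4point_cop} needs.
\end{itemize}

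When you write it up, drop the first case, since the uniform $G=A+BE$ argument subsumes it. Also spell out the degenerate case for $G$ itself. Since $G''=E\,(16\lambda^2B-8\lambda B')$ and $\lambda>0$, $G''\equiv 0$ forces $B\equiv 0$. Then $G=A$ is affine and has at most one zero unless also $A\equiv 0$, which makes all four coefficients vanish.
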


\begin{proof}[Proof of Lemma~\ref{lem:no4cop}]
Let $M(a, b, c, d)=\begin{pmatrix}
    1&a&\psi_\lambda(a)&a\psi_\lambda(a)\\
    1&b&\psi_\lambda(b)&a\psi_\lambda(b)\\
    1&c&\psi_\lambda(c)&a\psi_\lambda(c)\\
    1&d&\psi_\lambda(d)&a\psi_\lambda(d)\\
    \end{pmatrix}.$
    
We need to prove $\forall a, b, c, d, 0\leq a < b < c < d \leq 1$, $$\det(M(a, b, c, d))>0.$$
Let $\beta=4\lambda$, $u=d-c$, $v=c-b$, $w=b-a$ ,
\[
h(x)=\begin{cases}
        \beta, &x=0\\
        \frac{e^{\beta x}-1}{xe^{\frac{\beta x}{2}}},& x>0
    \end{cases},\
g(x)=\begin{cases}
        \ln \beta,& x=0\\
        \ln(e^{\beta x}-1)-\ln x,& x>0
    \end{cases},
    \]
Then we have
\begin{align*}
        &\det(M(a, b, c, d))>0\\
    \iff &e^{2(2a+2b+2c+2d-2)\lambda}\prod(1+e^{2\lambda(1-2a)})\det(M(a, b, c, d))>0\\
    \iff & (d-c)(b-a)(e^{d\beta}-e^{a\beta})(e^{c\beta}-e^{b\beta})>(d-a)(c-b)(e^{d\beta}-e^{c\beta})(e^{b\beta}-e^{a\beta})\\
    \iff &uw(e^{(u+v+w)\beta}-1)(e^{v\beta}-1)>v(u+v+w)(e^{w\beta}-1)(e^{u\beta}-1)e^{v\beta}\\
    \iff &h(v)h(u+v+w)>h(u)h(w)
\end{align*}
When $x>0$, we define $t=\beta x>0$.

Then
\begin{align*}
    &f'(x)>0\\
    \iff & \frac{\beta}{2}xe^{\frac{3\beta x}{2}}+\frac{\beta}{2}xe^{\frac{\beta x}{2}}-e^{\frac{3\beta x}{2}}+e^{\frac{\beta x}{2}}>0\\
    \iff& \beta x e^{\beta x}+\beta x-2e^{\beta x}+2>0\\
    \iff & te^t+t-2e^t+2>0\\
    \iff & \sum_{k=2}^{+ \infty}\frac{t^k(k-2)}{k!}>0
\end{align*}
Then we know $h(u+v+w)>h(u+w)$ and $h(v)>h(0)=\beta$.
Since $\lim_{x\to 0^+}h(x)=g(0)$, $h(x)$ is continuous when $x\geq 0$.
Because $\forall x\geq0, h(x)>0$, 
\begin{align*}
    &h(v)h(u+v+w)>h(u)h(w) \\
    \Longleftarrow & h(u+w)h(0)\geq h(u)h(w)\\
    \iff &\ln(h(u+w))+\ln(h(0))\geq \ln(h(u))+\ln(h(w))\\
    \iff &g(u+w)+g(0)\geq g(u)+g(w)
\end{align*}
Since $\lim_{x\to 0^+}g(x)=g(0)$, $g(x)$ is continuous when $x\geq 0$.
When $x>0$, we define $m=\beta x>0$.

Then
\begin{align*}
    &g''(x)> 0\\
    \iff &\frac{1}{x^2}-\frac{\beta^2e^{\beta x}}{(e^{\beta x}-1)^2}>0\\
    \iff &(e^m-1)^2> m^2e^m \\
    \iff & e^m+e^{-m}-2> m^2\\
    \iff & \sum_{k=2}^{+\infty}\frac{2m^{2k}}{(2k)!}>0
\end{align*}
Then we know $g(x)$ is convex on $[0, +\infty)$.
So $g(u+w)+g(0)\geq g(u)+g(w)$, which means that the theorem is proved.
\end{proof}

\textbf{Representing Any Report Structure $\vv(\hat\theta)$ with $\vv(0),\vv(p),\vv(1)$.}

In the following Lemmas~\ref{lem:4point_cop} and~\ref{lem:two2three}, we will first show that for any convex combination of two points on the curve, the resulting point can be expressed as a convex combination of the points \( \vv(0), \vv(p), \vv(1) \).

\begin{lemma}\label{lem:4point_cop}
When \(\psi_\lambda(p)\) is a continuous function such that no four distinct points on the curve \( \mathcal{C}=\{(s, \psi(s), s\psi(s))\mid s\in[0, 1]\} \) are coplanar. Then, $\forall 0< p_1<p_2<1, 0\leq q\leq 1$, s.t.
\begin{align*}
    q\bm v(p_1)+(1-q)\bm v(p_2), \bm v(p), \bm v(0), \bm v(1)\text{ are coplanar.}\label{eq:cop}\tag{1}
\end{align*}There is only one $p\in [0, 1]$ satisfies \cref{eq:cop}, and $p\in [p_1, p_2]$
\end{lemma}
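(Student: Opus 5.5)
Fix $0<p_1<p_2<1$ and $q\in[0,1]$, and set $\bm w=q\bm v(p_1)+(1-q)\bm v(p_2)$. I will study the sign of the determinant
\[
F(p)=\det\begin{pmatrix}1&\bm v(0)\\ 1&\bm v(1)\\ 1&\bm w\\ 1&\bm v(p)\end{pmatrix},\qquad p\in[0,1],
\]
where each row is a $1\times 4$ row. Coplanarity in (1) is exactly $F(p)=0$. Since $\psi_\lambda$ is continuous, $F$ is continuous in $p$. The plan is to get existence in $[p_1,p_2]$ from the intermediate value theorem, and uniqueness on all of $[0,1]$ from the no-four-coplanar hypothesis (Lemma~\ref{lem:no4cop} in our case).

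\textbf{Existence.} Because the determinant is linear in the third row, $F(p)=q\,D(p_1,p)+(1-q)\,D(p_2,p)$, where $D(r,p)$ is the determinant with rows $\bm v(0),\bm v(1),\bm v(r),\bm v(p)$. Evaluating at the endpoints gives $F(p_1)=(1-q)D(p_2,p_1)$ and $F(p_2)=qD(p_1,p_2)=-qD(p_2,p_1)$, since $D(r,r)=0$ and swapping two rows flips the sign. These have opposite signs, or one vanishes when $q\in\{0,1\}$. The intermediate value theorem then gives some $p\in[p_1,p_2]$ with $F(p)=0$.

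\textbf{Uniqueness.} Note that $D(p_2,p_1)\neq0$ by the hypothesis, since the four points are distinct.

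\emph{Step 1 (roots at the curve endpoints).} First I rule out $p\in\{0,1\}$ as roots. Here $F(0)=F(1)=0$ trivially, because two rows coincide. So "only one $p$" must be read as one $p$ with $\bm v(p)$ distinct from $\bm v(0),\bm v(1)$, i.e. $p\in(0,1)$. I will state this interpretation explicitly, as the paper's intended meaning.

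\emph{Step 2 (two interior roots lead to a contradiction).} Suppose $F$ vanishes at two distinct $p,p'\in(0,1)$. If the points $\bm v(0),\bm v(1),\bm w$ are not collinear, they span a unique plane $P$. Then $\bm v(0),\bm v(1),\bm v(p),\bm v(p')$ all lie in $P$, which is four distinct coplanar curve points, a contradiction.

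\emph{Step 3 (the collinear case).} If $\bm v(0),\bm v(1),\bm w$ are collinear, then $\bm w$ lies on the line through $\bm v(0)$ and $\bm v(1)$. Expanding $\bm w$, this makes $\bm v(0),\bm v(1),\bm v(p_1),\bm v(p_2)$ coplanar, since $\bm w$ is a combination of the last two and lies in the span of the first two. That again contradicts the hypothesis, with the degenerate $q\in\{0,1\}$ cases handled the same way. So the plane $P$ always exists and the root in $(0,1)$ is unique. Combined with existence, that root lies in $[p_1,p_2]$.

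The main obstacle I expect is the bookkeeping in Steps 1 and 3. Step 1 requires interpreting the statement so that the trivial roots $p=0,1$ are excluded. Step 3 requires making the affine dependence argument rigorous: if $\bm w=\alpha\bm v(0)+(1-\alpha)\bm v(1)$, then $q\bm v(p_1)+(1-q)\bm v(p_2)-\alpha\bm v(0)-(1-\alpha)\bm v(1)=0$ is a nontrivial affine dependence among four distinct points, which forces them to be coplanar.
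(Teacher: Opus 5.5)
Your proposal is correct and follows the paper's proof. The paper also forms $h(x)=q\det M(0,p_1,x,1)+(1-q)\det M(0,p_2,x,1)$, obtains a root in $[p_1,p_2]$ from $h(p_1)h(p_2)\le 0$ and the intermediate value theorem, and derives uniqueness from the no-four-coplanar property. Your extra bookkeeping fills in points that the paper's one-line uniqueness argument silently assumes: you read uniqueness as uniqueness in $(0,1)$, since $p=0,1$ are trivial roots, and you rule out collinearity of $\bm v(0),\bm v(1),\bm w$ so that the plane through them is unique.
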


\begin{proof}[Proof of Lemma~\ref{lem:4point_cop}]
    Let $M(a, b, c, d)=\begin{pmatrix}
    1&a&\psi_\lambda(a)&a\psi_\lambda(a)\\
    1&b&\psi_\lambda(b)&a\psi_\lambda(b)\\
    1&c&\psi_\lambda(c)&a\psi_\lambda(c)\\
    1&d&\psi_\lambda(d)&a\psi_\lambda(d)\\
    \end{pmatrix}.$
    
   Since no four distinct points in the curve \( \mathcal{C}=\{(s, \psi(s), s\psi(s))\mid s\in[0, 1]\} \) are coplanar, we have for all distinct \( a, b, c, d \in [0, 1] \), with \( 0 \leq a < b < c < d \leq 1 \), that the determinant of the matrix \( M(a, b, c, d) \) is constant in sign:
   \[
    \det(M(a, b, c, d)) > 0 \quad \text{or} \quad \det(M(a, b, c, d)) < 0.
    \]
    This means that the determinant is either always positive or always negative. 
    
    Then we define $h(x)=q\det(M(0,p_1,x,1))+(1-q)\det(M(0,p_2,x,1))$. The signs of \( h(x) \) on the intervals \( [0, p_1] \) and \( [p_2, 1] \) are opposite. According to the Intermediate Value Theorem~\cite{rudin1964principles}, because $h(x)$ is continuous on $[0,1]$ and $h(p_1)\times h(p_2)\leq 0$, there exists $p\in [p_1,p_2]$ satisfies $h(p)=0$. Then $q\bm v(p_1)+(1-q)\bm v(p_2), \bm v(p), \bm v(0), \bm v(1)$ are coplanar. If there exists another $p'\neq p$ satisfies \cref{eq:cop}, then $\bm v(p), \bm v(0), \bm v(1),\bm v(p')$ are coplanar, which is impossible.
\end{proof}

\begin{lemma}\label{lem:two2three}
Let \( 0 < p_1 < p_2 < 1 \) and \( 0 \leq q \leq 1 \). Then there exist functions \( p, x, y, z : \mathbb{R}^3 \to \mathbb{R} \) such that the following conditions hold:
\[
q \bm{v}(p_1) + (1-q) \bm{v}(p_2) = x(p_1, p_2, q) \bm{v}(p(p_1, p_2, q)) + y(p_1, p_2, q) \bm{v}(0) + z(p_1, p_2, q) \bm{v}(1),
\]
with \( x(p_1, p_2, q) + y(p_1, p_2, q) + z(p_1, p_2, q) = 1 \), that is, \[
    q\bm v(p_1)+(1-q)\bm v(p_2), \bm v(p(p_1, p_2, q)), \bm v(0), \bm v(1)\text{ are coplanar.}
\]
And the following inequalities hold:
\[
x(p_1, p_2, q) \geq 0, \quad y(p_1, p_2, q) \geq 0, \quad z(p_1, p_2, q) \geq 0.
\]
\end{lemma}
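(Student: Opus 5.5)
We will take the point $\bm P = q\bm v(p_1)+(1-q)\bm v(p_2)$ and the unique $p=p(p_1,p_2,q)\in[p_1,p_2]$ given by Lemma~\ref{lem:4point_cop}; this $p$ makes $\bm P,\bm v(p),\bm v(0),\bm v(1)$ coplanar. The edge cases $q\in\{0,1\}$ are immediate: take $p=p_2$ or $p=p_1$ with $x=1$, $y=z=0$. We may therefore assume $0<q<1$ and $p_1<p<p_2$. If $p$ equals $p_1$ or $p_2$, then $\bm v(0),\bm v(1),\bm v(p_1),\bm v(p_2)$ lie in one plane (the line through $\bm P$ and $\bm v(p_i)$ contains both $\bm v(p_1)$ and $\bm v(p_2)$), contradicting Lemma~\ref{lem:no4cop}.

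Since $\bm v(0),\bm v(p),\bm v(1)$ are distinct curve points, they are not collinear; otherwise adding any fourth curve point would give four coplanar points. Hence they span the coplanar plane affinely, and $\bm P$ has unique affine coordinates $x,y,z$ with $x+y+z=1$. The real work is to show that all three coordinates are nonnegative, i.e.\ that $\bm P$ lies in the triangle $\triangle(\bm v(0),\bm v(p),\bm v(1))$.

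To get the signs we use Cramer's rule with the $4\times4$ determinants $D(a,b,c,d)=\det M(a,b,c,d)$, extended to allow $\bm P$ as a row. With the coordinate order $(0,p,1)$, each coordinate is a ratio of oriented volumes:
\[
x=\frac{\det[\bm P;\,\bm v(0);\,\bm v(1);\,\bm e]}{\det[\bm v(p);\,\bm v(0);\,\bm v(1);\,\bm e]},
\]
with analogous formulas for $y$ and $z$, where $\bm e$ is an auxiliary point off the plane. Choosing the auxiliary row as a curve point $\bm v(r)$ is convenient. Each determinant with $\bm P$ as a row is linear in that row, so it splits as $q\,D(\cdot,p_1,\cdot)+(1-q)\,D(\cdot,p_2,\cdot)$. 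The ordered-sign property ($D>0$ for increasing arguments) then fixes the sign of each term after reordering.

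Concretely:

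\textbf{Sign of $x$.} Choose $r$ between $0$ and $p_1$. The numerator is $qD(0,r,p_1,1)+(1-q)D(0,r,p_2,1)>0$, and the denominator is $D(0,r,p,1)>0$, so $x>0$.

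\textbf{Signs of $y$ and $z$.} Use the coplanarity relation from Lemma~\ref{lem:4point_cop}, $qD(0,p_1,p,1)+(1-q)D(0,p_2,p,1)=0$, together with expansions using auxiliary points $r\in(p_1,p)$ or $r\in(p,p_2)$. Then check that the two contributions to each numerator have a consistent sign relative to the denominator.

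The main obstacle is the sign bookkeeping for $y$ and $z$, where the two terms in the numerator have opposite signs. If that becomes messy, we fall back on a geometric argument. The segment $[\bm v(p_1),\bm v(p_2)]$ crosses the plane $\Pi$ through $\bm v(0),\bm v(p),\bm v(1)$ at $\bm P$, and the curve arc on $[0,1]$ meets $\Pi$ only at $s=0,p,1$. By the sign pattern of $D$, the arc over $(0,p)$ lies on one side of $\Pi$ and the arc over $(p,1)$ on the other.

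Projecting to $\Pi$, the trace of $\mathrm{conv}(\mathcal C)\cap\Pi$ is spanned by these three points. Any point of $\Pi$ in the hull of the curve must therefore lie in the triangle, since a point outside could be separated by a line in $\Pi$ extended to a plane. That plane would contain at most the three curve points $\bm v(0),\bm v(p),\bm v(1)$. Verifying this separation claim rigorously is the delicate part. We would first try the direct Cramer computation with the well-chosen auxiliary $r$, since it reduces everything to the already proven positivity of $D$ on ordered quadruples.
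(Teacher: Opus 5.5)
\textbf{There is a gap.} The gap is at the step you yourself call the real work. Your setup is sound: the edge cases, the strict inequality $p_1<p<p_2$ for $0<q<1$, non-collinearity of $\bm v(0),\bm v(p),\bm v(1)$, Cramer's rule with a curve point $\bm v(r)\notin\Pi$, and the $x>0$ computation with $r\in(0,p_1)$ are all correct. But $y\ge 0$ and $z\ge 0$ are never established. You leave them as ``check that the two contributions have a consistent sign,'' and you predict, wrongly, that the two numerator terms have opposite signs and must be reconciled using the coplanarity relation. The geometric fallback does not work as stated either. Extending a separating line in $\Pi$ to a plane says nothing about which side of that plane the curve falls on, since the curve lies off $\Pi$. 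So the fallback amounts to assuming $\mathrm{conv}(\mathcal C)\cap\Pi=\triangle(\bm v(0),\bm v(p),\bm v(1))$, which is (a strengthening of) what the lemma asserts. As written, only $x>0$ is proved.

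The computation does close, with exactly the auxiliary intervals you named, and the two terms turn out to have the \emph{same} sign. Write $\tilde{\bm v}(s)=(1,\bm v(s))$ and
\[
\tilde{\bm P}=x\tilde{\bm v}(p)+y\tilde{\bm v}(0)+z\tilde{\bm v}(1)=q\tilde{\bm v}(p_1)+(1-q)\tilde{\bm v}(p_2).
\]
Place $\tilde{\bm P}$ in one row of a $4\times 4$ determinant whose other rows are $\tilde{\bm v}(r)$ and two of $\tilde{\bm v}(0),\tilde{\bm v}(p),\tilde{\bm v}(1)$; repeated rows kill two of the coefficients. Using the row orders $[\tilde{\bm v}(0);\tilde{\bm v}(r);\tilde{\bm P};\tilde{\bm v}(1)]$, $[\tilde{\bm P};\tilde{\bm v}(r);\tilde{\bm v}(p);\tilde{\bm v}(1)]$ and $[\tilde{\bm v}(0);\tilde{\bm v}(p);\tilde{\bm v}(r);\tilde{\bm P}]$, and noting that each unsorted determinant is sorted by an even permutation, you get
\begin{align*}
x\,D(0,r,p,1)&=q\,D(0,r,p_1,1)+(1-q)\,D(0,r,p_2,1), && r\in(0,p_1),\\
y\,D(0,r,p,1)&=q\,D(p_1,r,p,1)+(1-q)\,D(r,p,p_2,1), && r\in(p_1,p),\\
z\,D(0,p,r,1)&=q\,D(0,p_1,p,r)+(1-q)\,D(0,p,r,p_2), && r\in(p,p_2).
\end{align*}
Every $D$ here has increasing arguments, so all share one sign, and $x,y,z>0$ for $0<q<1$. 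Geometrically, the curve crosses a plane through three of its points only at those parameters and switches sides at each crossing. With $r\in(p_1,p)$, the points $\bm v(0),\bm v(p_1),\bm v(p_2)$ therefore all lie on one side of the plane through $\bm v(r),\bm v(p),\bm v(1)$; the case of $z$ is symmetric. The coplanarity relation of Lemma~\ref{lem:4point_cop} is needed only to know that the representation exists, not for any sign.

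Completed this way, your argument is a genuinely different and cleaner route than the paper's. The paper argues by deformation: $x$ as a function of $q$ equals $1$ at $q\in\{0,1\}$, $y$ is tracked as $p_1\to 0$, and $z$ as $p_2\to 1$. It then uses the intermediate value theorem to rule out a zero crossing, since a zero would force four coplanar curve points. That route needs continuity of $p(p_1,p_2,q)$ in each parameter and some care at the endpoints. Yours reduces everything to the sign of $D$ on ordered quadruples and gives strict positivity directly.
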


\begin{proof}[Proof of Lemma~\ref{lem:two2three}]
We first prove that \( x(p_1, p_2, q) \geq 0 \).

To show \( x(p_1, p_2, q) \geq 0 \), we define \( p_0(q) = p(p_1, p_2, q) \) and \( x_0(q) = x(p_1, p_2, q) \). Consider the function:
\[
h(s, q) = q \det(M(0, p_1, s, 1)) + (1-q) \det(M(0, p_2, s, 1)).
\]
Since \( h(p_0(q), q) = 0 \) and \( h(s, q) \) is continuous on \( [0, 1]^2 \), we have: for any $q^*$
\[
\lim_{q \to q^*} h(p_0(q), q) = h(\lim_{q \to q^*} p_0(q), q^*) = 0.
\]
This implies \( \lim_{q \to q^*} p_0(q) = p_0(q^*) \),which means $p_0$ is continuous.

Next, we can express \( x_0(q) \) as:
\[
x_0(q) = \frac{\det\begin{pmatrix}
\bm{v}(0) \\
q \bm{v}(p_1) + (1-q) \bm{v}(p_2) \\
\bm{v}(1) \\
\end{pmatrix}}{\det\begin{pmatrix}
\bm{v}(0) \\
\bm{v}(p_0(q)) \\
\bm{v}(1) \\
\end{pmatrix}}.
\]
Since the determinant in the denominator is non-zero for distinct \( p_1 \) and \( p_2 \), \( x_0(q) \) is continuous on \( [0, 1] \). Given that \( x_0(0) = 1 \) and \( x_0(1) = 1 \), if \( x_0(q) < 0 \) for some \( q \in (0, 1) \), by the Intermediate Value Theorem~\citep{rudin1964principles}, there exists \( t \) such that \( x_0(t) = 0 \). This leads to \( \bm{v}(0), \bm{v}(p_1), \bm{v}(p_2), \bm{v}(1) \) being coplanar, contradicting the condition.

Next, we will prove \( y(p_1, p_2, q) \geq 0 \).

When \( q=0 \), we have \( p(p_1, p_2, q) = p_2 \), \( x(p_1, p_2, q) = 1 \), and \( y(p_1, p_2, q) = z(p_1, p_2, q) = 0 \).
When \( q=1 \), we have \( p(p_1, p_2, q) = p_1 \), \( x(p_1, p_2, q) = 1 \), and \( y(p_1, p_2, q) = z(p_1, p_2, q) = 0 \).

For \( 0 < q < 1 \), we consider the case that $p_1=0$ in \cref{eq:cop}, then $p=p_2$.
    
    So we can define $p(0,p_2,q)=p_2$, $x(0,p_2,q)=1-q$, $y(0,p_2,q)=q$, $z(0,p_2,q)=0$, and we have
    $\forall 0\leq p_1<p_2<1, 0<q<1, \exists p,x,y,z:\mathbb{R}^3\to\mathbb{R},\ s.t.\ q\bm v(p_1)+(1-q)\bm v(p_2)=x(p_1,p_2,q)\bm v(p(p_1,p_2,q))+y(p_1,p_2,q)\bm v(0)+z(p_1,p_2,q)\bm v(1),x(p_1,p_2,q)+y(p_1,p_2,q)+z(p_1,p_2,q)=1$ 

Then we fix \( q \) and \( p_2 \) and define:
\[
p_0(p_1) = p(p_1, p_2, q), \quad y_0(p_1) = y(p_1, p_2, q).
\]
We then analyze the function:
\[
h(s, p_1) = q \det(M(0, p_1, s, 1)) + (1-q) \det(M(0, p_2, s, 1)).
\]
Since \( h(p_0(p_1), p_1) = 0 \) and \( h(s, p_1) \) is continuous, we have:
\[
\lim_{p_1 \to p'_1} h(p_0(p_1), p_1) = h(\lim_{p_1 \to p'_1} p_0(p_1), p'_1) = 0.
\]
This implies \( \lim_{p_1 \to p'_1} p_0(p_1) = p_0(p'_1) \),which means $p_0$ is continuous.

We can express \( y_0(p_1) \) as:
\[
y_0(p_1) = \frac{\det\begin{pmatrix}
q \bm{v}(p_1) + (1-q) \bm{v}(p_2) \\
\bm{v}(p_0(p_1)) \\
\bm{v}(1) \\
\end{pmatrix}}{\det\begin{pmatrix}
\bm{v}(0) \\
\bm{v}(p_0(p_1)) \\
\bm{v}(1) \\
\end{pmatrix}}.
\]
Thus, \( y_0(p_1) \) is continuous on \( [0, 1) \). Given that \( y_0(0) = q > 0 \), if \( y_0(p_1) < 0 \) for some \( p_1 \in (0, 1) \), then by the Intermediate Value Theorem~\citep{rudin1964principles}, there exists \( t \) such that \( y_0(t) = 0 \). This leads to \( \bm{v}(p), \bm{v}(p_1), \bm{v}(p_2), \bm{v}(1) \) being coplanar, contradicting the condition.

Finally, we will prove \( z(p_1, p_2, q) \geq 0 \).

When \( q=0 \), we have \( p(p_1, p_2, q) = p_2 \), \( x(p_1, p_2, q) = 1 \), and \( y(p_1, p_2, q) = z(p_1, p_2, q) = 0 \).
When \( q=1 \), we have \( p(p_1, p_2, q) = p_1 \), \( x(p_1, p_2, q) = 1 \), and \( y(p_1, p_2, q) = z(p_1, p_2, q) = 0 \).

For \( 0 < q < 1 \), we consider the case that $p_2=1$ in \cref{eq:cop}. Then $p=p_1$.
    
    So we can define $p(p_1,1,q)=p_2$, $x(p_1,1,q)=q$, $y(p_1,1,q)=0$, $z(p_1,1,q)=1-q$ ,and we have
    $\forall 0< p_1<p_2\leq 1, 0<q<1, \exists p,x,y,z:\mathbb{R}^3\to\mathbb{R}, s.t.q\bm v(p_1)+(1-q)\bm v(p_2)=x(p_1,p_2,q)\bm v(p(p_1,p_2,q))+y(p_1,p_2,q)\bm v(0)+z(p_1,p_2,q)\bm v(1),x(p_1,p_2,q)+y(p_1,p_2,q)+z(p_1,p_2,q)=1$ 

Then we fix \( q \) and \( p_1 \) and define:
\[
p_0(p_2) = p(p_1, p_2, q), \quad z_0(p_2) = z(p_1, p_2, q).
\]
We analyze the function:
\[
h(s, p_2) = q \det(M(0, p_1, s, 1)) + (1-q) \det(M(0, p_2, s, 1)).
\]
Since \( h(p_0(p_2), p_2) = 0 \) and \( h(s, p_2) \) is continuous, we have:
\[
\lim_{p_2 \to p'_2} h(p_0(p_2), p_2) = h(\lim_{p_2 \to p'_2} p_0(p_2), p'_2) = 0.
\]
This implies \( \lim_{p_2 \to p'_2} p_0(p_2) = p_0(p'_2) \),which means $p_0$ is continuous.

We can express \( z_0(p_2) \) as:
\[
z_0(p_2) = \frac{\det\begin{pmatrix}
\bm{v}(0) \\
\bm{v}(p_0(p_2)) \\
q \bm{v}(p_1) + (1-q) \bm{v}(p_2) \\
\end{pmatrix}}{\det\begin{pmatrix}
\bm{v}(0) \\
\bm{v}(p_0(p_2)) \\
\bm{v}(1) \\
\end{pmatrix}}.
\]
This shows that \( z_0(p_2) \) is continuous on \( [0, 1) \). Given that \( z_0(0) =1- q > 0 \), if \( z_0(p_2) < 0 \) for some \( p_2 \in (0, 1) \), then by the Intermediate Value Theorem~\citep{rudin1964principles}, there exists \( t \) such that \( z_0(t) = 0 \). This leads to \( \bm{v}(p), \bm{v}(p_1), \bm{v}(p_2), \bm{v}(1) \) being coplanar, contradicting the condition.

Thus, we conclude that \( x(p_1, p_2, q) \geq 0, \, y(p_1, p_2, q) \geq 0, \, z(p_1, p_2, q) \geq 0 \).
\end{proof}

Finally, we can prove that \(\text{rep}(\theta, \psi_\lambda) = \hat{\theta}\), where \(\theta \in \Theta^{3ciid}\) and \(\hat{\theta} \in \hat{\Theta}^{ciid}_\lambda\), defines a bijection between the sets \(\Theta^{3ciid}\) and \(\hat{\Theta}^{ciid}_\lambda\). We first verify the surjectivity of the mapping:
\begin{align*}
    &\forall \hat{\theta} \in \hat{\Theta}^{ciid}_\lambda, \exists \theta \in \Theta^{3ciid} \text{ such that } \text{rep}(\theta, \psi) = \hat{\theta}, \\
    \iff &\forall \theta^{ciid} \in \Theta^{ciid}, \exists \theta^{3ciid} \in \Theta^{3ciid} \text{ such that } \text{rep}(\theta^{3ciid}, \psi) = \text{rep}(\theta^{ciid}, \psi),\\
    \iff &\forall \theta^{ciid} \in \Theta^{ciid}, \exists \theta^{3ciid} \in \Theta^{3ciid} \text{ such that } \forall i\in [n],\\
    &\sum_{s\in\mathcal{S}_\theta}\Pr_{\theta^{ciid}}[S_i=s]\vv(\Pr_{\theta^{ciid}}[\omega=1\mid S_i=s]) = \sum_{s\in\{0, p_i, 1\}}\Pr_{\theta^{3ciid}}[S_i=s]\vv(\Pr_{\theta^{3ciid}}[\omega=1\mid S_i=s]).
\end{align*}

In order to establish surjectivity, we first show that every signal set can be reduced to a canonical form, as captured in the following lemma.

\begin{lemma}[Canonical Form Transformation]\label{lem:canonical_form}
Let 
\[
\mathcal{S}_\theta = \{ s^{(1)}, s^{(2)}, \dots, s^{(m)} \}
\]
be a finite set of signals. Then there exists an iterative transformation that maps \(\mathcal{S}_\theta\) to the canonical signal set \(\{0, p, 1\}\) with 
\[
\Pr_\theta[\omega = 1 \mid S_i = 0] = 0 \quad \text{and} \quad \Pr_\theta[\omega = 1 \mid S_i = 1] = 1.
\]
Moreover, this transformation preserves the equivalence of the corresponding point, that is,
\[
\sum_{j=1}^m \Pr[S_i = s^{(j)}]\, \vv\Bigl(\Pr[\omega=1\mid S_i=s^{(j)}]\Bigr)
=\sum_{s\in \{0,p,1\}}\Pr[S_i=s]\, \vv\Bigl(s\Bigr),
\]
\end{lemma}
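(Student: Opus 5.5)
We prove Lemma~\ref{lem:canonical_form} by induction on the number of signals whose posterior lies strictly inside $(0,1)$, using Lemma~\ref{lem:two2three} as the reduction step. By Observation~\ref{obs:merge_signal} we may assume distinct signals have distinct posteriors, and we write each signal as its posterior $s^{(j)}$ with weight $w_j=\Pr[S_i=s^{(j)}]$. Thus the report point is $\sum_j w_j\vv(s^{(j)})$. Signals with posterior $0$ or $1$ already belong to the canonical set. Let $k$ be the number of interior posteriors. If $k\le 1$ we are done, with $p$ taken to be the unique interior posterior, or arbitrary if $k=0$.

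For the inductive step with $k\ge 2$, pick two interior posteriors $p_1<p_2$ with weights $w_1,w_2$. Set $q=w_1/(w_1+w_2)$, so that
\[
w_1\vv(p_1)+w_2\vv(p_2)=(w_1+w_2)\bigl(q\vv(p_1)+(1-q)\vv(p_2)\bigr).
\]
By Lemma~\ref{lem:two2three}, the bracketed point equals $x\vv(p)+y\vv(0)+z\vv(1)$ with $x,y,z\ge 0$ and $x+y+z=1$, for some $p$. By Lemma~\ref{lem:4point_cop} this $p$ lies in $[p_1,p_2]\subset(0,1)$. We replace the two signals by three signals $p,0,1$ with weights $(w_1+w_2)x$, $(w_1+w_2)y$ and $(w_1+w_2)z$. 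Any weight landing on $0$ or $1$ is merged into the existing endpoint signals, and if $p$ coincides with another interior posterior those two are merged too. The new weights are nonnegative and still sum to $1$, and the weighted sum of $\vv$ points is unchanged. The number of interior posteriors drops by at least one, so after at most $m-1$ steps only $\{0,p,1\}$ remains. This gives the claimed identity.

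We must also check that the result is a genuine signal structure with the stated posteriors, i.e., that it lies in $\Theta^{3ciid}$. Any distribution over posteriors $\{0,p,1\}$ whose mean equals the prior $\mu$ is realized by the table defining $\Theta^{3ciid}$. The first coordinate of $\vv$ is the posterior, so preserving the point preserves the mean posterior $\mu$. We can then set $\gamma_0=\Pr[S_i=0]$ and $\gamma_1=\Pr[S_i=p]$, and conditional independence across experts is imposed by construction.

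The main obstacle is that Lemma~\ref{lem:two2three} as stated produces the coplanar decomposition, and we must be sure its coefficients give an actual convex combination with $p$ interior. Nonnegativity of $x,y,z$ is supplied by that lemma, and interiority of $p$ comes from Lemma~\ref{lem:4point_cop}. Since each step is a finite merge, no limit argument is needed. Termination is immediate because the count of interior posteriors strictly decreases.
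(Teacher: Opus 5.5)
Your proposal is correct and follows essentially the same route as the paper. You merge equal posteriors via Observation~\ref{obs:merge_signal}, repeatedly replace a pair of interior posteriors $p_1<p_2$ by $\{0,p,1\}$ using Lemma~\ref{lem:two2three} (with $p\in[p_1,p_2]$ from Lemma~\ref{lem:4point_cop}), and terminate because the number of interior posteriors strictly decreases. Your normalization $q=w_1/(w_1+w_2)$ and your closing check that the result lies in $\Theta^{3ciid}$ (the preserved first coordinate keeps the mean posterior at $\mu$) make explicit what the paper leaves implicit.
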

\begin{proof}[Proof of Lemma~\ref{lem:canonical_form}]
We prove the lemma by describing an iterative procedure that transforms $\mathcal{S}_\theta$ into the canonical form $\{0, p, 1\}$.

\medskip

\begin{itemize}
    \item \textbf{Step 1 (Group Identical Posteriors):}\\
For any two signals $s^{(j)}$ and $s^{(k)}$ in $\mathcal{S}_\theta$ satisfying
\[
\Pr[\omega=1 \mid S_i = s^{(j)}] = \Pr[\omega=1 \mid S_i = s^{(k)}],
\]
Recall \cref{obs:merge_signal}, we can group these signals by summing their marginal probabilities. Thus, without loss of generality, we may assume that every signal in $\mathcal{S}_\theta$ corresponds to a unique posterior probability.
\item \textbf{Step 2 (Select and Replace a Pair of Intermediate Signals):}\\  
Choose two distinct signals $s^{(j)}$ and $s^{(k)}$ with posterior probabilities
\[
0 < \Pr[\omega=1 \mid S_i = s^{(j)}] < \Pr[\omega=1 \mid S_i = s^{(k)}] < 1.
\]
Define
\[
p_j := \Pr[\omega=1 \mid S_i = s^{(j)}] \quad \text{and} \quad p_k := \Pr[\omega=1 \mid S_i = s^{(k)}].
\]
By applying Lemma~\ref{lem:two2three}, there exists a unique intermediate value 
\(
p^* \in [p_j, p_k]
\)
and nonnegative coefficients $x,y,z$ (satisfying
\(
x+y+z = \Pr[S_i = s^{(j)}] + \Pr[S_i = s^{(k)}]
\)
) such that
\[
\Pr[S_i = s^{(j)}]\, \vv(p_j) + \Pr[S_i = s^{(k)}]\, \vv(p_k) = x\, \vv(p^*) + y\, \vv(0) + z\, \vv(1).
\]
In this step, the pair $\{s^{(j)}, s^{(k)}\}$ is replaced by a new composite signal corresponding to the intermediate value $p^*$, while the signals corresponding to $0$ and $1$ remain intact.
\item \textbf{Step 3 (Iterate Until Canonical Form Is Achieved):}\\ 
Replace the pair $\{s^{(j)}, s^{(k)}\}$ in $\mathcal{S}_\theta$ with the new set of signals $\{0, p^*, 1\}$ (noting that the signals corresponding to $0$ and $1$ may already exist in $\mathcal{S}_\theta$). Since each replacement reduces the number of signals with posterior probabilities strictly between $0$ and $1$, and because $\mathcal{S}_\theta$ is finite, this iterative process terminates after finitely many steps. The final signal set will therefore be in the canonical form
\[
\{0, p, 1\},
\]
where $p$ is the unique intermediate posterior (if any) between $0$ and $1$.
\end{itemize}

This completes the proof.
\end{proof}

Lemma~\ref{lem:canonical_form} demonstrates that for every report structure $\hat{\theta}\in \hat{\Theta}^{ciid}_\lambda$ there exists an equivalent signal structure structure $\theta \in \Theta^{3ciid}$ satisfying
\[
\operatorname{rep}(\theta, \psi_\lambda) = \hat{\theta}.
\]
Thus, the mapping is surjective.

Next, we verify the injectivity of the mapping:
    \[
    \forall \theta, {\theta}' \in \Theta^{3ciid}, \quad \text{rep}(\theta, \psi) = \text{rep}({\theta}', \psi) \implies \theta = {\theta}'.
    \]

Since $\Pr_{\theta}[\omega=1\mid S_i=0]=\Pr_{{\theta}'}[\omega=1\mid S_i=0]=0$ and $\Pr_{\theta}[\omega=1\mid S_i=1]=\Pr_{{\theta}'}[\omega=1\mid S_i=1]=1$, we have
\begin{align*}
    &\text{rep}(\theta, \psi) = \text{rep}({\theta}', \psi)\\
    \iff &\sum_{s_i\in\{0, p, 1\}}\Pr_{\theta}[S_i=s_i]\vv(\Pr_{\theta}[\omega=1\mid S_i=s_i]) = \sum_{s_i\in\{0, p, 1\}}\Pr_{{\theta}'}[S_i=s_i]\vv(\Pr_{{\theta}'}[\omega=1\mid S_i=s_i])\\
    \iff &\sum_{s_i\in\{0, p, 1\}}\Pr_{\theta}[S_i=s_i]\vv(\Pr_{\theta}[\omega=1\mid S_i=s_i]) - \sum_{s_i\in\{0, p, 1\}}\Pr_{{\theta}'}[S_i=s_i]\vv(\Pr_{{\theta}'}[\omega=1\mid S_i=s_i])=0\\
    \iff &(\Pr_{\theta}[S_i=0]-\Pr_{{\theta}'}[S_i=0])\vv(0) + (\Pr_{\theta}[S_i=1]-\Pr_{{\theta}'}[S_i=1])\vv(1)\\
    &+ \Pr_{\theta}[S_i=p]\vv(\Pr_{\theta}[\omega=1\mid S_i=p])-\Pr_{{\theta}'}[S_i=p]\vv(\Pr_{{\theta}'}[\omega=1\mid S_i=p])=0\\
    \iff &\begin{cases}
        \Pr_{\theta}[S_i=0]-\Pr_{{\theta}'}[S_i=0]=0\\
        \Pr_{\theta}[S_i=1]-\Pr_{{\theta}'}[S_i=1]=0\\
    \end{cases}\\
    &\land\left(\begin{cases}
        \vv(\Pr_{{\theta}'}[\omega=1\mid S_i=p])=\vv(\Pr_{{\theta}'}[\omega=1\mid S_i=p])\\
        \Pr_{\theta}[S_i=p]-\Pr_{{\theta}'}[S_i=p]=0\\
    \end{cases}\lor
    \begin{cases}
        \Pr_{{\theta}'}[S_i=p]=0\\
        \Pr_{\theta}[S_i=p]=0\\
    \end{cases}\right)\\
    \iff &\theta={\theta}'
\end{align*}

Together, these results confirm that \(\hat\theta=\text{rep}(\theta, \psi_\lambda)\) is both injective and surjective, thereby defining a bijection between \(\Theta^{3ciid}\) and \(\hat{\Theta}^{ciid}_\lambda\). 

Thus we proved
\[
\text{rep}(\Theta^{3ciid},\psi_\lambda)=\hat{\Theta}^{ciid}_\lambda
\]
\end{proof}

\subsection{Optimal Robust Aggregator}\label{sec:robust}
We start to identify the optimal aggregator when experts exhibit bounded rationality. We prove that the simple majority voting is the optimal robust aggregator under certain conditions. In particular, for a given number of experts \(n\), when the rationality level \(\lambda\) is below a certain threshold \(g(n)\), the majority voting \(f^{maj}\) minimizes the regret. The threshold function \(g(n)\) (for \(n = 3\) to \(20\)) visualized in \cref{fig:gn_20}.

\begin{figure}[ht]
    \centering
    \includegraphics[width=0.55\textwidth]{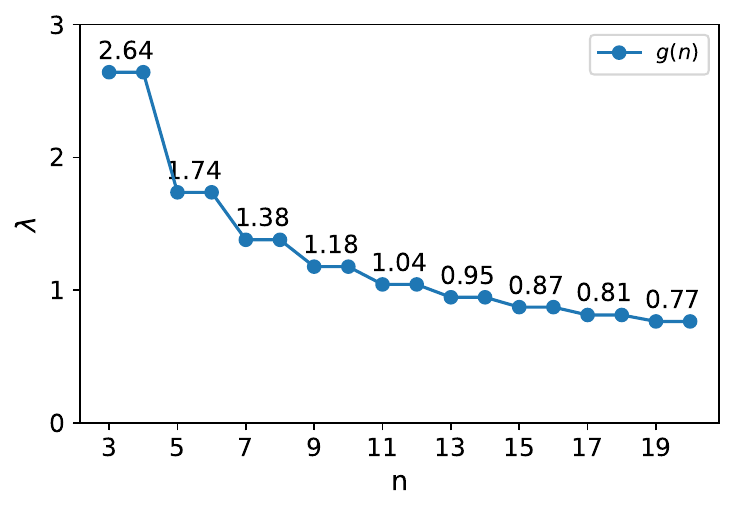}
    \caption{\textbf{Threshold \( g(n)\) vs. Group Size \(n\) (\(n = 3\) to \(20\))} The plot begins at \(n = 3\) because when \(n \leq 2\), the majority voting is always the optimal robust aggregator, independent of \(\lambda\). For \(n \in [3,20]\), \(g(n)\) decreases with increasing \(n\), with every even \(n\) satisfying \(g(n) = g(n-1)\). These properties indicate that as group size grows, the bounded rationality threshold becomes stricter, representing a lower rationality level to preserve the optimality of the majority voting rule.}
    \label{fig:gn_20}
\end{figure}

\textbf{Intuition for the Step-Like Threshold $g(n)$:} As observed in \cref{fig:gn_20}, the threshold function exhibits a step-like behavior where $g(2k) = g(2k-1)$ for any integer $k \ge 2$. This occurs because the expected utility of the majority vote aggregator $f^{maj}$ is mathematically identical for a group of $2k$ experts and a group of $2k-1$ experts. Specifically, for an odd group of $2k-1$ experts, no ties are possible. For an even group of $2k$ experts, ties are broken by a $0.5$ random coin flip. The decision process for $2k$ experts is equivalent in expectation to randomly removing one expert and taking the majority vote of the remaining $2k-1$ experts. If the $2k$ experts are not in a tie, removing one expert does not change the majority outcome. If they are in a perfect $k$ vs. $k$ tie, removing one expert results in a $2k-1$ group whose majority vote will be '1' or '0' with $0.5$ probability, matching the tie-breaking rule exactly. Since the expected utilities are identical, their optimal rationality thresholds $g(n)$ are also identical. Conversely, this symmetry breaks when moving from $2k$ to $2k+1$ (i.e., $g(2k+1) \neq g(2k)$), because a near-tie $(k+1, k)$ outcome is deterministic for $2k+1$ experts, but removing an expert can force a stochastic tie-breaking outcome in the $2k$ case, causing their expected utilities to diverge.

\paragraph{Proof Sketch of \Cref{the:main}: Optimal Robust Aggregator.}~

The proof consists of two main steps:
\begin{enumerate}
    \item \textbf{Pairwise Optimality of Majority Voting.}\\
    We aim to show $f^{maj}$ is the optimal solution to the following optimization problem under certain conditions:
\[
f^{maj}\in\argmin_f \max_{\hat{\theta} \in \hat{\Theta}^{ciid}} R(f, \hat{\theta}).
\]

A direct approach to showing the worst-case regret optimality of the majority voting, $f^{maj}$, would involve maximizing over the entire space of possible report structures. To avoid this global maximization, we will employ approaches that consider each signal structure individually. 

One such approach would be to prove the stronger condition that $f^{maj}$ is universally optimal, meaning $f^{maj} \in \argmin_f R(f, \hat{\theta})$ for all report structures $\hat{\theta} \in \hat{\Theta}^{ciid}$.  While this would immediately imply worst-case regret optimality, $f^{maj}$ is not, in fact, universally optimal.

Instead, we establish a weaker, yet sufficient, condition. We prove that, under certain conditions, for any pair of symmetric signal structures $\theta$ and $\theta'$ within the restricted set $\Theta^{3ciid}$, $f^{maj}$ minimizes the sum of expected regret:
\[
f^{maj} \in \argmin_f \left( R(f, \text{rep}(\theta,\psi_\lambda)) + R(f, \text{rep}(\theta',\psi_\lambda)) \right).
\]
We then demonstrate how this pairwise optimality implies the desired worst-case regret optimality of $f^{maj}$, effectively avoiding the need to maximize over the entire space of signal structures.


    \item \textbf{Regret Minimization Condition and Threshold Derivation.}\\
    First, we expand the formula \( R(f, \text{rep}(\theta,\psi_\lambda)) + R(f, \text{rep}(\theta',\psi_\lambda)) \) so that we can analyze each term of \( f(x) \) separately.
    Then we proved $f^{maj}$ minimizes the sum of the expected regret if and only if 
    \[
    \Pr_{\hat\theta}\left[\omega=1\;\middle|\; X=\left\lfloor\frac{n-1}{2}\right\rfloor\right]\leq 0.5,
    \] where $X$ is the number of experts reporting $1$.
    This further simplifies the question. Finally, we derive the threshold function \(g(n)\) given the above condition. We prove that when \(\lambda \leq g(n)\), the majority voting minimizes the sum of the expected regret under the pair of signal structures \(\{\theta,\theta'\}\).
\end{enumerate}

\begin{proof}[Proof of \Cref{the:main}: Optimal Robust Aggregator]
\;

\textbf{Pairwise Optimality of Majority Voting.}

Consider the following symmetric signal structures $\theta, \theta'\in \Theta^{3ciid}$ with $n$ experts 
\[
\begin{array}{cc|ccc}
 &  & S_i=0 & S_i=p & S_i=1\\
\hline
\multirow{2}{*}{\(\theta\)} & \omega=0 & (1-\mu)(1-p_0) & (1-\mu)p_0 & 0\\
& \omega=1 & 0 & \mu p_1 & \mu(1-p_1)\\
\hline
\multirow{2}{*}{\(\theta'\)} & \omega=0 & \mu(1-p_1) & \mu p_1 & 0\\
& \omega=1 & 0 & (1-\mu)p_0 & (1-\mu)(1-p_0)
\end{array}
\]

Let \(p=\frac{\mu p_1}{\mu p_1+(1-\mu)p_0}\). We define
\[
\begin{cases}
    q_0=\Pr[X_i=1\mid\omega=0]=(1-p_0)\psi_\lambda(0)+p_0\psi_\lambda(p),\\[1mm]
    q_1=\Pr[X_i=1\mid\omega=1]=(1-p_1)\psi_\lambda(1)+p_1\psi_\lambda(p),
\end{cases}
\]
and we have the joint distribution of the two report structures
\[
\begin{array}{cc|cc}
 &  & X_i=0 & X_i=1\\
\hline
\multirow{2}{*}{\(\hat{\theta}\)} & \omega=0 & (1-\mu)(1-q_0) & (1-\mu)q_0\\
& \omega=1 & \mu (1-q_1) & \mu q_1\\
\hline
\multirow{2}{*}{\(\hat{\theta}'\)} & \omega=0 & \mu q_1 & \mu(1-q_1)\\
 & \omega=1 & (1-\mu) q_0 & (1-\mu)(1-q_0)\\
\end{array}
\]

Since the signal structures \(\theta,\theta'\) are odd symmetric, we have
\[
U(\text{opt}_{\hat\theta}, \hat\theta)=U(\text{opt}_{\hat\theta'}, \hat\theta').
\]

Thus, to find the best response to \(\{\theta,\theta'\}\), minimizing the regret is equivalent to maximizing the utility, i.e., 
\begin{align*}
    &\argmin_{f}\max_{\hat{\theta}\in \{\theta,{\theta}'\}}R(f, \hat\theta)\\
    =&\argmin_{f}\max_{\hat{\theta}\in \{\theta,{\theta}'\}}U(\text{opt}_{\theta}, \hat\theta)-U(f, \hat\theta)\\
    = &\argmax_{f}\min_{\hat{\theta}\in \{\theta,{\theta}'\}}U(f, \hat\theta)
\end{align*}

Then, we need to prove \(f^{maj}\) is the best response to \(\{\theta,\theta'\}\), that is,
\[
\min_{\theta\in \{\theta,{\theta}'\}}U(f^{maj}, \hat\theta)=\max_{f}\min_{\hat{\theta}\in \{\theta,{\theta}'\}}U(f, \hat\theta).
\]

\textbf{Regret Minimization Condition and Threshold Derivation.}

We consider \(\max_{f}(U(f, \hat\theta)+U(f, \hat\theta'))\).
\begin{align*}
    U(f, \hat\theta)+U(f, \hat\theta')
    =&\sum_{x=0}^n \left((2f(x)-1)\Pr_{\theta}[X=x,\omega=1]+(1-2f(x))\Pr_{\theta}[X=x,\omega=0]\right)\\
    &+\sum_{x=0}^n \left((2f(x)-1)\Pr_{\theta'}[X=x,\omega=1]+(1-2f(x))\Pr_{\theta'}[X=x,\omega=0]\right)\\
    =&1+2\sum_{x=0}^n \binom{n}{x}f(x)\Big((\mu q_1^x(1-q_1)^{n-x}+(1-\mu)q_0^{n-x}(1-q_0)^x\\
    &-(1-\mu) q_0^x(1-q_0)^{n-x}-\mu q_1^{n-x}(1-q_1)^x)\Big)
\end{align*}

Next, we will prove that when $f^{maj}$ maximize the utility, majority voting is an optimal robust aggregator. Specially, since \(\theta\) and \(\theta'\) are odd symmetric, we have \(U(f^{maj}, \hat\theta)=U(f^{maj}, \hat\theta')\). Thus we can prove

\begin{lemma}\label{lem:final}
    If for any signal structure $\theta\in \Theta^{3ciid}$, the majority voting $f^{maj}$ satisfy
    \[
    f^{maj} \in \argmax_f(U(f, \hat\theta)+U(f, \hat\theta')).
    \]
    
    Then the majority voting \(f^{maj}\) is a best response to \(\Theta^{ciid}\), i.e.,
    \[
    f^{maj} \in \argmax_{f}\min_{\theta\in \Theta^{ciid}} U(f, \hat\theta).
    \]
\end{lemma}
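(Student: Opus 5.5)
The plan is a direct sandwich argument: for an arbitrary aggregator $f$, bound its worst-case utility above by the worst-case utility of $f^{maj}$, using the pairwise hypothesis one symmetric pair at a time. Throughout, $\hat\theta$ means $\text{rep}(\theta,\psi_\lambda)$ and $\hat\theta'$ means $\text{rep}(\theta',\psi_\lambda)$. The first step reduces the worst case over $\Theta^{ciid}$ to one over $\Theta^{3ciid}$. By Lemma~\ref{lem:dimension_reduction}, every report structure induced by some $\theta\in\Theta^{ciid}$ is also induced by some element of $\Theta^{3ciid}$. Since $U(f,\cdot)$ depends on $\theta$ only through $\hat\theta$, this gives
\[
\inf_{\theta\in\Theta^{ciid}}U(f,\hat\theta)=\inf_{\theta\in\Theta^{3ciid}}U(f,\hat\theta)
\]
for every $f$. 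I will work with infima, so attainment of the minimum is not needed; if it is attained the infimum is that minimum.

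The second step checks that every $\theta\in\Theta^{3ciid}$ can be written in the symmetric parametrization $(\mu,p_0,p_1)$ used above. Write $\theta$ in the $(p,\gamma_0,\gamma_1)$ table. Set $\mu=\Pr[\omega=1]$. Let $p_0$ be the fraction of the $\omega=0$ mass sitting on signal $p$, and $p_1$ the fraction of the $\omega=1$ mass sitting on signal $p$. Bayes' rule then gives back the middle posterior $p=\frac{\mu p_1}{\mu p_1+(1-\mu)p_0}$. The degenerate cases $\mu\in\{0,1\}$, or an empty middle signal, fit by allowing boundary values of the parameters. The partner $\theta'$ is obtained by relabeling $\omega\mapsto 1-\omega$ and $s\mapsto 1-s$. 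It is again a c.i.i.d. structure with posteriors in $\{0,1-p,1\}$, so $\theta'\in\Theta^{3ciid}$ and $\hat\theta'$ is an admissible report structure.

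The third step is the chain of inequalities. Fix any $f$ and any $\theta\in\Theta^{3ciid}$. Because $f^{maj}(n-x)=1-f^{maj}(x)$ and $\hat\theta'$ is the mirror image of $\hat\theta$, we have $U(f^{maj},\hat\theta)=U(f^{maj},\hat\theta')$, as already noted in the text. Using the hypothesis that $f^{maj}$ maximizes the pair sum,
\[
U(f^{maj},\hat\theta)=\tfrac12\bigl(U(f^{maj},\hat\theta)+U(f^{maj},\hat\theta')\bigr)\ \ge\ \tfrac12\bigl(U(f,\hat\theta)+U(f,\hat\theta')\bigr)\ \ge\ \min\{U(f,\hat\theta),U(f,\hat\theta')\}\ \ge\ \inf_{\theta''\in\Theta^{3ciid}}U(f,\hat\theta'').
\]
The last inequality uses $\theta,\theta'\in\Theta^{3ciid}$. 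Taking the infimum over $\theta\in\Theta^{3ciid}$ on the left, and applying the first step to both sides, gives
\[
\inf_{\theta\in\Theta^{ciid}}U(f^{maj},\hat\theta)\ \ge\ \inf_{\theta\in\Theta^{ciid}}U(f,\hat\theta)\quad\text{for every } f,
\]
which is exactly $f^{maj}\in\argmax_f\min_{\theta\in\Theta^{ciid}}U(f,\hat\theta)$.

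The chain itself is mechanical. The main obstacle is the bookkeeping in the second step, on which the argument depends. One must confirm that the hypothesis, quantified over pairs written in the $(\mu,p_0,p_1)$ form, really covers every element of $\Theta^{3ciid}$, including the boundary cases. One must also confirm that the mirrored structure $\theta'$ stays inside the admissible class, so that the final inequality of the chain is valid. I would handle this by writing the explicit parameter map in both directions and checking the boundary cases separately.
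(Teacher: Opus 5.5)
Your proposal is correct and is essentially the paper's argument. On each mirrored pair, the minimum of $U(f,\cdot)$ is at most the pair average, which is at most the average for $f^{maj}$, which in turn equals $U(f^{maj},\hat\theta)$ by the antisymmetry $f^{maj}(n-x)=1-f^{maj}(x)$; one then globalizes over all pairs. The differences are minor: the paper runs the globalization in terms of regret (using $U(\text{opt}_{\hat\theta},\hat\theta)=U(\text{opt}_{\hat\theta'},\hat\theta')$), and it leaves implicit both the passage from $\Theta^{3ciid}$ to $\Theta^{ciid}$ via Lemma~\ref{lem:dimension_reduction} and the fact that $\theta'\in\Theta^{3ciid}$; you instead prove the stated utility form directly and make those two steps explicit, which matches the statement as written more directly than the paper's final regret-to-utility step.
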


\begin{proof}[Proof of Lemma~\ref{lem:final}]
    \begin{align*}
    &\min_{\theta\in \{\theta,{\theta}'\}}U(f, \hat\theta)\leq \frac{U(f, \hat\theta)+U(f, {\theta}')}2\\[1mm]
    \implies &\max_{f}\min_{\theta\in \{\theta,{\theta}'\}}U(f, \hat\theta)\leq \max_{f}\frac{(U(f, \hat\theta)+U(f, {\theta}'))}2\\[1mm]
    \implies &\max_{f}\min_{\theta\in \{\theta,{\theta}'\}}U(f, \hat\theta)\leq \frac{U(f^{maj}, \hat\theta)+U(f^{maj}, {\theta}')}2\\[1mm]
    \implies &\max_{f}\min_{\theta\in \{\theta,{\theta}'\}}U(f, \hat\theta)\leq \min_{\theta\in \{\theta,{\theta}'\}}U(f^{maj}, \hat\theta)\\[1mm]
    \implies &\max_{f}\min_{\theta\in \{\theta,{\theta}'\}}U(f, \hat\theta)= \min_{\theta\in \{\theta,{\theta}'\}}U(f^{maj}, \hat\theta)\\[1mm]
    \implies &U(\text{opt}_{\hat\theta}, \hat\theta)-\max_{f}\min_{\theta\in \{\theta,{\theta}'\}}U(f, \hat\theta)= U(\text{opt}_{\hat\theta}, \hat\theta)-\min_{\theta\in \{\theta,{\theta}'\}}U(f^{maj}, \hat\theta)\\[1mm]
    \implies &\max_{f}\min_{\theta\in \{\theta,{\theta}'\}}R(f, \hat\theta)=\min_{\theta\in \{\theta,{\theta}'\}}R(f^{maj}, \hat\theta)\\[1mm]
    \implies &f^{maj}\in \argmin_{f}\max_{\theta\in \{\theta,{\theta}'\}}R(f, \hat\theta)
    \end{align*}

    Finally, we prove that when \(\lambda<g(n)\), the best response to \(\Theta^{ciid}\) is \(f^{maj}\). 
    \begin{align*}
        &\forall \theta,\max_{\theta\in \Theta^{ciid}}R(f, \hat\theta)\geq \max_{\theta\in \{\theta,{\theta}'\}}R(f, \hat\theta)\\[1mm]
        \implies &\forall \theta,\min_{f}\max_{\theta\in \Theta^{ciid}}R(f, \hat\theta)\geq \min_{f}\max_{\theta\in \{\theta,{\theta}'\}}R(f, \hat\theta)\\[1mm]
        \implies &\forall \theta,\min_{f}\max_{\theta\in \Theta^{ciid}}R(f, \hat\theta)\geq R(f^{maj}, \hat\theta)\\[1mm]
        \implies &\min_{f}\max_{\theta\in \Theta^{ciid}}R(f, \hat\theta)\leq \max_{\theta\in \Theta^{ciid}}R(f^{maj}, \hat\theta)\\[1mm]
        \implies &\min_{f}\max_{\theta\in \Theta^{ciid}}R(f, \hat\theta)= \max_{\theta\in \Theta^{ciid}}R(f^{maj}, \hat\theta)\\[1mm]
        \implies &f^{maj}\in \argmax_{f}\min_{\theta\in \Theta^{ciid}}U(f, \hat\theta)
    \end{align*}
\end{proof}

Let \(f^*\in\argmax_f(U(f, \hat\theta)+U(f, \hat\theta'))\); $T(x)=\mu q_1^x(1-q_1)^{n-x}+(1-\mu)q_0^{n-x}(1-q_0)^x- (1-\mu) q_0^x(1-q_0)^{n-x}-\mu q_1^{n-x}(1-q_1)^x$; we have
\[
f^*(x)=\begin{cases}
0,&T(x)<0,\\[1mm]
\text{any value in }[0,1],&T(x)=0,\\[1mm]
1,&T(x)>0.
\end{cases}
\]

Note that \(\mu q_1^x(1-q_1)^{n-x}+(1-\mu)q_0^{n-x}(1-q_0)^x\) and \((1-\mu) q_0^x(1-q_0)^{n-x}+\mu q_1^{n-x}(1-q_1)^x\) are odd symmetric with respect to the midpoint \(i=\frac{n}{2}\). Thus, we have 
\[
f^{maj} \in \argmax_f(U(f, \hat\theta)+U(f, \hat\theta'))
\]
if and only if for all integer \( x < \frac{n}{2} \), the inequality 
\[
\mu q_1^x(1-q_1)^{n-x} + (1-\mu) q_0^{n-x}(1-q_0)^x \leq (1-\mu) q_0^x(1-q_0)^{n-x} + \mu q_1^{n-x}(1-q_1)^x 
\]
holds.

To formalize this condition, we begin with the following lemma.

\begin{lemma}\label{lem:maj_best}
If \( q_0 \leq q_1 \), then \( f^{maj} \in \argmax_f (U(f, \hat\theta) + U(f, \hat\theta')) \) iff the following inequality holds:
\[
\mu (q_1(1-q_1))^{\lfloor \frac{n-1}{2} \rfloor}(1-2q_1) \leq (1-\mu) (q_0(1-q_0))^{\lfloor \frac{n-1}{2} \rfloor}(1-2q_0).
\]
\end{lemma}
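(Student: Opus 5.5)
The plan is to reduce the family of inequalities indexed by $x<\frac n2$ to the single one at the largest such $x$, namely $x^*=\lfloor\frac{n-1}{2}\rfloor$. From the discussion preceding the lemma, $f^{maj}\in\argmax_f(U(f,\hat\theta)+U(f,\hat\theta'))$ holds iff $T(x)\le 0$ for every integer $x<\frac n2$. Rearranged, this says
\[
\mu\bigl(q_1^x(1-q_1)^{n-x}-q_1^{n-x}(1-q_1)^x\bigr)\le(1-\mu)\bigl(q_0^x(1-q_0)^{n-x}-q_0^{n-x}(1-q_0)^x\bigr).
\]
Writing $a_j=q_j(1-q_j)$ and $k=n-2x\ge 1$, each bracket factors as $a_j^x\bigl((1-q_j)^k-q_j^k\bigr)$.

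At $x=x^*$ we have $k\in\{1,2\}$. Both $(1-q)-q$ and $(1-q)^2-q^2$ equal $1-2q$, so the condition at $x^*$ is exactly the displayed inequality of the lemma. This gives the "only if" direction immediately. For the "if" direction I must show that the condition at $x^*$ implies it at every smaller $x$, i.e. that larger $k$ gives a weaker constraint.

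To see the monotonicity in $k$, I will use a hyperbolic parametrization. Set $t_j=\tfrac12\ln\frac{1-q_j}{q_j}$, so that $(1-q_j)=a_j^{1/2}e^{t_j}$ and $q_j=a_j^{1/2}e^{-t_j}$. Then
\[
a_j^x\bigl((1-q_j)^k-q_j^k\bigr)=2\,a_j^{n/2}\sinh(kt_j).
\]
Since $n=2x+k$, the condition for a given $x$ becomes
\[
\mu\, a_1^{n/2}\sinh(kt_1)\le(1-\mu)\,a_0^{n/2}\sinh(kt_0),
\]
and the prefactors $a_j^{n/2}$ no longer depend on $x$. In the relevant regime $q_0\le q_1\le\tfrac12$ (the range used in the definition of $g(n)$) we have $t_0\ge t_1\ge 0$.

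If $t_1=0$, the left side vanishes and every condition holds trivially. Otherwise, it suffices to show that $k\mapsto \sinh(kt_0)/\sinh(kt_1)$ is nondecreasing in $k$. Then, among all $k\equiv n \pmod 2$ with $k\ge1$, the binding constraint is at the smallest $k$, which corresponds to $x^*$. For the monotonicity, I will differentiate in $k$: $\frac{d}{dk}\ln\sinh(kt)=t\coth(kt)$, and $t\mapsto t\coth(kt)$ is increasing on $t>0$. This holds because $\frac{d}{dt}\bigl(t\coth(kt)\bigr)$ has the sign of $\sinh(2kt)-2kt>0$. Hence the log-ratio has nonnegative derivative in $k$.

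The main obstacle I expect is not the monotonicity but the bookkeeping of signs and edge cases. This includes the possibility $q_1>\tfrac12$, which the lemma's hypothesis $q_0\le q_1$ alone does not exclude. In that case $t_1<0$, so the left-hand sides are nonpositive, and the analysis must split according to the sign of $t_0$. I would first restrict to $q_0\le q_1\le\tfrac12$, the regime in which the lemma is used for $g(n)$, and then dispatch the remaining sign configurations separately, noting that in each of them either both sides have a fixed sign for all $k$ or the same monotone-ratio argument applies with inequalities reversed. For even $n$, the tie point $x=n/2$ imposes no constraint, since $T(n/2)=0$ by the odd symmetry.
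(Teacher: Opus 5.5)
Your proposal is correct and follows essentially the paper's proof. The paper also divides out $(q_j(1-q_j))^{n/2}$ to write the constraint at $x$ as $\mu a_1^{n/2}(t_1^s-t_1^{-s})\le(1-\mu)a_0^{n/2}(t_0^s-t_0^{-s})$, with $t_j=\frac{1-q_j}{q_j}$ and $s=\frac n2-x$ (your $e^{2t_j}$ and $k/2$); it shows the constraint at $x=\lfloor\frac{n-1}{2}\rfloor$ is binding when $0<q_0\le q_1\le\frac12$, and handles the remaining regimes by the substitution $(\mu,q_0,q_1)\mapsto(1-\mu,1-q_1,1-q_0)$ or a sign argument, just as you sketch. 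The only difference is that the paper verifies the ratio monotonicity by expanding $\frac{t^s-t^{-s}}{t-t^{-1}}$ (resp.\ $\frac{t^s-t^{-s}}{t^{1/2}-t^{-1/2}}$) into a palindromic sum of powers, separately for even and odd $n$, whereas your derivative argument on $t\coth(kt)$ proves the same log-supermodularity of $(k,t)\mapsto\sinh(kt)$ for both parities at once; both arguments tacitly need $q_0>0$ and $q_1<1$ (genuinely so: at $q_0=q_1=0$, $n\ge 3$, $\mu>\frac12$ the equivalence fails), which is harmless here since $q_j\in[\psi_\lambda(0),\psi_\lambda(1)]\subset(0,1)$.
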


\begin{proof}[Proof of Lemma~\ref{lem:maj_best}]
    To establish the equivalence between the condition in the lemma and the statement \( f^{maj} \in \argmax_f \), we need to prove the equivalence of
    \[
    \text{For all integer } i<\frac{n}2,\mu q_1^x(1-q_1)^{n-x} + (1-\mu) q_0^{n-x}(1-q_0)^x \leq (1-\mu) q_0^x(1-q_0)^{n-x} + \mu q_1^{n-x}(1-q_1)^x
    \]
    and 
    \[
    \mu (q_1(1-q_1))^{\lfloor \frac{n-1}{2} \rfloor}(1-2q_1) \leq (1-\mu) (q_0(1-q_0))^{\lfloor \frac{n-1}{2} \rfloor}(1-2q_0).
    \]

    Let \(I(n)=\begin{cases}1,&n\text{ is odd}\\2,&n\text{ is even}\end{cases}\), then we have
    \begin{align*}
    &\mu (q_1(1-q_1))^{\lfloor \frac{n-1}{2} \rfloor}(1-2q_1) \leq (1-\mu) (q_0(1-q_0))^{\lfloor \frac{n-1}{2} \rfloor}(1-2q_0)\\
    \iff&\mu (q_1(1-q_1))^{\lfloor \frac{n-1}{2} \rfloor}((1-q_1)^{I(n)}- q_1^{I(n)}) \leq (1-\mu) (q_0(1-q_0))^{\lfloor \frac{n-1}{2} \rfloor}((1-q_0)^{I(n)}-q_0^{I(n)})\\
    \iff&\mu q_1^{\lfloor \frac{n-1}{2} \rfloor}(1-q_1)^{\lceil \frac{n+1}{2} \rceil} + (1-\mu) q_0^{\lceil \frac{n+1}{2} \rceil}(1-q_0)^{\lfloor \frac{n-1}{2} \rfloor} \\
    &\leq (1-\mu) q_0^{\lfloor \frac{n-1}{2} \rfloor}(1-q_0)^{\lceil \frac{n+1}{2} \rceil} - \mu q_1^{\lceil \frac{n+1}{2} \rceil}(1-q_1)^{\lfloor \frac{n-1}{2} \rfloor}\\
    \iff &\begin{cases}
        \mu q_1^{\frac{n}{2}}(1-q_1)^{\frac{n}{2}}(\frac{1-q_1}{q_1}-\frac{1-q_1}{q_1})\leq(1-\mu) q_0^{\frac{n}{2}}(1-q_0)^{\frac{n}{2}}(\frac{1-q_0}{q_0}-\frac{q_0}{1-q_0})& n\text{ is even}\\
        \mu q_1^{\frac{n}{2}}(1-q_1)^{\frac{n}{2}}(\sqrt\frac{1-q_1}{q_1}-\sqrt\frac{1-q_1}{q_1})\leq(1-\mu) q_0^{\frac{n}{2}}(1-q_0)^{\frac{n}{2}}(\sqrt\frac{1-q_0}{q_0}-\sqrt\frac{q_0}{1-q_0})& n\text{ is odd}
    \end{cases}
    \end{align*}

    When \(i<\frac{n}{2}\), \(0< q_0\leq q_1\leq 0.5\), we define \(t_0=\frac{1-q_0}{q_0}\), \(t_1=\frac{1-q_1}{q_1}\), \(s=\frac{n}{2}-i\).
    We know \(1\leq t_1\leq t_0\). And
    \begin{align*}
        &\mu q_1^x(1-q_1)^{n-x}+(1-\mu)q_0^{n-x}(1-q_0)^x\leq (1-\mu) q_0^x(1-q_0)^{n-x}+\mu q_1^{n-x}(1-q_1)^x \\
        \iff &\mu q_1^{\frac{n}{2}}(1-q_1)^{\frac{n}{2}}(t_1^{s}-t_1^{-s})\leq(1-\mu) q_0^{\frac{n}{2}}(1-q_0)^{\frac{n}{2}}(t_0^{s}-t_0^{-s})\label{eq:maj-threshold}\tag{5}
    \end{align*}
    When \(n\) is even, we know \(0<\mu q_1^{\frac{n}{2}}(1-q_1)^{\frac{n}{2}}(t_1-t_1^{-1})\leq (1-\mu) q_0^{\frac{n}{2}}(1-q_0)^{\frac{n}{2}}(t_0-t_0^{-1})\).
    
    Since
    \begin{align*}
        &\frac{t_1^s-t_1^{-s}}{t_1-t_1^{-1}}\\
       =&\sum_{k=1}^s t_1^{2k-s-1}\\
       =&\frac{1}{2}\sum_{k=1}^s(t_1^{2k-s-1}+t_1^{s+1-2k})\\
       \leq &\frac{1}{2}\sum_{k=1}^s(t_0^{2k-s-1}+t_0^{s+1-2k})\\
       =&\sum_{k=1}^s t_0^{2k-s-1}\\
       =&\frac{t_0^s-t_0^{-s}}{t_0-t_0^{-1}}
    \end{align*}
    \cref{eq:maj-threshold} holds. 
    
    When \(n\) is odd, we know \(0<\mu q_1^{\frac{n}{2}}(1-q_1)^{\frac{n}{2}}(t_1^{\frac{1}{2}}-t_1^{-\frac{1}{2}})\leq (1-\mu) q_0^{\frac{n}{2}}(1-q_0)^{\frac{n}{2}}(t_0^{\frac{1}{2}}-t_0^{-\frac{1}{2}})\).
    Since 
    \begin{align*}
        &\frac{t_1^s-t_1^{-s}}{t_1^{\frac{1}{2}}-t_1^{-\frac{1}{2}}}\\
       =&\sum_{k=1}^{2s} t_1^{k-s-\frac{1}{2}}\\
       =&\frac{1}{2}\sum_{k=1}^s(t_1^{k-s-\frac{1}{2}}+t_1^{-k+s+\frac{1}{2}})\\
       \leq &\frac{1}{2}\sum_{k=1}^s(t_0^{k-s-\frac{1}{2}}+t_0^{-k+s+\frac{1}{2}})\\
       =&\sum_{k=1}^s t_0^{k-s-\frac{1}{2}}\\
       =&\frac{t_0^s-t_0^{-s}}{t_0^{\frac{1}{2}}-t_0^{-\frac{1}{2}}}
    \end{align*}
    \cref{eq:maj-threshold} holds.

    When \(0.5\leq q_0\leq q_1\leq 1\), we use \((\mu,q_0,q_1)\) to replace \((1-\mu,1-q_1,1-q_0)\). The equation is completely the same.
    
    When \(q_0 \leq 0.5\leq q_1\), we have 
    \begin{align*}
    &\forall i<\frac{n}2, \begin{cases}
        \mu q_1^x(1-q_1)^{n-x}\leq \mu q_1^{n-x}(1-q_1)^x\\
        (1-\mu) q_0^{n-x}(1-q_0)^x\leq(1-\mu) q_0^x(1-q_0)^{n-x}
    \end{cases}\\
    \implies &\forall i<\frac{n}2,\mu q_1^x(1-q_1)^{n-x} + (1-\mu) q_0^{n-x}(1-q_0)^x \leq (1-\mu) q_0^x(1-q_0)^{n-x} + \mu q_1^{n-x}(1-q_1)^x
    \end{align*}
    
    And let \(i=\lfloor \frac{n-1}{2} \rfloor\), we have 
    \begin{align*}
        &\mu q_1^x(1-q_1)^{n-x}+(1-\mu)q_0^{n-x}(1-q_0)^x\leq (1-\mu) q_0^x(1-q_0)^{n-x}+\mu q_1^{n-x}(1-q_1)^x\\
        \implies &\mu q_1^{\lfloor \frac{n-1}{2} \rfloor}(1-q_1)^{\lceil \frac{n+1}{2} \rceil} + (1-\mu) q_0^{\lceil \frac{n+1}{2} \rceil}(1-q_0)^{\lfloor \frac{n-1}{2} \rfloor} \\
        &\leq (1-\mu) q_0^{\lfloor \frac{n-1}{2} \rfloor}(1-q_0)^{\lceil \frac{n+1}{2} \rceil} - \mu q_1^{\lceil \frac{n+1}{2} \rceil}(1-q_1)^{\lfloor \frac{n-1}{2} \rfloor}
    \end{align*}
    
    Thus, the condition always holds.
\end{proof}

The above lemma characterizes the condition under which the majority voting is the optimal robust aggregator in terms of the quantities \(q_0\) and \(q_1\). In order to connect these conditions with the parameter \(\lambda\), we now provide a reduction result.

\begin{lemma}\label{lem:reduce_beta}
    Given \(n, \lambda\), we have
    \begin{align*}
        &\forall \mu,p_0,p_1\in[0,1], \; \mu (q_1(1-q_1))^{\lfloor \frac{n-1}{2} \rfloor}(1-2q_1) \leq (1-\mu) (q_0(1-q_0))^{\lfloor \frac{n-1}{2} \rfloor}(1-2q_0)\\[1mm]
        \iff &\forall q_0,q_1\text{ such that }\psi_\lambda(0)< q_0\leq q_1\leq 0.5,  \text{ the following inequality holds:}\\
        &\frac{(q_1(1-q_1))^{\lfloor\frac{n-1}2\rfloor}(1-2q_1)}{\psi_\lambda(1)-q_1}\leq \frac{(q_0(1-q_0))^{\lfloor\frac{n-1}2\rfloor}(1-2q_0)}{\psi_\lambda(1)-q_0}\times\frac{2\lambda+\ln(1-q_0)-\ln(q_0)}{2\lambda-\ln(1-q_0)+\ln(q_0)}
    \end{align*}
\end{lemma}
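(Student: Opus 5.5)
The plan is to eliminate $\mu,p_0,p_1$ and write the inequality purely in terms of $(q_0,q_1)$ and one free parameter. Write $k=\lfloor\frac{n-1}{2}\rfloor$ and $A(q)=(q(1-q))^{k}(1-2q)$, so the left-hand condition reads $\mu A(q_1)\le(1-\mu)A(q_0)$.

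\textbf{Reduction to the hard case.} As in the case analysis of Lemma~\ref{lem:maj_best}, if $q_0\le 0.5\le q_1$ the inequality holds automatically. The region $0.5\le q_0\le q_1$ maps to the region $q_0\le q_1\le 0.5$ under $(\mu,q_0,q_1)\mapsto(1-\mu,1-q_1,1-q_0)$, using $\psi_\lambda(1)=1-\psi_\lambda(0)$. Since $\psi_\lambda$ is increasing and $p$ lies between $0$ and $1$, we always have $\psi_\lambda(0)\le q_0$. So only the case $\psi_\lambda(0)\le q_0\le q_1\le 0.5$ matters. In it $A(q_0),A(q_1)\ge 0$, and the condition is equivalent to
\[
\frac{\mu}{1-\mu}\,A(q_1)\le A(q_0).
\]
Thus, for fixed $(q_0,q_1)$, I must compute the supremum of $\mu/(1-\mu)$ over all $(\mu,p_0,p_1)$ that produce this pair $(q_0,q_1)$.

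\textbf{Reparametrization.} Set $r=\psi_\lambda(p)$. Inverting the defining formulas for $q_0,q_1$ gives
\[
p_0=\frac{q_0-\psi_\lambda(0)}{r-\psi_\lambda(0)},\qquad p_1=\frac{\psi_\lambda(1)-q_1}{\psi_\lambda(1)-r},
\]
and $p_0,p_1\in[0,1]$ forces $r\in[q_0,q_1]$. From $p=\frac{\mu p_1}{\mu p_1+(1-\mu)p_0}$ we get
\[
\frac{\mu}{1-\mu}=\frac{p}{1-p}\cdot\frac{p_0}{p_1}.
\]
Inverting the logistic function, $\ln\frac{r}{1-r}=2\lambda(2p-1)$, so
\[
\frac{p}{1-p}=\frac{2\lambda+\ln r-\ln(1-r)}{2\lambda-\ln r+\ln(1-r)}.
\]
Hence
\[
\frac{\mu}{1-\mu}=F(r):=\frac{2\lambda+\ln r-\ln(1-r)}{2\lambda-\ln r+\ln(1-r)}\cdot\frac{(q_0-\psi_\lambda(0))(\psi_\lambda(1)-r)}{(\psi_\lambda(1)-q_1)(r-\psi_\lambda(0))}.
\]
Conversely, every $r\in[q_0,q_1]$ gives admissible $(p_0,p_1,\mu)$: take $p=\psi_\lambda^{-1}(r)$, define $p_0,p_1$ as above, and solve for $\mu$. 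So the supremum is exactly $\sup_{r\in[q_0,q_1]}F(r)$.

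\textbf{Main step: the supremum is at $r=q_0$.} On $(\psi_\lambda(0),0.5]$, the first factor of $F$ is increasing in $r$ and the second is decreasing, so monotonicity needs an argument. I would compare logarithmic derivatives. Put $L=\ln\frac{r}{1-r}\in(-2\lambda,0]$. The derivative of the log of the first factor is $\frac{4\lambda}{(4\lambda^2-L^2)\,r(1-r)}$. The derivative of the log of the second factor is $-\frac{1}{\psi_\lambda(1)-r}-\frac{1}{r-\psi_\lambda(0)}$. Substituting $r=\psi_\lambda(p)$ with $p\in(0,\frac12]$ turns the required inequality into an elementary exponential inequality in $\lambda$ and $p$. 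I would try to prove it by a series expansion or a convexity argument, in the style of Lemma~\ref{lem:no4cop}.

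I expect this monotonicity to be the main obstacle. If it holds, the supremum is attained at $r=q_0$, where $p_0=1$ and
\[
F(q_0)=\frac{2\lambda+\ln q_0-\ln(1-q_0)}{2\lambda-\ln q_0+\ln(1-q_0)}\cdot\frac{\psi_\lambda(1)-q_0}{\psi_\lambda(1)-q_1}.
\]
Plugging this into $F(q_0)A(q_1)\le A(q_0)$ and dividing by $\psi_\lambda(1)-q_0>0$ gives exactly the stated inequality.

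For the converse direction, the choice $r=q_0$ is realizable for every pair $\psi_\lambda(0)<q_0\le q_1\le 0.5$. So the stated inequality is also necessary, and the equivalence follows. The boundary value $q_0=\psi_\lambda(0)$ forces $p=0$ or $p_0=0$, in which case $\mu/(1-\mu)$ is determined directly; I would handle it by that direct check or by a limiting argument.
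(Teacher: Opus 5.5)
\textbf{Gap: the monotonicity step is missing.} Your case reduction, the substitution $r=\psi_\lambda(p)$, the formula for $\mu/(1-\mu)$, the evaluation at $r=q_0$, and the realizability argument all coincide with the paper's proof. The gap is the step you flag yourself. The claim that $F$ attains its supremum over $[q_0,q_1]$ at $r=q_0$ is left unproven. Without it you only have the direction from ``condition for all $(\mu,p_0,p_1)$'' to the stated inequality, which follows by realizing $r=q_0$. The converse direction, ``stated inequality for all $q_0,q_1$ $\Rightarrow$ condition for all $\mu,p_0,p_1$,'' rests entirely on this monotonicity. It is also exactly the direction the main theorem uses to pass from $\lambda\le g(n)$ to pairwise optimality of $f^{maj}$.

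\textbf{The missing idea.} A simplification removes the apparent tug-of-war between your two factors. In the variable $p$ one has
\[
\frac{\psi_\lambda(1)-\psi_\lambda(p)}{\psi_\lambda(p)-\psi_\lambda(0)}=\frac{\sinh(2\lambda(1-p))}{\sinh(2\lambda p)},
\]
so that
\[
\frac{\mu}{1-\mu}=\frac{q_0-\psi_\lambda(0)}{\psi_\lambda(1)-q_1}\cdot\frac{h(1-p)}{h(p)},\qquad h(x)=\frac{e^{2\lambda x}-e^{-2\lambda x}}{x}=\sum_{k\ge 0}\frac{2(2\lambda)^{2k+1}x^{2k}}{(2k+1)!}.
\]
That is, pair $p$ with $\sinh(2\lambda p)$ and $1-p$ with $\sinh(2\lambda(1-p))$, rather than separating $p/(1-p)$ from the $\psi_\lambda$-differences. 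Since $h$ is increasing on $[0,\infty)$, $h(1-p)/h(p)$ is strictly decreasing in $p$. Because $\psi_\lambda$ is increasing, it is also decreasing in $r$, so the supremum is at $r=q_0$, i.e.\ $p=\psi_\lambda^{-1}(q_0)$.

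Equivalently, your log-derivative comparison, rewritten in $p$, is
\[
\frac{1}{p}+\frac{1}{1-p}\le 2\lambda\coth(2\lambda p)+2\lambda\coth(2\lambda(1-p)).
\]
This is just $\tanh y\le y$ applied at $y=2\lambda p$ and at $y=2\lambda(1-p)$. With this step supplied, your argument is complete and is the paper's proof.
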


\begin{proof}[Proof of Lemma~\ref{lem:reduce_beta}]
Since \(q_0 \leq \psi_\lambda(p) \leq q_1\), it follows that \(q_0 > q_1\) is impossible.

Consider the case when \(q_0 < 0.5 < q_1\). We have the following inequalities:
    
\[
\begin{cases}
    \mu (q_1 (1 - q_1))^{\lfloor \frac{n-1}{2} \rfloor} (1 - 2q_1) \leq 0 \leq (1 - \mu) (q_0 (1 - q_0))^{\lfloor \frac{n-1}{2} \rfloor} (1 - 2q_0), \\
    \frac{(q_1 (1 - q_1))^{\lfloor \frac{n-1}{2} \rfloor} (1 - 2q_1)}{\psi_\lambda(1) - q_1} \leq 0 \leq \frac{(q_0 (1 - q_0))^{\lfloor \frac{n-1}{2} \rfloor} (1 - 2q_0)}{\psi_\lambda(1) - q_0} \times \frac{2\lambda + \ln(1 - q_0) - \ln(q_0)}{2\lambda - \ln(1 - q_0) + \ln(q_0)}.
\end{cases}
\]

Next, we consider the case when \(q_0\leq q_1\leq 0.5\) or \(0.5\leq q_0\leq q_1\). Define the following:
\[
\mu' = 1 - \mu, \quad p_0' = p_1, \quad p_1' = p_0, \quad p' = \frac{\mu' p_1'}{\mu' p_1' + (1 - \mu') p_0'},
\]
\[
q_0' = (1 - p_0') \psi_\lambda(0) + p_0' \psi_\lambda(p'), \quad q_1' = (1 - p_1') \psi_\lambda(1) + p_1' \psi_\lambda(p').
\]
    
We observe that \(q_0' = 1 - q_1\) and \(q_1' = 1 - q_0\). Thus, we have
\begin{align*}
    &\forall  \mu, p_0, p_1 \in [0,1], \text{ if } 0.5 \leq q_0 \leq q_1 \leq \psi_\lambda(1), \\
    &\quad \mu (q_1 (1 - q_1))^{\lfloor \frac{n-1}{2} \rfloor} (1 - 2q_1) \leq (1 - \mu) (q_0 (1 - q_0))^{\lfloor \frac{n-1}{2} \rfloor} (1 - 2q_0) \\
    \iff &\forall q_0,q_1,  \mu, p_0, p_1 \in [0,1], \text{ if } 0.5 \leq q_0 \leq q_1 \leq \psi_\lambda(1), \\
    &\quad \mu' (q_1' (1 - q_1'))^{\lfloor \frac{n-1}{2} \rfloor} (1 - 2q_1') \leq (1 - \mu') (q_0' (1 - q_0'))^{\lfloor \frac{n-1}{2} \rfloor} (1 - 2q_0') \\
    \iff &\forall  \mu, p_0, p_1 \in [0,1], \text{ if } \psi_\lambda(0) \leq q_0 \leq q_1 \leq 0.5, \\
    &\quad \mu (q_1 (1 - q_1))^{\lfloor \frac{n-1}{2} \rfloor} (1 - 2q_1) \leq (1 - \mu) (q_0 (1 - q_0))^{\lfloor \frac{n-1}{2} \rfloor} (1 - 2q_0).
\end{align*}

Thus, we conclude that
\begin{align*}
    &\forall  \mu, p_0, p_1 \in [0,1], \; \mu (q_1 (1 - q_1))^{\lfloor \frac{n-1}{2} \rfloor} (1 - 2q_1) \leq (1 - \mu) (q_0 (1 - q_0))^{\lfloor \frac{n-1}{2} \rfloor} (1 - 2q_0) \\
    \iff &\forall  \mu, p_0, p_1 \in [0,1], \text{ if } \psi_\lambda(0) \leq q_0 \leq q_1 \leq 0.5,\\
    &\text{we have }\mu (q_1 (1 - q_1))^{\lfloor \frac{n-1}{2} \rfloor} (1 - 2q_1) \leq (1 - \mu) (q_0 (1 - q_0))^{\lfloor \frac{n-1}{2} \rfloor} (1 - 2q_0).
\end{align*}

When \(q_0 = \psi_\lambda(0)\), we have \(\mu = 0\), \(p_1 = 0\), or \(p_0 = 0\).

If \(\mu = 0\), then
\[
\mu (q_1 (1 - q_1))^{\lfloor \frac{n-1}{2} \rfloor} (1 - 2q_1) = 0 \leq (1 - \mu) (q_0 (1 - q_0))^{\lfloor \frac{n-1}{2} \rfloor} (1 - 2q_0).
\]

If \(p_1 = 0\), then \(q_1 = \psi_\lambda(1)\). Thus,
\[
\mu (q_1 (1 - q_1))^{\lfloor \frac{n-1}{2} \rfloor} (1 - 2q_1) \leq 0 \leq (1 - \mu) (q_0 (1 - q_0))^{\lfloor \frac{n-1}{2} \rfloor} (1 - 2q_0).
\]

If \(p_0 = 0\), then \(p = 1\) and \(q_1 = \psi_\lambda(1)\). Thus,
\[
\mu (q_1 (1 - q_1))^{\lfloor \frac{n-1}{2} \rfloor} (1 - 2q_1) \leq 0 \leq (1 - \mu) (q_0 (1 - q_0))^{\lfloor \frac{n-1}{2} \rfloor} (1 - 2q_0).
\]

Therefore, when \(q_0 = \psi_\lambda(0)\), we have
\[
\mu (q_1 (1 - q_1))^{\lfloor \frac{n-1}{2} \rfloor} (1 - 2q_1) \leq (1 - \mu) (q_0 (1 - q_0))^{\lfloor \frac{n-1}{2} \rfloor} (1 - 2q_0).
\]

Combining all cases, we conclude that
\begin{align*}
    &\forall  \mu, p_0, p_1 \in [0,1], \; \mu (q_1 (1 - q_1))^{\lfloor \frac{n-1}{2} \rfloor} (1 - 2q_1) \leq (1 - \mu) (q_0 (1 - q_0))^{\lfloor \frac{n-1}{2} \rfloor} (1 - 2q_0) \\
    \iff &\forall  \mu, p_0, p_1 \in [0,1], \text{ if } \psi_\lambda(0) < q_0 \leq q_1 \leq 0.5,\\
    &\text{we have } \mu (q_1 (1 - q_1))^{\lfloor \frac{n-1}{2} \rfloor} (1 - 2q_1) \leq (1 - \mu) (q_0 (1 - q_0))^{\lfloor \frac{n-1}{2} \rfloor} (1 - 2q_0).
\end{align*}

Since \(\psi_\lambda\) is strictly monotonically increasing on \([0, 1]\), for \(\psi_\lambda(0) < q_0 \leq q_1 \leq 0.5\), we know \(p \in [\psi_\lambda^{-1}(q_0), \psi_\lambda^{-1}(q_1)]\), where \(\psi_\lambda^{-1}\) is the inverse function of \(\psi_\lambda\). For all \(\psi_\lambda(0) < q_0 \leq q_1 \leq 0.5\) and \(p \in [\psi_\lambda^{-1}(q_0), \psi_\lambda^{-1}(q_1)]\), let
    
\[
p_1 = \frac{\psi_\lambda(1) - q_1}{\psi_\lambda(1) - \psi_\lambda(p)}, \quad p_0 = \frac{q_0 - \psi_\lambda(0)}{\psi_\lambda(p) - \psi_\lambda(0)}, \quad \mu = \frac{p_0 p}{p_0 p + p_1 (1 - p)}.
\]
    
Then \( \mu, p_0, p_1 \in [0,1]\) and
\[
\begin{cases}
    p = \frac{\mu p_1}{\mu p_1 + (1 - \mu) p_0}, \\
    q_0 = (1 - p_0) \psi_\lambda(0), + p_0 \psi_\lambda(p), \\
    q_1 = (1 - p_1) \psi_\lambda(1) + p_1 \psi_\lambda(p).
\end{cases}
\]

Therefore, we can conclude that:
\begin{align*}
    &\forall \mu,p_0,p_1\in[0,1], \mu (q_1 (1 - q_1))^{\lfloor \frac{n-1}{2} \rfloor} (1 - 2q_1) \leq (1 - \mu) (q_0 (1 - q_0))^{\lfloor \frac{n-1}{2} \rfloor} (1 - 2q_0) \\
    \iff &\forall q_0,q_1\text{ such that }\psi_\lambda(0) < q_0 \leq q_1 \leq 0.5,\text{ the following inequality holds}\\
    &\forall p \in [\psi_\lambda^{-1}(q_0), \psi_\lambda^{-1}(q_1)], \; \mu (q_1 (1 - q_1))^{\lfloor \frac{n-1}{2} \rfloor} (1 - 2q_1) \leq (1 - \mu) (q_0 (1 - q_0))^{\lfloor \frac{n-1}{2} \rfloor} (1 - 2q_0).
\end{align*}

Finally, we define \(h(x) = \frac{e^{2\lambda x} - e^{-2\lambda x}}{x}\). Since \(q_0 = (1 - p_0)\psi_\lambda(0) + p_0 \psi_\lambda(p)\) and \(q_1 = (1 - p_1)\psi_\lambda(1) + p_1 \psi_\lambda(p)\), we have
\[
p_0 = \frac{q_0 - \psi_\lambda(0)}{\psi_\lambda(p) - \psi_\lambda(0)}, \quad p_1 = \frac{\psi_\lambda(1) - q_1}{\psi_\lambda(1) - \psi_\lambda(p)}.
\]
Then,
\begin{align*}
    &\frac{p}{1 - p} \\
    =& \frac{\mu p_1}{(1 - \mu) p_0} \\
    =& \frac{\mu (\psi_\lambda(1) - q_1)(\psi_\lambda(p) - \psi_\lambda(0))}{(1 - \mu)(q_0 - \psi_\lambda(0)) (\psi_\lambda(1) - \psi_\lambda(p))}.
\end{align*}
So,
\begin{align*}
    &\frac{\mu}{1 - \mu} \\
    =& \frac{q_0 - \psi_\lambda(0)}{\psi_\lambda(1) - q_1} \times \frac{e^{2\lambda (1 - p)} - e^{-2\lambda (1 - p)}}{1 - p} \times \frac{p}{e^{2\lambda p} - e^{-2\lambda p}} \\
    =& \frac{q_0 - \psi_\lambda(0)}{\psi_\lambda(1) - q_1} \times \frac{h(1 - p)}{h(p)}.
\end{align*}

Since \(h(x) = \sum_{k=0}^{+\infty} \frac{2(2\lambda)^{2k+1} x^{2k}}{(2k+1)!}\), \(h\) is strictly monotonically increasing on \([0, +\infty)\). Because \(\psi_\lambda\) is strictly monotonically increasing on \([0, 1]\) and \(q_0 \leq \psi_\lambda(p)\),
\begin{align*}
    &\frac{\mu}{1 - \mu} \\
    =& \frac{q_0 - \psi_\lambda(0)}{\psi_\lambda(1) - q_1} \times \frac{h(1 - p)}{h(p)} \\
    \leq & \frac{q_0 - \psi_\lambda(0)}{\psi_\lambda(1) - q_1} \times \frac{h(1 - \psi_\lambda^{-1}(q_0))}{h(\psi_\lambda^{-1}(q_0))} \\
    =& \frac{(\psi_\lambda(1) - q_0)(2\lambda - \ln(1 - q_0) + \ln q_0)}{(\psi_\lambda(1) - q_1)(2\lambda + \ln(1 - q_0) - \ln q_0)}.
\end{align*}

Especially, when \(p = \psi_\lambda^{-1}(q_0)\),
\[
\frac{\mu}{1 - \mu} = \frac{(\psi_\lambda(1) - q_0)(2\lambda - \ln(1 - q_0) + \ln q_0)}{(\psi_\lambda(1) - q_1)(2\lambda + \ln(1 - q_0) - \ln q_0)}.
\]

So,
\begin{align*}
    &\forall  \mu, p_0, p_1 \in [0,1], \quad \mu (q_1 (1 - q_1))^{\lfloor \frac{n-1}{2} \rfloor} (1 - 2q_1) \leq (1 - \mu) (q_0 (1 - q_0))^{\lfloor \frac{n-1}{2} \rfloor} (1 - 2q_0) \\
    \iff &\forall q_0,q_1 \text{ such that } \psi_\lambda(0) < q_0 \leq q_1 \leq 0.5,\text{ the following inequality holds}\\
    &\forall p \in [\psi_\lambda^{-1}(q_0), \psi_\lambda^{-1}(q_1)], \; \mu (q_1 (1 - q_1))^{\lfloor \frac{n-1}{2} \rfloor} (1 - 2q_1) \leq (1 - \mu) (q_0 (1 - q_0))^{\lfloor \frac{n-1}{2} \rfloor} (1 - 2q_0) \\
    \iff &\forall q_0,q_1 \text{ such that } \psi_\lambda(0) < q_0 \leq q_1 \leq 0.5,\text{ the following inequality holds}\\ 
    &\quad \frac{(q_1 (1 - q_1))^{\lfloor \frac{n-1}{2} \rfloor} (1 - 2q_1)}{\psi_\lambda(1) - q_1} \leq \frac{(q_0 (1 - q_0))^{\lfloor \frac{n-1}{2} \rfloor} (1 - 2q_0)}{\psi_\lambda(1) - q_0} \times \frac{2\lambda + \ln(1 - q_0) - \ln(q_0)}{2\lambda - \ln(1 - q_0) + \ln(q_0)}.
\end{align*}
\end{proof}

Lemma~\ref{lem:reduce_beta} is crucial because it reformulates the condition from Lemma~\ref{lem:maj_best} in terms of the parameter \(\lambda\) and the transformed quantities \(\psi_\lambda(0)\) and \(\psi_\lambda(1)\). This reformulation paves the way for defining a threshold function \(g(n)\) that governs the optimality of the majority voting.

Before generalizing to the case \(n > 2\), we now examine the special scenario when \(n \leq 2\).

\begin{lemma}\label{lem:nleq2}
    Given \(n\leq 2\) and \(\lambda\), if \(q_0, q_1\) satisfy \(\psi_\lambda(0) < q_0 \leq q_1 \leq \frac{1}{2}\), the following inequality holds:
    \[
    \frac{(q_1(1-q_1))^{\lfloor\frac{n-1}2\rfloor}(1-2q_1)}{\psi_\lambda(1)-q_1}\leq \frac{(q_0(1-q_0))^{\lfloor\frac{n-1}2\rfloor}(1-2q_0)}{\psi_\lambda(1)-q_0}\times\frac{2\lambda+\ln(1-q_0)-\ln(q_0)}{2\lambda-\ln(1-q_0)+\ln(q_0)}
    \]
\end{lemma}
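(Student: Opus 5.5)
For $n\le 2$ we have $\lfloor\frac{n-1}{2}\rfloor=0$ (for $n=1$ it is $0$, and for $n=2$ it is $\lfloor 1/2\rfloor=0$). The factors $(q_j(1-q_j))^{\lfloor\frac{n-1}2\rfloor}$ therefore equal $1$, and the claimed inequality reduces to
\[
\frac{1-2q_1}{\psi_\lambda(1)-q_1}\;\leq\;\frac{1-2q_0}{\psi_\lambda(1)-q_0}\times\frac{2\lambda+\ln(1-q_0)-\ln q_0}{2\lambda-\ln(1-q_0)+\ln q_0}.
\]
The plan is to split the right-hand side into two factors. We show that the first fraction, viewed as a function of $q$, dominates the left-hand side, and that the logarithmic factor is at least $1$.

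\textbf{Step 1 (the logarithmic factor).} Set $L=\ln\frac{1-q_0}{q_0}$. Since $q_0\le\frac12$, we have $L\ge 0$. Since $q_0>\psi_\lambda(0)=\frac{1}{1+e^{2\lambda}}$, we get $\frac{1-q_0}{q_0}<e^{2\lambda}$, hence $L<2\lambda$. The factor equals $\frac{2\lambda+L}{2\lambda-L}$. Its denominator is strictly positive and its numerator is at least the denominator, so the factor is well defined and at least $1$. This is exactly where the strict lower bound $\psi_\lambda(0)<q_0$ is needed.

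\textbf{Step 2 (monotonicity of the remaining factor).} Define $k(q)=\frac{1-2q}{\psi_\lambda(1)-q}$ on $[0,\tfrac12]$. Because $\lambda>0$, we have $\psi_\lambda(1)=\frac{1}{1+e^{-2\lambda}}>\frac12\ge q$, so the denominator is positive and $k(q)\ge 0$. Differentiating gives
\[
k'(q)=\frac{-2(\psi_\lambda(1)-q)+(1-2q)}{(\psi_\lambda(1)-q)^2}=\frac{1-2\psi_\lambda(1)}{(\psi_\lambda(1)-q)^2}<0.
\]
Hence $k$ is decreasing, and $q_0\le q_1$ yields $k(q_1)\le k(q_0)$.

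\textbf{Step 3 (combine).} Since $k(q_0)\ge 0$ and the logarithmic factor is at least $1$, we obtain $k(q_1)\le k(q_0)\le k(q_0)\cdot\frac{2\lambda+L}{2\lambda-L}$, which is the claim.

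There is no real obstacle here. The only points needing care are the signs: we must confirm that $2\lambda-L>0$, which uses $q_0>\psi_\lambda(0)$, and that $\psi_\lambda(1)>\frac12$, which uses $\lambda>0$. Both hold, so multiplying inequalities preserves their direction. Combined with Lemma~\ref{lem:reduce_beta}, Lemma~\ref{lem:maj_best} and Lemma~\ref{lem:final}, this gives $g(n)=\infty$ for $n\le 2$.
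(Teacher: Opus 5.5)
Your proof is correct and takes essentially the same route as the paper's. Both split the right-hand side into the logarithmic factor, which is at least $1$ because $q_0\le\tfrac12$, and the comparison $(1-2q_1)(\psi_\lambda(1)-q_0)\le(1-2q_0)(\psi_\lambda(1)-q_1)$; the paper gets the latter by lowering $\psi_\lambda(1)$ to $\tfrac12$ in its bound on $\mu/(1-\mu)$, and you get it from the monotonicity of $k$, while also verifying $2\lambda-L>0$ and avoiding the paper's division by $\tfrac12-q_1$, which vanishes at $q_1=\tfrac12$.
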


\begin{proof}[Proof of Lemma~\ref{lem:nleq2}]
Recall \(h(x) = \sum_{k=0}^{+\infty} \frac{2(2\lambda)^{2k+1} x^{2k}}{(2k+1)!}\) in the proof of Lemma~\ref{lem:reduce_beta}, \(h\) is strictly monotonically increasing on \([0, +\infty)\). Because \(\psi_\lambda\) is strictly monotonically increasing on \([0, 1]\) and \(q_0 \leq \psi_\lambda(p)\),
\begin{align*}
    &\frac{\mu}{1 - \mu} \\
    =& \frac{q_0 - \psi_\lambda(0)}{\psi_\lambda(1) - q_1} \times \frac{h(1 - p)}{h(p)} \\
    \leq & \frac{q_0 - \psi_\lambda(0)}{\psi_\lambda(1) - q_1} \times \frac{h(1 - \psi_\lambda^{-1}(q_0))}{h(\psi_\lambda^{-1}(q_0))} \\
    =& \frac{(\psi_\lambda(1) - q_0)(2\lambda - \ln(1 - q_0) + \ln q_0)}{(\psi_\lambda(1) - q_1)(2\lambda + \ln(1 - q_0) - \ln q_0)}\\
    \leq &\frac{\psi_\lambda(1) - q_0}{\psi_\lambda(1) - q_1}\\
    \leq &\frac{0.5 - q_0}{0.5 - q_1}\\
    = &\frac{1- 2q_0}{1- 2q_1}\\
\end{align*}

Thus we proved \(\mu (1-2q_1) \leq (1-\mu) (1-2q_0)\).
\end{proof}

Lemma~\ref{lem:nleq2} confirms that when there are at most two experts, the transformed inequality always holds. This simple case is important for understanding the behavior of the system in small dimensions. With these insights, we now turn to the more general case.

\begin{lemma}\label{lem:gn}
    Given \(n>2\) and \(\lambda\), let 
    \begin{align*}
    g(n) = &\sup \bigg\{ \lambda \;\bigg|\;
    \forall q_0,q_1 \text{ such that }\psi_\lambda(0)\leq q_0\leq q_1\leq 0.5, \text{ the following inequality holds:} \\
    &\frac{(q_1(1-q_1))^{\left\lfloor \frac{n-1}{2} \right\rfloor}(1-2q_1)}
    {\psi_\lambda(1) - q_1} 
    \leq 
    \frac{(q_0(1-q_0))^{\left\lfloor \frac{n-1}{2} \right\rfloor}(1-2q_0)}
    {\psi_\lambda(1) - q_0} 
    \times 
    \frac{2\lambda + \ln(1-q_0) - \ln(q_0)}
    {2\lambda - \ln(1-q_0) + \ln(q_0)}
    \bigg\}
    \end{align*}
        
    Then, when \(\lambda \leq g(n)\), if \(q_0, q_1\) satisfy \(\psi_\lambda(0) < q_0 \leq q_1 \leq \frac{1}{2}\), the following inequality holds:
    \[
    \frac{(q_1(1 - q_1))^{\left\lfloor \frac{n - 1}{2} \right\rfloor} (1 - 2q_1)}{\psi_\lambda(1) - q_1}
    \leq \frac{(q_0(1 - q_0))^{\left\lfloor \frac{n - 1}{2} \right\rfloor} (1 - 2q_0)}{\psi_\lambda(1) - q_0}
    \times \frac{2\lambda + \ln(1 - q_0) - \ln(q_0)}{2\lambda - \ln(1 - q_0) + \ln(q_0)}
    \]
        
\end{lemma}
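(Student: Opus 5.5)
The plan is to show two things about the set
\[
\Lambda_n=\{\lambda\ge 0 : \text{the inequality holds for all } q_0,q_1 \text{ with } \psi_\lambda(0)\le q_0\le q_1\le 0.5\},
\]
whose supremum is $g(n)$. First, $\Lambda_n$ is \emph{downward closed}: $\lambda'\in\Lambda_n$ and $\lambda<\lambda'$ imply $\lambda\in\Lambda_n$. Second, $\Lambda_n$ contains its supremum whenever $g(n)<\infty$. Together these give $\Lambda_n=[0,g(n)]$, or $[0,\infty)$ if $g(n)=\infty$. Hence every $\lambda\le g(n)$ lies in $\Lambda_n$. The lemma's range $\psi_\lambda(0)<q_0$ is a subset of the range in the definition, so the claimed inequality follows.

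\textbf{Closedness at the supremum.} This is the easy half, and I would do it second. Take $\lambda_k\uparrow g(n)$ with $\lambda_k\in\Lambda_n$, and fix $q_0,q_1$ with $\psi_{g(n)}(0)<q_0\le q_1\le 0.5$.

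\begin{itemize}
\item The map $\lambda\mapsto\psi_\lambda(0)=1/(1+e^{2\lambda})$ is continuous and decreasing, so $\psi_{\lambda_k}(0)\downarrow\psi_{g(n)}(0)$. Hence for large $k$ the pair $(q_0,q_1)$ is admissible for $\lambda_k$, and the inequality holds at $\lambda_k$.
\item Both sides are continuous in $\lambda$. In particular $\psi_\lambda(1)-q\ge\psi_\lambda(1)-0.5>0$, and the log-ratio factor has a denominator $2\lambda-\ln\frac{1-q_0}{q_0}$ that stays positive exactly when $q_0>\psi_\lambda(0)$.
\item Letting $k\to\infty$ in the non-strict inequality gives it at $g(n)$.
\end{itemize}

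The boundary case $q_0=\psi_{g(n)}(0)$ should not be handled by taking a limit in $q_0$. The log-ratio factor $\frac{2\lambda+\ln\frac{1-q_0}{q_0}}{2\lambda-\ln\frac{1-q_0}{q_0}}$ diverges to $+\infty$ as $q_0\downarrow\psi_\lambda(0)$, so it gives no usable limit there. This case is also irrelevant, because the lemma only needs the strict range $q_0>\psi_\lambda(0)$.

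\textbf{Downward closedness (main obstacle).} Direct monotonicity of the displayed inequality in $\lambda$ is messy, because $\lambda$ enters through $\psi_\lambda(1)$, through the log factor, and through the admissible range. So I would go back through Lemma~\ref{lem:reduce_beta}. For a given $\lambda$, the inequality over all admissible $(q_0,q_1)$ is equivalent to the $\mu$-form condition of Lemma~\ref{lem:maj_best} holding for every symmetric pair $\theta,\theta'\in\Theta^{3ciid}$. By Lemma~\ref{lem:dimension_reduction}, every report structure is induced by a 3-signal structure. So the goal becomes: every c.i.i.d.\ report structure at rationality $\lambda<\lambda'$ is also a c.i.i.d.\ report structure at rationality $\lambda'$, i.e.\ $\hat\Theta^{ciid}_\lambda\subseteq\hat\Theta^{ciid}_{\lambda'}$.

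By Lemma~\ref{lem:sig2rep}, this reduces to a convex-hull statement. It suffices that each point $(s,\psi_\lambda(s),s\psi_\lambda(s))$ lies in the convex hull of the $\lambda'$-curve. A point in that hull is a mixture of posteriors $s_j$ with mean $s$ and weights $w_j$, where the mixture must match the second coordinate, $\sum_j w_j\psi_{\lambda'}(s_j)=\psi_\lambda(s)$, and the third coordinate, $\sum_j w_j s_j\psi_{\lambda'}(s_j)=s\psi_\lambda(s)$.

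\begin{enumerate}
\item First, try a two-point mixture of the form $\{0,1\}$ combined with $s$ itself. This suffices when $\psi_\lambda(s)$ lies between $\psi_{\lambda'}(s)$ and the chord value $s\,\psi_{\lambda'}(1)+(1-s)\,\psi_{\lambda'}(0)$.
\item Otherwise, use Lemma~\ref{lem:4point_cop} and Lemma~\ref{lem:two2three} to represent the point with $\vv(0),\vv(p),\vv(1)$ for a suitable $p$. Then verify nonnegativity of the weights via the sign of $\det M$ from Lemma~\ref{lem:no4cop}.
\end{enumerate}

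\textbf{Fallback.} If the inclusion fails near the endpoints, I would work directly with the worst-case ratio. By the proof of Lemma~\ref{lem:reduce_beta}, the binding constraint is $p=\psi_\lambda^{-1}(q_0)$, where
\[
\frac{\mu}{1-\mu}=\frac{(\psi_\lambda(1)-q_0)\bigl(2\lambda-\ln\frac{1-q_0}{q_0}\bigr)}{(\psi_\lambda(1)-q_1)\bigl(2\lambda+\ln\frac{1-q_0}{q_0}\bigr)}.
\]
I would show this quantity is nondecreasing in $\lambda$ for fixed $q_0,q_1$. With $L=\ln\frac{1-q_0}{q_0}\ge 0$, the factor $(2\lambda-L)/(2\lambda+L)$ is increasing in $\lambda$. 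It remains to show that $(\psi_\lambda(1)-q_0)/(\psi_\lambda(1)-q_1)$ does not decrease as $\lambda$ grows. This holds because $\psi_\lambda(1)$ increases in $\lambda$ and $q_0\le q_1$ makes $(a-q_0)/(a-q_1)$ decreasing in $a$. That is the wrong direction, so this fallback genuinely needs the inclusion argument above.
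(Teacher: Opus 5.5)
Your continuity argument at $\lambda=g(n)$ is fine. It even covers a case the paper's argument, stated only for $\lambda_0<\lambda_1<g(n)$, leaves implicit. The gap is the downward-closedness step, which is the entire content of the paper's proof. Your main route to it cannot work, because the inclusion $\hat\Theta^{ciid}_\lambda\subseteq\hat\Theta^{ciid}_{\lambda'}$ for $\lambda<\lambda'$ is false.

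Take any point of the convex hull of the $\lambda'$-curve, written as a mixture of posteriors $S$ with mean $s$. Its third coordinate minus the product of its first two is $\mathbb{E}[(S-s)(\psi_{\lambda'}(S)-\psi_{\lambda'}(s))]\ge 0$. Since $\psi_{\lambda'}$ is strictly increasing, equality holds only when $S$ is a point mass. The point $(s,\psi_\lambda(s),s\psi_\lambda(s))$ has this quantity equal to $0$, so it would have to be the point mass at $s$. That point mass has second coordinate $\psi_{\lambda'}(s)\neq\psi_\lambda(s)$ unless $s=1/2$. Hence every point of the $\lambda$-curve other than $s=1/2$, endpoints included, lies outside the $\lambda'$-hull: a more rational expert cannot imitate pure noise. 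Your step~1 already shows this. With weights $(1-s)t,\,1-t,\,st$ on $0,s,1$, the same quantity equals $s(1-s)t(\psi_{\lambda'}(1)-\psi_{\lambda'}(0))>0$, and no choice in step~2 can fix it.

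What is true is a weaker, one-sided fact: for fixed $(q_0,q_1)$, the largest achievable $\mu/(1-\mu)$ grows with $\lambda$. Equivalently,
\[
F(\lambda)=\frac{\psi_\lambda(1)-q_1}{\psi_\lambda(1)-q_0}\cdot\frac{2\lambda+L}{2\lambda-L},\qquad L=\ln\frac{1-q_0}{q_0},
\]
is nonincreasing in $\lambda$. The paper's proof asserts exactly this, without computation. Multiply the target inequality by $\psi_\lambda(1)-q_1>0$ to get $c(q_1)\le c(q_0)F(\lambda)$, where $c(q)=(q(1-q))^{\lfloor\frac{n-1}{2}\rfloor}(1-2q)\ge 0$ does not depend on $\lambda$. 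Admissibility $q_0>\psi_\lambda(0)$ is preserved as $\lambda$ increases, so monotonicity of $F$ gives downward closedness directly.

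Your fallback stopped at exactly this point, when you saw the two factors move in opposite directions. The missing step is a quantitative comparison. With $a=\psi_\lambda(1)$ and $da/d\lambda=2a(1-a)$,
\[
\frac{d}{d\lambda}\ln F(\lambda)=\frac{2a(1-a)(q_1-q_0)}{(a-q_1)(a-q_0)}-\frac{4L}{4\lambda^2-L^2}.
\]
The first term increases in $q_1$, so bound it at $q_1=\frac12$. Write $1-2q_0=\tanh y$ with $y=L/2\in[0,\lambda)$, and use $a-\frac12=\frac{\tanh\lambda}{2}$ and $a(1-a)=\frac{1}{4\cosh^2\lambda}$. Then the first term is at most
\[
\frac{\tanh y}{\cosh^2\lambda\,\tanh\lambda\,(\tanh\lambda+\tanh y)}\le\frac{\tanh y}{\sinh^2\lambda}\le\frac{y}{\lambda^2}.
\]
The second term equals $\frac{2y}{\lambda^2-y^2}\ge\frac{2y}{\lambda^2}$. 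So $F$ is nonincreasing in $\lambda$, strictly when $q_0<\frac12$. The inclusion argument should be dropped in favour of this computation.
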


\begin{proof}[Proof of Lemma~\ref{lem:gn}]
   Note that the term \(\frac{\psi_\lambda(1) - q_1}{\psi_\lambda(1) - q_0} \cdot \frac{2\lambda + \ln(1 - q_0) - \ln(q_0)}{2\lambda - \ln(1 - q_0) + \ln(q_0)}\) is strictly monotonically decreasing in \(\lambda\). Therefore, for any \(\lambda_0 < \lambda_1 < g\left(n\right)\), if the inequality holds for \(\lambda = \lambda_1\) with parameters \((n, \mu, q_0, q_1)\), it will also hold for \(\lambda = \lambda_0\) with the same parameters.
   
\end{proof}

Lemma~\ref{lem:gn} introduces the threshold function \(g(n)\) for the general case when \(n > 2\). This function serves as a critical cutoff: as long as \(\lambda \leq g(n)\), the inequality (and hence the optimality conditions) established in the previous lemmas will hold.

Collecting the results from Lemmas~\ref{lem:maj_best}, \ref{lem:reduce_beta}, \ref{lem:nleq2}, \ref{lem:gn} and \ref{lem:final}, we conclude that when \(\lambda < g(n)\) the majority voting \(f^{maj}\) is indeed optimal. This completes the proof.
\end{proof}


\begin{figure}[ht]
    \centering
    \includegraphics[width=0.99\textwidth]{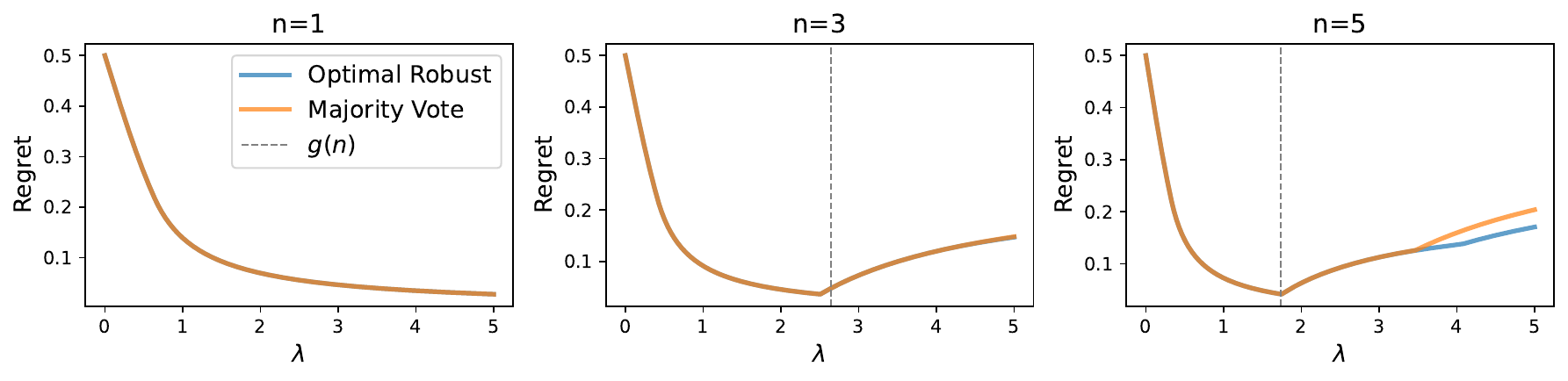}
    \caption{\textbf{Regret Comparison: Majority Voting vs. Optimal Robust Aggregator (Varying Rationality and Group Size).} 
    These plots illustrate the relationship between the rationality level $\lambda$ (on the x-axis) and the regret (on the y-axis) for three different group sizes: $n = 1, 3,$ and $5$. Regret is the maximum regret over all report structures induced by all c.i.i.d. signal structures. The solid curves represent the regret incurred by the majority voting, $f^{maj}$, and the optimal robust aggregator, $\text{opt}_{\hat{\Theta}^{ciid}}$. The dotted line indicates the threshold function $g(n)$.  When the performance of majority voting matches that of the optimal robust aggregator (i.e., their regrets are equal), majority voting is optimal. With a single expert, any aggregation rule that follows that expert's decision, including majority voting, is trivially optimal. As $\lambda$ increases, reflecting increased rationality, the regret decreases.  When $n \geq 3$ and $\lambda \leq g(n)$, majority voting remains optimal. However, the threshold $g(n)$ is not tight; majority voting may still be optimal even when $\lambda$ exceeds $g(n)$. Nevertheless, for $n=5$ and sufficiently large $\lambda$, majority voting is no longer optimal. Furthermore, as $\lambda$ increases, the regret of both majority voting and the optimal robust aggregator initially decreases and then increases. This suggests that a moderate degree of bounded rationality can, in fact, improve aggregation performance.}
    \label{fig:robust_numerical}
\end{figure}

\paragraph{Numerical Results.} To demonstrate the relationship between regret and rationality level $\lambda$, we conducted numerical experiments and calculated the regret of majority voting and optimal robust aggregator when the number of experts $n=1,3,5$ and $
\lambda$ is in $[0,5]$, as shown in \Cref{fig:robust_numerical}. Regret is the maximum regret over all report structures induced by all c.i.i.d. signal structures. The optimal robust aggregator is computed using an online learning algorithm~\citep{GHHKSY-23}.

\subsection{Bounded Rationality Advantage}\label{sec:advantage}


This subsection explores the impact of varying levels of rationality among experts on the aggregation outcomes. Specifically, we examine whether bounded rationality can sometimes lead to superior results compared to full rationality.



\paragraph{Proof Sketch of \Cref{the:main}: Bounded Rationality Advantage.}~

The proof consists of three main steps:

\begin{enumerate}
        \item 
        \textbf{Single Expert Case (\(n=1\)).}\\
    In the case of a single expert, the optimal aggregation rule is to follow the expert's decision regardless of rationality level $\lambda$. Hence, the overall utility of the aggregator is the same as the expected utility of the expert. Since a single expert attains the highest expected payoff when behaving in a fully rational manner, it follows that for any signal structure \(\theta \in \Theta^{ciid}\), the maximal achievable utility is obtained under full rationality \(\lambda\to \infty\).
    
    \item 
    \textbf{Two-Expert Case (\(n=2\)).}\\  
    When \(n=2\), we use the example mentioned in the introduction as the specific signal structure \(\theta^*\). Under full rationality (\(\lambda\to\infty\)), experts always report the same value (here, \(X_i=0\)) and the optimal aggregator yields a utility of \(0.5\). In contrast, when experts exhibit bounded rationality (with, say, \(\lambda=2.5\)), their reporting behaviors differ, and we obtain
    \[
    U\Bigl(\text{opt}_{\text{rep}(\theta^*, \psi_{\lambda=2.5})},\, \text{rep}\bigl(\theta^*, \psi_{\lambda=2.5}\bigr)\Bigr)
    \approx 0.507674 > 0.5=U\Bigl(\text{opt}_{\text{rep}(\theta^*, \psi_{\lambda\to\infty})},\, \text{rep}\bigl(\theta^*, \psi_{\lambda\to\infty}\bigr)\Bigr).
    \]
    This establishes that for \(n=2\) bounded rationality can lead to improved outcomes.
    
    \item 
    \textbf{Extension to \(n>2\) Experts.}\\  
    Finally, we extend the argument to larger number of experts under $\theta^*$. Still, under full rationality (\(\lambda\to\infty\)), experts always report the same value and the optimal aggregator yields a utility of \(0.5\). For odd \(n\), we prove that the majority voting yields a utility strictly greater than \(0.5\) when experts exhibit bounded rationality. For even \(n\), we show that the expected utility of majority voting remains the same as in the odd case (\(n-1\)). Combining these two parts, we conclude that when \(n \ge 2\), bounded rationality leads to superior aggregation performance compared to full rationality.
\end{enumerate}

\begin{proof}[Proof of \Cref{the:main}: Bounded Rationality Advantage]
\;

\textbf{Single Expert Case (\(n=1\)).}

We begin by handling the \(n=1\) case.

\begin{lemma}\label{lem:n1}
For \( n = 1 \), for any signal structure \( \theta \in \Theta^{ciid} \) and any rationality level \( \lambda \), it holds that
\[
\max_{f} U\left(f, \text{rep}\left(\theta, \psi_\lambda\right)\right) \leq U\left(f^{maj}, \text{rep}\left(\theta, \psi_{\lambda\to\infty}\right)\right).
\]
\end{lemma}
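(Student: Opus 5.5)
For $n=1$ the aggregator sees only the single report $X_1\in\{0,1\}$. I will bound the left-hand side by the utility of a Bayes-optimal decision made directly on the signal, and then show this is attained by the fully rational expert. The key observation is that any aggregator $f$ applied to a quantal report is a garbling of the signal: $X_1$ depends on $S_1$ only through the posterior $s=\Pr[\omega=1\mid S_1]$, with $\Pr[X_1=1\mid s]=\psi_\lambda(s)$.

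Concretely, fix $\theta$ and any $f:\{0,1\}\to[0,1]$, and define the signal-level rule
\[
\tilde f(s)=\psi_\lambda(s)f(1)+(1-\psi_\lambda(s))f(0),
\]
which is the probability the DM outputs 1 given posterior $s$. By the conditional independence of $\omega$ and $X_1$ given $S_1$ (as used in Lemma~\ref{lem:sig2rep}),
\[
U(f,\text{rep}(\theta,\psi_\lambda))=\sum_{s\in\mathcal{S}_\theta}\Pr[S_1=s]\,(2s-1)(2\tilde f(s)-1).
\]
Since $\tilde f(s)\in[0,1]$, each summand is at most $|2s-1|$. Hence
\[
U(f,\text{rep}(\theta,\psi_\lambda))\le \mathbb{E}_\theta|2S-1|
\]
for every $f$ and every $\lambda$, where $S$ denotes the posterior.

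Next I show the fully rational expert with $f^{maj}$ achieves this bound. For $n=1$, $f^{maj}(0)=0$ and $f^{maj}(1)=1$, so $\tilde f(s)=\psi_\infty(s)$. This equals $1$ if $s>1/2$, $0$ if $s<1/2$, and $1/2$ at $s=1/2$. In every case $(2s-1)(2\psi_\infty(s)-1)=|2s-1|$; when $s=1/2$ both sides are $0$. Therefore
\[
U(f^{maj},\text{rep}(\theta,\psi_{\lambda\to\infty}))=\mathbb{E}_\theta|2S-1|,
\]
and taking the maximum over $f$ on the left gives the claim.

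The only delicate point is interpreting $\psi_{\lambda\to\infty}$. I will read it as the pointwise-limit response function defined in the model section, rather than as a limit of utilities. Under that reading the identity above is exact. If a limit of utilities were intended instead, bounded convergence gives the same value, since $\psi_\lambda(s)\to\psi_\infty(s)$ pointwise for every $s$ and the signal set is finite.

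The only substantive step is rewriting the utility as a sum over signals. This is essentially Lemma~\ref{lem:sig2rep} applied coordinate-wise, so I expect no real obstacle.
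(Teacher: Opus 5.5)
Your proposal is correct and matches the paper's proof. The paper likewise rewrites $U$ as a sum over signals of $(2\Pr[f(X_i)=1\mid S_i=s]-1)(2s-1)\Pr[S_i=s]$ (your $\tilde f(s)$), bounds each term by $|2s-1|\Pr[S_i=s]$, and notes that $f^{maj}(x)=x$ under the limiting response $\psi_{\lambda\to\infty}$ attains this bound exactly. Your explicit treatment of the $s=1/2$ case and of how to read $\psi_{\lambda\to\infty}$ is a small addition in care, not a different route.
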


\begin{proof}[Proof of Lemma~\ref{lem:n1}]
For \(n=1\) the aggregator \(f\) depends solely on the report \(X_i\) of the single expert. For any signal structure \(\theta\), we have
\begin{align*}
U\left(f, \text{rep}\left(\theta, \psi_\lambda\right)\right)
&=\mathbb{E}_{\text{rep}(\theta, \psi_\lambda)} \, u(f(X_i), \omega)\\[1mm]
&=\sum_{s_i\in\mathcal{S}_\theta} \left(2\Pr_\theta[f(X_i)=1\mid S_i=s_i]-1\right)
\left(2\Pr_\theta[\omega=1\mid S_i=s_i]-1\right)
\Pr_\theta[S_i=s_i]\\[1mm]
&\le \sum_{s_i\in\mathcal{S}_\theta} \left|2\Pr_\theta[\omega=1\mid S_i=s_i]-1\right|
\Pr_\theta[S_i=s_i].
\end{align*}

Recall that the response function of a rational expert is
\[
\psi_{\lambda\to\infty}(p)=
\begin{cases}
1,& p>\tfrac12,\\[1mm]
0.5,& p=\tfrac12,\\[1mm]
0,& p<\tfrac12,
\end{cases}
\]
and the majority voting is given by
\[
f^{maj}(x)=x.
\]
Thus, the utility of \(f^{maj}\) under rational is
\begin{align*}
&U\left(f^{maj}, \text{rep}\left(\theta, \psi_{\lambda\to\infty}\right)\right)\\
=&\sum_{s_i\in\mathcal{S}_\theta} \left(2\psi_{\lambda\to\infty}\Bigl(\Pr_\theta[\omega=1\mid S_i=s_i]\Bigr)-1\right)
\left(2\Pr_\theta[\omega=1\mid S_i=s_i]-1\right)
\Pr_\theta[S_i=s_i]\\[1mm]
=&\sum_{s_i\in\mathcal{S}_\theta} \left|2\Pr_\theta[\omega=1\mid S_i=s_i]-1\right|
\Pr_\theta[S_i=s_i].
\end{align*}
This immediately yields the desired inequality.
\end{proof}

\textbf{Two-Expert Case (\(n=2\)).}

Next, we consider the case \(n= 2\). Define the signal structure \(\theta^*\) with the following joint distribution:
\[
\renewcommand{\arraystretch}{1.3}
\begin{array}{cc|cc}
  &  & S_i=0 & S_i=1 \\ \hline
  \multirow{2}{*}{\(\theta^*\)} & \omega=0 & \frac{3}{8} & \frac{3}{8} \\
  & \omega=1 & 0 & \frac{1}{4} \\
\end{array}
\]
so that
\[
\Pr[\omega=1\mid S_i=0]=0,\quad \Pr[\omega=1\mid S_i=1]=\frac{2}{5}.
\]
Under absolute rationality (\(\lambda\to\infty\)), the experts always report \(X_i=0\) and the optimal aggregator (which, in this example, is the majority voting) yields
\[
U\bigl(\text{opt}_{\infty}, \text{rep}(\theta^*, \psi_{\lambda\to\infty})\bigr)=0.5.
\]

When experts are bounded rational with \(\lambda^*=5\), the report distributions differ. In the case \(n=2\), letting
\[
q_0=\Pr[X_i=1\mid \omega=0]=\frac{1}{2+2e^5}+\frac{1}{2+2e},\quad
q_1=\Pr[X_i=1\mid \omega=1]=\frac{1}{1+e},
\]
and choosing the aggregator
\[
\text{opt}_{\lambda^*}(x)=
\begin{cases}
0,& x\le 1,\\[1mm]
1,& x=2,
\end{cases}
\]
one obtains
\[
U\Bigl(\text{opt}_{\lambda^*}, \text{rep}(\theta^*, \psi_{\lambda=2.5})\Bigr)
\approx 0.507674 > U\bigl(\text{opt}_{\infty}, \text{rep}(\theta^*, \psi_{\lambda\to\infty})\bigr).
\]
This shows that for \(n=2\),
\[
\max_f U\left(f, \text{rep}\left(\theta^*, \psi_{\lambda=2.5}\right)\right) > \max_f U\left(f, \text{rep}\left(\theta^*, \psi_{\lambda\to\infty}\right)\right).
\]

\textbf{Extension to \(n>2\) Experts.}

Finally, we consider the case when \( n > 2 \). When $n$ is odd, the following lemma holds

\begin{lemma}\label{lem:nodd}
When $n$ is odd, we have
\[
U\left(f^{maj}, \text{rep}(\theta^*, \psi_{\lambda=\lambda^*})\right)>0.5
\]
\end{lemma}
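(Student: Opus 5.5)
The plan is to write the utility of majority voting under $\text{rep}(\theta^*,\psi_{\lambda^*})$ in closed form, reduce the claim $U>0.5$ to a ratio of two binomial tail probabilities, and bound that ratio by a single term of a likelihood ratio that grows geometrically in $n$.

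\textbf{Step 1: closed form for $U$.} Write $n=2k+1$ with $k\ge 1$; the case $n=1$ is excluded, consistent with Lemma~\ref{lem:n1}. Under $\theta^*$ the prior is $\mu=\Pr[\omega=1]=\tfrac14$. Conditional on $\omega=1$, every expert holds posterior $\tfrac25$. Conditional on $\omega=0$, an expert holds posterior $0$ or $\tfrac25$ with probability $\tfrac12$ each. Hence the reports are c.i.i.d.\ Bernoulli with
\[
q_1=\psi_{\lambda^*}(2/5)=\frac{1}{1+e^{0.4\lambda^*}},\qquad q_0=\tfrac12\psi_{\lambda^*}(0)+\tfrac12\psi_{\lambda^*}(2/5).
\]
Let $P_j=\Pr[\mathrm{Bin}(n,q_j)\ge k+1]$, the probability that majority voting outputs $1$ in state $j$. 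Since $n$ is odd there are no ties, and
\[
U(f^{maj},\cdot)=\mu(2P_1-1)+(1-\mu)(1-2P_0)=\tfrac12+\tfrac12P_1-\tfrac32P_0 .
\]
So the lemma is equivalent to showing $P_1>3P_0$.

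\textbf{Step 2: monotone likelihood ratio.} Since $q_0<q_1<\tfrac12$, the termwise ratio
\[
\frac{\binom nx q_1^x(1-q_1)^{n-x}}{\binom nx q_0^x(1-q_0)^{n-x}}=\Bigl(\frac{q_1}{q_0}\Bigr)^x\Bigl(\frac{1-q_1}{1-q_0}\Bigr)^{n-x}
\]
is increasing in $x$. Summing over $x\ge k+1$, the tail ratio is therefore at least the value at $x=k+1$:
\[
\frac{P_1}{P_0}\ \ge\ \frac{q_1}{q_0}\,\Bigl(\frac{q_1(1-q_1)}{q_0(1-q_0)}\Bigr)^{k}.
\]
It then suffices to show $r:=q_1/q_0$ and $\rho:=\frac{q_1(1-q_1)}{q_0(1-q_0)}$ satisfy $\rho>1$ and $r\rho>3$. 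With these, the bound $r\rho^k$ is nondecreasing in $k$, so $r\rho^k\ge r\rho>3$ for every $k\ge 1$, i.e.\ for every odd $n\ge3$.

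\textbf{Step 3: numerical check at the paper's parameter.} I will fix $\lambda^*=2.5$, the value used in the $n=2$ example. Then $q_1=1/(1+e)\approx 0.269$ and $\psi(0)=1/(1+e^{5})\approx 0.0067$, so $q_0\approx 0.138$. This gives $r\approx 1.95$ and $\rho\approx 1.95\times(0.731/0.862)\approx 1.65>1$. Hence $r\rho\approx 3.2>3$.

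As a fallback if that margin is judged too thin, take $\lambda^*$ larger, e.g.\ $\lambda^*=5$. Then $q_0\approx q_1/2$ with $q_1$ small, so $r\to 2$, $\rho\to2$ and $r\rho\to 4$. More generally, for any sufficiently large finite $\lambda^*$ we get $r\rho>3$, since $\psi_{\lambda^*}(0)/\psi_{\lambda^*}(2/5)\to 0$ and $q_1\to0$. This observation gives existence of a valid $\lambda^*$ without relying on delicate decimals.

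\textbf{Main obstacle.} The only real difficulty is $k=1$ ($n=3$), where the margin $r\rho>3$ is tight and must be certified with rigorous bounds on $e^{-1}$ and $e^{-5}$ rather than rounded decimals. For every larger $k$ the bound only improves, since $\rho>1$. The result for even $n$ will then follow from the equivalence of majority voting for $2k$ and $2k-1$ experts noted earlier (the step-like threshold discussion).
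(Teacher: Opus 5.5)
Your proof is correct. It reaches the same decisive inequality as the paper, but by a different mechanism.

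\textbf{The paper's route.} The paper keeps the exact sum $U=0.5+\sum_{x\le (n-1)/2}\binom{n}{x}\bigl(0.5\,q_1^{n-x}(1-q_1)^x-1.5\,q_0^{n-x}(1-q_0)^x\bigr)$. It factors each summand as $q^2(1-q)$ times a cofactor $q^{n-x-2}(1-q)^{x-1}$. It then bounds term by term, replacing the $q_1$-cofactor with the $q_0$-cofactor. This leaves $\bigl(0.5\,q_1^2(1-q_1)-1.5\,q_0^2(1-q_0)\bigr)$ times a positive sum. Your requirement $r\rho>3$ is literally $q_1^2(1-q_1)>3q_0^2(1-q_0)$, so it is the same bracket.

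\textbf{Which monotonicity is used.} You compare across $x$ via the likelihood ratio, which follows from $q_0<q_1$ alone. The paper compares across $q$ for each fixed $x$. That silently needs each cofactor to be increasing on $[q_0,q_1]$. This is true for $n\ge 3$, $x\le (n-1)/2$ and $q<1/2$, but the paper does not argue it, and its printed exponents are garbled.

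\textbf{What your version adds.}
\begin{itemize}
\item It makes the geometric gain in $n$ explicit.
\item It actually certifies the bracket numerically: $r\rho\approx 3.23$ at $\lambda=2.5$ and $\approx 3.74$ at $\lambda=5$. The paper merely asserts its sign.
\item It correctly excludes $n=1$. There the statement as written is false, since $U=0.5+0.5q_1-1.5q_0\approx 0.43$, the single-John figure from the introduction. At $n=1$ the paper's cofactor $1/(q(1-q))$ is decreasing, so its inequality step breaks.
\item Checking both $\lambda^*=2.5$ and $\lambda^*=5$ is apt. The paper announces $\lambda^*=5$, but its displayed $q_0,q_1$ are the values for $\lambda=2.5$.
\end{itemize}
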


\begin{proof}
\begin{align*}
    U\left(f^{maj}, \text{rep}(\theta^*, \psi_{\lambda=\lambda^*})\right) &= \sum_{x=0}^n \left( 2 f^{maj}(x) - 1 \right) \left( \Pr[\omega=1, X=x] - \Pr[\omega=0, X=x] \right) \\
    &= \sum_{x=0}^{\left\lfloor \frac{n-1}{2} \right\rfloor} \left( \Pr[\omega=0] \left( \Pr[X=x \mid \omega=0] - \Pr[X=n-x \mid \omega=0] \right) \right. \\
    &\quad - \left. \Pr[\omega=1] \left( \Pr[X=x \mid \omega=1] - \Pr[X=n-x \mid \omega=1] \right) \right) \\
    &= \sum_{x=0}^{\left\lfloor \frac{n-1}{2} \right\rfloor} \binom{n}{x} \left( 0.75 \left( q_0^x (1-q_0)^{n-x} - q_0^{n-x} (1-q_0)^x \right) \right. \\
    &\quad - \left. 0.25 \left( q_1^x (1-q_1)^{n-x} - q_1^{n-x} (1-q_1)^x \right) \right) \\
    &= 0.75 \left( \sum_{x=0}^{\left\lfloor \frac{n-1}{2} \right\rfloor} \binom{n}{x} q_0^x (1-q_0)^{n-x} - \sum_{x=0}^{\left\lfloor \frac{n-1}{2} \right\rfloor} \binom{n}{x} q_0^{n-x} (1-q_0)^x \right) \\
    &\quad - 0.25 \left( \sum_{x=0}^{\left\lfloor \frac{n-1}{2} \right\rfloor} \binom{n}{x} q_1^x (1-q_1)^{n-x} - \sum_{x=0}^{\left\lfloor \frac{n-1}{2} \right\rfloor} \binom{n}{x} q_1^{n-x} (1-q_1)^x \right) \\
    &= 0.5 + 0.25 \left( \sum_{x=0}^{\left\lceil \frac{n-1}{2} \right\rceil} \binom{n}{x} q_1^{n-x} (1-q_1)^x + \sum_{x=0}^{\left\lfloor \frac{n-1}{2} \right\rfloor} \binom{n}{x} q_1^{n-x} (1-q_1)^x \right) \\
    &\quad - 0.75 \left( \sum_{x=0}^{\left\lceil \frac{n-1}{2} \right\rceil} \binom{n}{x} q_0^{n-x} (1-q_0)^x + \sum_{x=0}^{\left\lfloor \frac{n-1}{2} \right\rfloor} \binom{n}{x} q_0^{n-x} (1-q_0)^x \right).
\end{align*}

When \( n \) is odd, we have \( \left\lceil \frac{n-1}{2} \right\rceil = \left\lfloor \frac{n}{2} \right\rfloor = \frac{n-1}{2} \), and the expression becomes:

\begin{align*}
    U\left(f^{maj}, \text{rep}(\theta^*, \psi_{\lambda=\lambda^*})\right) &= 0.5 + \sum_{x=0}^{\frac{n-1}{2}} \binom{n}{x} \left( 0.5 q_1^{n-x} (1-q_1)^x - 1.5 q_0^{n-x} (1-q_0)^x \right) \\
    &= 0.5 + \sum_{x=0}^{\frac{n-1}{2}} \binom{n}{x} ( 0.5 q_1^2 (1-q_1) \left( q_1^{n-2x-1} (1-q_1)^{x-1} \right) \\
   & - 1.5 q_0^2 (1-q_0) \left( q_0^{n-2x-1} (1-q_0)^{x-1} \right) ) \\
    &\geq 0.5 + \sum_{x=0}^{\frac{n-1}{2}} \binom{n}{x} \left( 0.5 q_1^2 (1-q_1) - 1.5 q_0^2 (1-q_0) \right) q_0^{n-x-1} (1-q_0)^{x-1} \\
    &> 0.5  \tag{since \(0.5 q_1^2 (1-q_1) - 1.5 q_0^2 (1-q_0) > 0\)}.
\end{align*}
    
\end{proof}
And when \( n \) is even, the expected utility of the majority voting is equal to the case when there are $n-1$ experts:

\begin{lemma}\label{lem:equal_utility}
Let \( n \) be even, and let \( \hat\theta_{n-1} \in \hat\Theta^{ciid} \) denote the report structure with \( n-1 \) experts, and \( \hat\theta_n \in \hat\Theta^{ciid} \) denote the report structure with \( n \) experts. If the joint distribution of \( \hat\theta_{n-1} \) and \( \hat\theta_n \) for a certain expert is identical, then the utility of the majority voting remains the same. Formally, we have:
\begin{align*}
&\forall x_i,w\in\{0,1\},\Pr_{\hat\theta_{n-1}}[X_i=x_i,\omega=w]=\Pr_{\hat\theta_{n}}[X_i=x_i,\omega=w]\\
\implies &U(f^{maj},\hat\theta_{n-1})=U(f^{maj},\hat\theta_n)
\end{align*}
\end{lemma}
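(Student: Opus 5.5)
We couple the two groups and show that majority voting with $n$ experts produces the same outcome distribution, conditional on the state, as majority voting with $n-1$ experts, formalizing the intuition given after \cref{fig:gn_20}. Write $n=2k$. By hypothesis each expert's report has the same joint law with $\omega$ in both structures. Under c.i.i.d. reports, $\Pr[X_i=1\mid\omega=w]=q_w$ and $\Pr[\omega=1]=\mu$ coincide in $\hat\theta_{n-1}$ and $\hat\theta_n$. Given $\omega$, the count $X$ is $\mathrm{Bin}(n,q_\omega)$ in $\hat\theta_n$ and $\mathrm{Bin}(n-1,q_\omega)$ in $\hat\theta_{n-1}$. The utility is
\[
U(f,\hat\theta)=\sum_{w}\Pr[\omega=w]\,(2w-1)\,\mathbb{E}\bigl[2f(X)-1\mid\omega=w\bigr].
\]
So it suffices to show that, for each $w$, $\mathbb{E}[f^{maj}_{n}(X_n)\mid\omega=w]=\mathbb{E}[f^{maj}_{n-1}(X_{n-1})\mid\omega=w]$, i.e.\ that majority with $2k$ experts accepts with the same conditional probability as majority with $2k-1$.

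We realize both on one probability space. Draw $2k$ i.i.d. Bernoulli$(q_w)$ reports, pick one index $J$ uniformly at random and independently, and let $X_{n-1}=X_n-X_J$. Then $X_{n-1}\sim\mathrm{Bin}(2k-1,q_w)$, by independence of $J$ from the reports and exchangeability. We then check pointwise that
\[
f^{maj}_{2k}(X_n)=\mathbb{E}\bigl[f^{maj}_{2k-1}(X_{n-1})\bigm| X_1,\dots,X_n\bigr].
\]
If $X_n\ge k+1$, then $X_{n-1}\ge k>(2k-1)/2$, so both sides equal $1$. If $X_n\le k-1$, then $X_{n-1}\le k-1<(2k-1)/2$, so both sides equal $0$. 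If $X_n=k$, then $X_J=1$ with probability $k/2k=1/2$, giving $X_{n-1}=k-1$ and value $0$; otherwise $X_{n-1}=k$ and value $1$. The conditional expectation is therefore $1/2=f^{maj}_{2k}(k)$. Taking expectations gives equal acceptance probabilities for each $w$, and substituting into the utility formula gives $U(f^{maj},\hat\theta_{n-1})=U(f^{maj},\hat\theta_n)$.

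The only delicate point is the tie case, and it works because the removed expert is exactly balanced between $0$ and $1$ reporters. As a backup to the coupling, one can verify the identity algebraically by splitting the tail sum with $\binom{2k}{x}=\binom{2k-1}{x}+\binom{2k-1}{x-1}$:
\[
\sum_{x>k}\binom{2k}{x}q^x(1-q)^{2k-x}+\tfrac12\binom{2k}{k}q^k(1-q)^k=\sum_{x\ge k}\binom{2k-1}{x}q^x(1-q)^{2k-1-x}.
\]
To check it, separate the $x=k$ contribution of the left sum and use $\tfrac12\binom{2k}{k}=\binom{2k-1}{k}$.
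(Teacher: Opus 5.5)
Your proof is correct, but your main argument takes a different route from the paper's.

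\textbf{The paper's route.} The paper proves this lemma purely algebraically. It writes $U(f^{maj},\hat\theta_n)$ as $1-2\mu$ plus binomial tail sums in $q_1$ and $q_0$. It then expands with $\binom{n}{x}=\binom{n-1}{x}+\binom{n-1}{x-1}$ and $\binom{n}{n/2}=2\binom{n-1}{n/2-1}$, re-indexes, and factors out $q+(1-q)=1$ to reach the $(n-1)$-expert expression. This is exactly your backup identity, carried out inside the full utility rather than state by state.

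\textbf{Your route.} You couple the two groups by deleting a uniformly random expert. You then verify $f^{maj}_{2k}(X_n)=\mathbb{E}[f^{maj}_{2k-1}(X_{n-1})\mid X_1,\dots,X_n]$ profile by profile. This turns the paper's informal random-removal intuition, stated after \cref{fig:gn_20}, into the actual proof.

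\textbf{Comparison.} The coupling avoids all binomial bookkeeping. It also shows transparently why the tie-breaking value $1/2$ is exactly right: in a $k$--$k$ split the deleted expert is equally likely to be a $0$- or $1$-reporter. It needs less than conditional independence. It works whenever the $n$ reports are exchangeable given $\omega$ and any $n-1$ of them are distributed as the $(n-1)$-expert reports. The paper's computation is more mechanical, but it is self-contained and stays within the explicit binomial formulas the paper uses elsewhere, such as in \cref{lem:nodd}.

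\textbf{A small wording point in your backup.} The left sum runs over $x>k$, so the ``$x=k$ contribution'' you separate is really the tie term. Via $\tfrac12\binom{2k}{k}q^k(1-q)^k=\binom{2k-1}{k}q^k(1-q)^k$, it supplies the $x=k$ term of the $\binom{2k-1}{x}$ part after the Pascal split.
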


Since \( n-1 \) is odd, and Lemma~\ref{lem:nodd} have already proven that the utility is greater than 0.5 when the number of experts is odd, it follows that the utility is also greater than 0.5 when \(n\) is even.

\begin{proof}[Proof of Lemma~\ref{lem:equal_utility}]
Let 
\[
\begin{cases}
    \mu=\Pr_{\hat\theta_n}[\omega=1],
    q_0 = \Pr_{\hat\theta_n}[X_i = 1 \mid \omega = 0] , \\
    q_1 = \Pr_{\hat\theta_n}[X_i = 1 \mid \omega = 1] .
\end{cases}
\]
And we have
\begin{align*}
    &U(f^{maj},\hat\theta_n)\\
    =& \sum_{x=0}^n \left( 2 f^{maj}(x) - 1 \right) \left( \Pr_{\hat\theta_n}[\omega=1, X=x] - \Pr_{\hat\theta_n}[\omega=0, X=x] \right) \\
    =& \sum_{x=0}^{\left\lfloor \frac{n-1}{2} \right\rfloor} \left( \Pr_{\hat\theta_n}[\omega=0] \left( \Pr_{\hat\theta_n}[X=x \mid \omega=0] - \Pr_{\hat\theta_n}[X=n-x \mid \omega=0] \right) \right. \\
    &\quad - \left. \Pr_{\hat\theta_n}[\omega=1] \left( \Pr_{\hat\theta_n}[X=x \mid \omega=1] - \Pr_{\hat\theta_n}[X=n-x \mid \omega=1] \right) \right) \\
    =&1-2\mu+\mu\left(2\sum_{x=0}^{\frac{n}2-1}\binom{n}{x}q_1^{n-x}(1-q_1)^{x}+\binom{n}{\frac{n}2}q_1^{\frac{n}2}(1-q_1)^{\frac{n}2}\right)\\
    &-(1-\mu)\left(2\sum_{x=0}^{\frac{n}2-1}\binom{n}{x}q_0^{n-x}(1-q_0)^{x}+\binom{n}{\frac{n}2}q_0^{\frac{n}2}(1-q_0)^{\frac{n}2}\right)\\
    =&1-2\mu+\mu\left(2q_1^n+2\sum_{x=1}^{\frac{n}2-1}\left(\binom{n-1}{x}+\binom{n-1}{x-1}\right)q_1^{n-x}(1-q_1)^{x}+2\binom{n-1}{\frac{n}2-1}q_1^{\frac{n}2}(1-q_1)^{\frac{n}2}\right)\\
    &-(1-\mu)\left(2q_0^n+2\sum_{x=1}^{\frac{n}2-1}\left(\binom{n-1}{x}+\binom{n-1}{x-1}\right)q_0^{n-x}(1-q_0)^{x}+2\binom{n-1}{\frac{n}2-1}q_0^{\frac{n}2}(1-q_0)^{\frac{n}2}\right)\\
    =&1-2\mu+\mu\left(2\sum_{x=0}^{\frac{n}2-1}\binom{n-1}xq_1^{n-x}(1-q_1)^x+2\sum_{x=1}^{\frac{n}2}\binom{n-1}{x-1}q_1^{n-x}(1-q_1)^x\right)\\
    &-(1-\mu)\left(2\sum_{x=0}^{\frac{n}2-1}\binom{n-1}xq_0^{n-x}(1-q_0)^x+2\sum_{x=1}^{\frac{n}2}\binom{n-1}{x-1}q_0^{n-x}(1-q_0)^x\right)\\
    =&1-2\mu+\left(2\sum_{x=0}^{\frac{n}2-1}\binom{n-1}{x}\left(\mu q_1^{n-x-1}(1-q_1)^{x}-(1-\mu)q_0^{n-x-1}(1-q_0)^{x}\right)\right)\\
    =& \sum_{x=0}^{\left\lfloor \frac{n-2}{2} \right\rfloor} \left( \Pr_{\hat\theta_{n-1}}[\omega=0] \left( \Pr_{\hat\theta_{n-1}}[X=x \mid \omega=0] - \Pr_{\hat\theta_{n-1}}[X=n-1-x \mid \omega=0] \right) \right. \\
    &\quad - \left. \Pr_{\hat\theta_{n-1}}[\omega=1] \left( \Pr_{\hat\theta_{n-1}}[X=x \mid \omega=1] - \Pr_{\hat\theta_{n-1}}[X=n-1-x \mid \omega=1] \right) \right) \\
    =&U(f^{maj},\hat\theta_{n-1})
    \end{align*}
\end{proof}

Thus, we have shown that when \( n > 2 \),
\[
U(f^{maj}, \text{rep}(\theta^*, \psi_{\lambda=\lambda^*})) > \max_f U(f, \text{rep}(\theta^*, \psi_{\lambda \to \infty})).
\]
\end{proof}

\begin{figure}[ht]
    \centering
    \includegraphics[width=.9\textwidth]{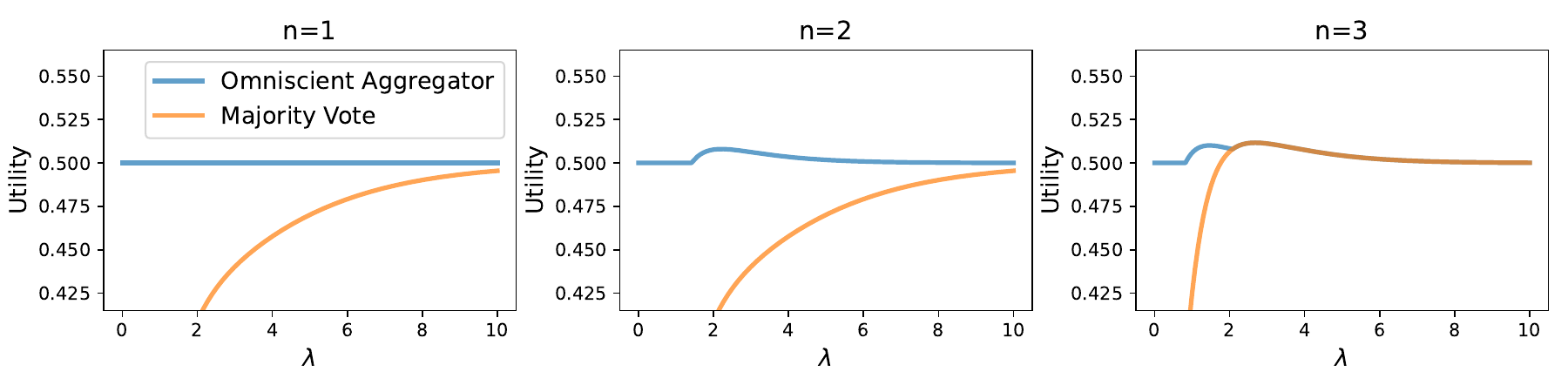}
    \caption{\textbf{Utility Comparison: Majority Voting vs. Omniscient Aggregator (Varying Rationality and Group Size)} The utility comparison is performed under the specific signal structure \( \theta^* \), illustrated in the introduction. Under full rationality (\( \lambda \to \infty \)), the maximum achievable utility is 0.5. For \( n = 1 \), the omniscient aggregator maintains a constant utility of 0.5, while the majority voting converges to 0.5 as \( \lambda \to \infty \). For $n=2$, the omniscient aggregator achieves a utility greater than 0.5 for certain values of $\lambda$. Majority voting yields the same utility for $n=1$ and $n=2$. However, this omniscient aggregator requires knowledge of the signal structure, whereas the majority voting does not. For $n=3$, the benefits of bounded rationality become more apparent. The majority voting, which requires no knowledge of the signal structure, can exceed a utility of 0.5, reaching its peak performance at intermediate values of $\lambda$.}
    \label{fig:bounded_advantage}
\end{figure}

\paragraph{Numerical Results.}
To further substantiate our theoretical findings, we performed numerical simulations to assess the performance of two aggregation methods on report structure $\text{rep}(\theta^*, \psi_\lambda)$: the majority voting (\( f^{maj} \)) and the omniscient aggregator (\( \text{opt}_{\text{rep}(\theta^*, \psi_\lambda)} \)). These simulations explore how their utility varies with the rationality level \( \lambda \) and the number of experts \( n \). The results are summarized in \cref{fig:bounded_advantage}. These numerical results support our theoretical conclusions. In particular, under the specific signal structure \( \theta^* \) and when \( n \geq 2 \), they demonstrate that bounded rationality can enhance the decision-making performance, enabling the aggregators to surpass the maximum utility of 0.5 attainable under full rationality.

\section{Detailed Study Setup for Empirical Results}\label{sec:detail}

\subsection{Licenses and Terms of Use}
\label{sec:licenses}

\paragraph{MathQA Dataset}
The MathQA dataset \citep{amini2019mathqa} is publicly available under the Apache License 2.0. It can be accessed at \url{https://huggingface.co/datasets/allenai/math_qa}.

\paragraph{GPT-4o-mini Model}
The GPT-4o-mini-2024-07-18 model used in this study is provided by OpenAI. Its usage is governed by OpenAI's Terms of Use and Service Terms, available at \url{https://openai.com/policies/terms-of-use} and \url{https://openai.com/policies/service-terms/}, respectively.

\subsection{Bayesian Decision-Making Study}
\paragraph{Study Design}
The Bayesian Decision-Making Study is framed as follows: There are two boxes, the left box ($\omega = 1$) and the right box ($\omega = 0$), each containing red and blue balls in specific proportions. In the right box ($\omega = 0$), the proportion of red balls is $\Pr[S_i = r \mid \omega = 0]$, and the proportion of blue balls is $\Pr[S_i = b \mid \omega = 0]$. Similarly, in the left box ($\omega = 1$), the corresponding proportions are $\Pr[S_i = r \mid \omega = 1]$ and $\Pr[S_i = b \mid \omega = 1]$. A box is selected randomly with a prior probability $\mu = \Pr[\omega = 1]$ of choosing the left box. A ball is then drawn from the chosen box, and its observed color, $S_i \in \{r, b\}$, represents the signal, where $S_i = r$ denotes a red ball and $S_i = b$ a blue ball. The task is to infer which box was selected based on the observed signal $S_i$, the provided prior $\Pr[\omega=1]$ and the conditional probabilities $\Pr[S_i\mid\omega]$. 

Because the number of scenarios we can query is finite, the probabilities are discretized as follows: the prior probability $\Pr[\omega = 1]$, the probability of observing a red ball given the left box $\Pr[S_i = r \mid \omega = 1]$, and the probability of observing a red ball given the right box $\Pr[S_i = r \mid \omega = 0]$ are each expressed as fractions with $N = 5$ as the denominator. Specifically, these probabilities take values from the set $\{0/N, 1/N, 2/N, \dots, N/N\}$. This discretization creates a set of decision-making tasks that span a range of prior and conditional probabilities. After removing the illegal and impossible cases, we have a total of 400 possible discretized scenarios.

The study utilizes the \textit{gpt-4o-mini} model as the expert decision-maker. The model is queried under three distinct temperature settings $\{0, 0.5, 1\}$, each representing different levels of rationality. A \textbf{Chain of Thought} (CoT) prompting strategy is employed, requiring the model to explain its reasoning before delivering a binary decision (``L'' for the left box, $X_i = 1$, or ``R'' for the right box, $X_i = 0$). To capture variability in the model’s stochastic behavior, each decision-making scenario is repeated 20 times for each temperature setting. The \emph{decision proportion} for given scenario and temperature is defined as the empirical probability of choosing the left box (``L'') across all repetitions.

\begin{table}[h]
\centering
\begin{tabular}{c|c|c|c|c}
\hline
Temperature & Coefficients ($\lambda$) & Standard Errors ($SE$) & z-values ($z$) & p-values ($p$)\\
\hline\hline
0.0 & $\infty$ & - & - & - \\
0.5 & 26.49 & 10.23 & 2.59 & $<$0.001 \\ 
1.0 & 17.86 & 5.44 & 3.28 & $<$0.001 \\ 
\hline
\end{tabular}
\caption{\textbf{Logistic regression results for different temperature settings.} The regression coefficient \(\lambda\), corresponding to the rationality level in the quantal response: higher $\lambda$ values indicate more deterministic behavior, while lower values suggest increased randomness. For the case with temperature 0.0, the observed perfect determinism in choices leads to complete separation in the logistic regression, resulting in theoretically infinite coefficients. The standard error ($SE$) reflects the precision of $\lambda$, with smaller values indicating greater accuracy. z-values and p-values confirm the statistical significance of the findings.}
\label{tab:logistic_regression_results}
\end{table}

\subsection{Multiple-Choice Question Answering Study}

\paragraph{Dataset Description}

The MathQA dataset \citep{amini2019mathqa} is a collection of complex mathematical reasoning problems presented as multiple-choice questions. Each problem consists of a problem statement and a set of five options (\(A, B, C, D, E\)), with one correct answer. The dataset is designed to test logical reasoning, arithmetic, algebra, and other mathematical concepts, making it a challenging benchmark for evaluating LLMs. For this study, we use a subset of 5000 MathQA problems to ensure computational feasibility, focusing on tasks that require probabilistic reasoning and decision-making.

\paragraph{Study Design}

To evaluate LLM performance, we deploy the following experimental setup. The LLM is queried under three distinct temperature settings (\(t = 0.0, 0.5, 1.0\)) to simulate varying levels of rationality. For each question, the model generates 20 responses per temperature setting. Because the questions offer more than two options, we use plurality voting (\(f^{plu}\)), a generalization of majority voting.  This involves selecting the option with the highest frequency among the sampled responses. In the event of a tie, the aggregator randomly chooses among the tied options.

\paragraph{Bootstrapping}
To assess the robustness of aggregation, we perform bootstrapping by randomly sampling \(n\) responses (without replacement) from the 20 generated responses per question. This bootstrapping procedure is repeated 1000 times for each combination of temperature setting \(t\) and expert group size \(n = 1, 3, 5\). Accuracy is defined as the proportion of aggregated answers that match the ground truth, with error bars representing the standard error of the mean across bootstrap iterations.

\paragraph{An Example}\label{sec:example}
We present an example in our study of how increasing temperature (i.e., adding randomness) counterintuitively increases expected accuracy. In our study, we ask gpt-4o-mini the following MathQA question.

``An alloy of copper and zinc contains copper and zinc in the ratio 3:5. Another alloy contains copper and zinc in the ratio 6:2. In what ratio should the two alloys be mixed to achieve equal proportions of copper and zinc?'' The options are ``1 : 2'', ``2 : 2'', ``2 : 5'', ``2 : 6'', ``2 : 7''.

At \( t = 0.0 \) (deterministic mode), after querying  LLM, 19/20 responses selecting the incorrect option B (``2 : 2''). Consequently, plurality voting always selects the incorrect option B. However, at \( t = 1.0 \), 4/20 responses correctly identify A (``1 : 2'') Despite most responses being incorrect (16/20 chose B), the increased variance allows the aggregation to sometimes identify the correct answer. 

One possible explanation is ambiguity in the question regarding the order of the mixing ratio (first alloy to second, or second to first). In deterministic mode ($t=0.0$), the LLM frequently calculates the correct 2:1 ratio (first to second), but then fails to select the ground truth answer of 1:2. Some responses even explicitly justify this error, stating, ``However, since the options provided do not include 2:1, we will look for the closest match. The closest option to 2:1 is **B) 2:2**, which represents equal proportions.''. 

However, the increased randomness in stochastic mode ($t=1.0$) allows the LLM to exhibit greater ``intelligence.'' Some responses cleverly recognize the intended meaning, acknowledging the discrepancy and stating, ``None of the options list a ratio of 2:1 directly. There is no exact match, but the closest option that can relate to 2 is option A: 1:2 if we consider it in context of inverse.'' This suggests that the added randomness enables the LLM to explore a wider range of interpretations and, in some cases, deduce the correct answer despite the question's ambiguity. See the \Cref{sec:prompt} for more details.

\section{Prompts for LLMs}

This section provides detailed descriptions of the prompts employed in our studies with Large Language Models (LLMs). These prompts are used to assess various aspects of hallucination detection, including prior answer generation, semantic clustering, and posterior answer generation. Our intention is to provide transparency and reproducibility for our method and those we compare against, particularly in terms of how LLMs interact with different prompt structures.

These prompts were developed specifically for use in studies described in the main body of the paper. While they have been designed to provide effective results, we note that they may still be open to refinement for further improvements in performance and reliability.

\subsection{Bayesian Decision-Making Study}

\begin{tcolorbox}[colback=gray!10, title=User Prompt]
\begin{lstlisting}[breaklines=true, basicstyle=\footnotesize\ttfamily]
### Scenario:
You are given two boxes of balls. Each box contains 100 balls, which can be either red or blue. A box is selected at random, and one ball is drawn from the selected box. However, the identity of the selected box is not revealed.

You are provided with the following information:
- Left Box contains {left_red} red balls and {left_blue} blue balls.
- Right Box contains {right_red} red balls and {right_blue} blue balls.
- The probability of selecting the Left Box is {left_probability}%.
- The probability of selecting the Right Box is {right_probability}%.

### Question:
If the ball drawn is {color}, which box has been selected?

Please first briefly show the calculations in the <reason></reason> section, and then provide your final answer in the <answer></answer> section. 
Your answer in <answer></answer> should **strictly** be a single uppercase letter: "L" for the Left Box or "R" for the Right Box. No additional text, no explanations.
Your response must contain <answer></answer> section with a definitive answer of either "L" or "R" in it.

Example response:
<reason># Briefly show your calculations here, using the provided information.</reason>
<answer># A single uppercase letter: L or R</answer>
\end{lstlisting}
\end{tcolorbox}

\begin{tcolorbox}[colback=gray!10, title=Example Input]
\begin{lstlisting}[breaklines=true, basicstyle=\footnotesize\ttfamily]
### Scenario:
You are given two boxes of balls. Each box contains 100 balls, which can be either red or blue. A box is selected at random, and one ball is drawn from the selected box. However, the identity of the selected box is not revealed.

You are provided with the following information:
- Left Box contains 100 red balls and 0 blue balls.
- Right Box contains 50 red balls and 50 blue balls.
- The probability of selecting the Left Box is 20.0%.
- The probability of selecting the Right Box is 80.0%.

### Question:
If the ball drawn is red, which box has been selected?

Please first briefly show the calculations in the <reason></reason> section, and then provide your final answer in the <answer></answer> section. 
Your answer in <answer></answer> should **strictly** be a single uppercase letter: "L" for the Left Box or "R" for the Right Box. No additional text, no explanations.
Your response must contain <answer></answer> section with a definitive answer of either "L" or "R" in it.

Example response:
<reason># Briefly show your calculations here, using the provided information.</reason>
<answer># A single uppercase letter: L or R</answer>
\end{lstlisting}
\end{tcolorbox}

\begin{tcolorbox}[colback=gray!10, title=Example Output]
\begin{lstlisting}[breaklines=true, basicstyle=\footnotesize\ttfamily]
<reason>To find the probability of selecting each box given that a red ball is drawn, we can use Bayes' theorem. 

Let:
- \( P(L) = 0.2 \) (probability of selecting the Left Box)
- \( P(R) = 0.8 \) (probability of selecting the Right Box)
- \( P(\text{Red} | L) = 1 \) (probability of drawing a red ball from the Left Box)
- \( P(\text{Red} | R) = 0.5 \) (probability of drawing a red ball from the Right Box)

Using the law of total probability, we calculate \( P(\text{Red}) \):
\[
P(\text{Red}) = P(\text{Red} | L) \cdot P(L) + P(\text{Red} | R) \cdot P(R) = 1 \cdot 0.2 + 0.5 \cdot 0.8 = 0.2 + 0.4 = 0.6
\]

Now, we can find \( P(L | \text{Red}) \) using Bayes' theorem:
\[
P(L | \text{Red}) = \frac{P(\text{Red} | L) \cdot P(L)}{P(\text{Red})} = \frac{1 \cdot 0.2}{0.6} = \frac{0.2}{0.6} = \frac{1}{3}
\]

Similarly, we find \( P(R | \text{Red}) \):
\[
P(R | \text{Red}) = \frac{P(\text{Red} | R) \cdot P(R)}{P(\text{Red})} = \frac{0.5 \cdot 0.8}{0.6} = \frac{0.4}{0.6} = \frac{2}{3}
\]

Since \( P(R | \text{Red}) > P(L | \text{Red}) \), it is more likely that the Right Box was selected if a red ball is drawn.</reason>
<answer>R</answer>
\end{lstlisting}
\end{tcolorbox}

\subsection{Multiple-Choice Question Answering Study}\label{sec:prompt}

We first provide the template of prompt, then we provide the example input and two example outputs. The input and outputs correspond to the example in \Cref{sec:example}.

\begin{tcolorbox}[colback=gray!10, title=User Prompt of Math]
\begin{lstlisting}[breaklines=true, basicstyle=\footnotesize\ttfamily]
### Scenario:
Please review the following math question carefully:

### Question:
{question_str}

### Options:
{options_str}

### Instructions:
1. Provide your step-by-step calculations using your math knowledge, enclosed within <reason></reason> tags.
2. Select the best answer from the options provided, and place the uppercase letter of your final answer within <answer></answer> tags.

Please ensure your reasoning clearly supports the selected answer and that your final answer is one of the provided options. If none of the options are precisely correct, choose the closest one.

Example response:
<reason># Your reasoning and calculations go here</reason>
<answer># A single uppercase letter: the option you choose</answer>
\end{lstlisting}
\end{tcolorbox}

\begin{tcolorbox}[colback=gray!10, title=Example Input]
\begin{lstlisting}[breaklines=true, basicstyle=\footnotesize\ttfamily]
### Scenario:
Please review the following math question carefully:

### Question:
an alloy of copper and zinc contains copper and zinc in the ratio 3 : 5 . another alloy of copper and zinc contains copper and zinc in the ratio 6 : 2 . in what ratio should the two alloys be mixed so that the resultant alloy contains equal proportions of copper and zinc ?

### Options:
A) 1 : 2
B) 2 : 2
C) 2 : 5
D) 2 : 6
E) 2 : 7

### Instructions:
1. Provide your step-by-step calculations using your math knowledge, enclosed within <reason></reason> tags.
2. Select the best answer from the options provided, and place the uppercase letter of your final answer within <answer></answer> tags.

Please ensure your reasoning clearly supports the selected answer and that your final answer is one of the provided options. If none of the options are precisely correct, choose the closest one.

Example response:
<reason># Your reasoning and calculations go here</reason>
<answer># A single uppercase letter: the option you choose</answer>
\end{lstlisting}
\end{tcolorbox}

\begin{tcolorbox}[colback=gray!10, title=Example Output When t Equals 0]
\begin{lstlisting}[breaklines=true, basicstyle=\footnotesize\ttfamily]
<reason>
Let's denote the first alloy (with a copper to zinc ratio of 3:5) as Alloy A and the second alloy (with a copper to zinc ratio of 6:2) as Alloy B.

1. **Determine the composition of Alloy A:**
   - The total parts in Alloy A = 3 (copper) + 5 (zinc) = 8 parts.
   - Copper in Alloy A = 3/8 of the total weight.
   - Zinc in Alloy A = 5/8 of the total weight.

2. **Determine the composition of Alloy B:**
   - The total parts in Alloy B = 6 (copper) + 2 (zinc) = 8 parts.
   - Copper in Alloy B = 6/8 = 3/4 of the total weight.
   - Zinc in Alloy B = 2/8 = 1/4 of the total weight.

3. **Let the weights of Alloy A and Alloy B be x and y respectively.**
   - The total copper from both alloys = (3/8)x + (3/4)y.
   - The total zinc from both alloys = (5/8)x + (1/4)y.

4. **Set the equation for equal proportions of copper and zinc:**
   \[
   (3/8)x + (3/4)y = (5/8)x + (1/4)y
   \]

5. **Clear the fractions by multiplying through by 8:**
   \[
   3x + 6y = 5x + 2y
   \]

6. **Rearranging gives:**
   \[
   3x - 5x + 6y - 2y = 0 \implies -2x + 4y = 0 \implies 2x = 4y \implies x/y = 4/2 = 2/1
   \]

7. **Thus, the ratio of Alloy A to Alloy B is 2:1.**
   - This means we need to mix Alloy A and Alloy B in the ratio of 2:1.

Since the question asks for the ratio of the two alloys, we can express this as 2:2 (which simplifies to 1:1) or 2:1. However, since the options provided do not include 2:1, we will look for the closest match.

The closest option to 2:1 is **B) 2:2**, which represents equal proportions.

Therefore, the final answer is:
</reason>
<answer>B</answer>
\end{lstlisting}
\end{tcolorbox}

\begin{tcolorbox}[colback=gray!10, title=Example Output When t Equals 1]
\begin{lstlisting}[breaklines=true, basicstyle=\footnotesize\ttfamily]
<reason>To solve the problem, we need to first determine the composition of each alloy in terms of copper and zinc.

1. **First Alloy (Ratio 3:5)**:
   - The total parts in the first alloy = 3 + 5 = 8 parts.
   - Copper in the first alloy = (3/8) * 100% = 37.5%
   - Zinc in the first alloy = (5/8) * 100% = 62.5%

2. **Second Alloy (Ratio 6:2)**:
   - The total parts in the second alloy = 6 + 2 = 8 parts.
   - Copper in the second alloy = (6/8) * 100% = 75%
   - Zinc in the second alloy = (2/8) * 100% = 25%

Let's denote the first alloy as A (copper 37.5%, zinc 62.5%) and the second alloy as B (copper 75%, zinc 25%). We want to mix these two alloys in the ratio x:y such that the resultant alloy has equal proportions of copper and zinc.

We will express the copper and zinc contents in the resultant mixture:

- Copper from A = \(x \times 0.375\)
- Zinc from A = \(x \times 0.625\)

- Copper from B = \(y \times 0.75\)
- Zinc from B = \(y \times 0.25\)

The total copper content in the mixture (A + B) is:
\[
\text{Total Copper} = 0.375x + 0.75y
\]

The total zinc content in the mixture (A + B) is:
\[
\text{Total Zinc} = 0.625x + 0.25y
\]

We want these two quantities to be equal:
\[
0.375x + 0.75y = 0.625x + 0.25y
\]

Rearranging this equation, we get:
\[
0.375x - 0.625x + 0.75y - 0.25y = 0
\]
\[
-0.25x + 0.5y = 0
\]
\[
0.5y = 0.25x
\]
\end{lstlisting}
\end{tcolorbox}
\begin{tcolorbox}[colback=gray!10, title=Example Output When t Equals 1]
\begin{lstlisting}[breaklines=true, basicstyle=\footnotesize\ttfamily]
\[
y = 0.5x
\]

So, the ratio of \(x : y\) can be written as \(x : 0.5x\) or simplifying gives:
\[
x : y = 2 : 1
\]

This means we need to mix the alloys in a ratio of 2:1 (2 parts of A for every 1 part of B).

Now, we should see which option matches this finding:
- None of the options list a ratio of 2:1 directly. There is no exact match, but the closest option that can relate to 2 is option A: 1:2 if we consider it in context of inverse.

Since the required ratio of the alloys results in copper and zinc being equal, let's interpret the problem in inverse terms as it signifies that for choosing A and B:
To maintain a balance of ratios, the best option that reflects appropriate mixing is **option A** which provides a guideline for consideration, perhaps in taking more from the second one (due to its higher concentration of copper).

So, based on this reasoning:
</reason>
<answer>A</answer>
\end{lstlisting}

\end{tcolorbox}

\end{document}